\DeclareMathAlphabet{\bm}{OML}{cmm}{b}{it}
\newtheorem{theorem}{Theorem}
\newtheorem{lemma}[theorem]{Lemma}
\newtheorem{definition}[theorem]{Definition}
\newtheorem{corollary}[theorem]{Corollary}
\newtheorem{remark}[theorem]{Remark}
\newtheorem{proposition}[theorem]{Proposition}
\newcommand{\qed}{\hfill \IEEEQED}
\newcommand{\bol}[1]{\mathbf{#1}}
\newcommand{\rom}[1]{\mathrm{#1}}
\newcommand{\argmax}{\mathop{\rm argmax}\limits}
\newcommand{\argmin}{\mathop{\rm argmin}\limits}
\begin{document}

\title{Universal Wyner-Ziv Coding for Distortion Constrained General Side-Information}

\author{Shun~Watanabe~\IEEEmembership{Member,~IEEE}       
\thanks{The first author is with the Department
of Information Science and Intelligent Systems, 
University of Tokushima,
2-1, Minami-josanjima, Tokushima,
770-8506, Japan, 
e-mail:shun-wata@is.tokushima-u.ac.jp.}
and Shigeaki~Kuzuoka~\IEEEmembership{Member,~IEEE}       
\thanks{The second author is with the Department of Computer and Communication Sciences,
Wakayama University, Wakayama, 640-8510, Japan, e-mail:kuzuoka@ieee.org.}

\thanks{Manuscript received ; revised }}

\markboth{Journal of \LaTeX\ Class Files,~Vol.~6, No.~1, January~2007}%
{Shell \MakeLowercase{\textit{et al.}}: Bare Demo of IEEEtran.cls for Journals}

\maketitle
\begin{abstract}
We investigate the Wyner-Ziv coding in which the statistics of
the principal source is known but the statistics of the channel generating
the side-information is unknown except that it is in a certain class.
The class consists of channels such that the distortion between the
principal source and the side-information is smaller than a threshold,
but channels may be neither stationary nor ergodic.
In this situation, we define a new rate-distortion function as the minimum rate such 
that there exists a Wyner-Ziv code that is universal for every channel in the class.
Then, we show an upper bound
and a lower bound on the rate-distortion function, and derive a matching condition
such that the upper and lower bounds coincide. 
The relation between the new rate-distortion function and the rate-distortion
function of the Heegard-Berger problem is also discussed.
\end{abstract}

\begin{IEEEkeywords}
Average Distortion, Heegard-Berger Problem, Maximum Distortion, Universal Coding, Wyner-Ziv Problem
\end{IEEEkeywords}

\IEEEpeerreviewmaketitle

\section{Introduction}

In the seminal paper \cite{wyner:76}, Wyner and Ziv characterized 
the rate-distortion function of the lossy source coding
with side-information at the decoder (See Fig. \ref{Fig1}). 
In this paper, we consider a universal coding of this problem
where  the statistics of the principal source is known but the channel from
the principal source to the side-information is unknown except that it is 
in a certain class.

To motivate the problem setting investigated in this paper, 
let us consider the following practical situation first.
Suppose that the decoder already has a lossy compressed version of 
the principal source, and want to get a refined one.
The encoder does not know how the previously transmitted
lossy version is encoded, but knows that the quality of
the lossy version is guaranteed to be above a certain level.
What is the minimum additional rate that must be transmitted by the encoder
so that the quality of the refined version is above a required level?

The above mentioned situation can be modeled as follows.
The principal source $X^n$ is a known i.i.d. source, and 
the side-information $Y^n$ is generated from $X^n$ through 
a channel $W^n$. The statistical property of the channel is unknown, but
the distortion caused by the channel is smaller than a certain level $E$
for a prescribed distortion measure. We assume that the distortion
measure is additive, but the channel may be neither stationary nor ergodic.
We consider the maximum distortion constraint and the average distortion
constraint for the channel. Since we allow non-ergodic channel,
the class of channels constrained by the maximum distortion and that constrained by
the average distortion are different. 
In this problem formulation, we are interested in the minimum rate
$R_m(D|E)$ and $R_a(D|E)$ such that the reproduction with distortion level $D$ is possible
at the decoder for any channel in the classes of channels satisfying the distortion
level $E$ with the maximum distortion constraint and
the average distortion constrain respectively.
In other word, we are interested in the minimum rate such that the universal coding is possible for
each class. 

For the maximum distortion constrained class, we show an upper bound and a lower bound on
$R_m(D|E)$. We also derive a matching condition such that the upper 
and the lower bounds coincide. Especially, for the binary Hamming
example, we show that the matching condition is satisfied, and thus 
$R_m(D|E)$ is completely characterized.

For the average distortion constrained class, we show an upper bound and a lower bound  
on $R_a(D|E)$. 
For the case with $D=0$, i.e., the loss less reproduction case,
we show that the upper and lower bounds coincide and thus $R_a(0|E)$ is completely characterized.
Surprisingly, $R_a(0|E) = H(X)$, i.e., the side-information is completely useless, 
for any $E > 0$. 

Some remarks on related literatures are in order.

For lossless source coding with side-information, i.e.,
the Slepian-Wolf network \cite{slepian:73},  the existence of universal code was first shown by
Csisz\'ar and K\"orner \cite{csiszar:81} (existence of
linear universal code was also shown by Csisz\'ar \cite{csiszar:82}).
After that, the universal codings for the Slepian-Wolf network
or other related lossless multi-terminal networks were studied by several
researchers \cite{oohama:94, oohama:96, kimura:09}.

For lossy source coding with side-information, i.e.,
the Wyner-Ziv network, the universal coding problem was 
investigated by Merhav and Ziv \cite{merhav:06}, Jalali {\em et.~al.}~\cite{jalali:10},
and  Reani and Merhav \cite{reani:11}. It should be noted that the universal
codes proposed in these literatures are universal for the statistics of the principal
source but not for the channel generating the side-information, i.e., the statistics
of the channel is known at the encoder. Under the same condition, i.e., known channel,
it is also known that the universal code can be constructed for the network with
several decoders \cite{kuzuoka:10b}.


The universal Wyner-Ziv coding is also related to the Heeger-Berger
problem \cite{heegard:85}, in which there are several decoders that
have their own side-information. The Heeger-Berger problem has not
been solved in general, and it has only been solved under the condition   
that there is a degraded partial order between the channels generating the side-information
\cite{steinberg:04, tian:07, tian:08} except some special cases \cite{timo:11b, watanabe:11}. 
It should be noted that there is no degraded partial order among the channel class 
considered in this paper. Thus, the authors believe that the result in this paper also
shed some light on the unsolved Heeger-Berger problem.

Our problem setting can be also viewed as a kind of 
the successive refinement coding \cite{equitz:91, rimoldi:94}.
The successive refinement coding consists of two layers
of the encodings. If the method used by the first layer encoder
is not known to the second layer encoder, this is exactly
the situation of our problem setting.

Although the universal coding for distortion constrained 
class of channels is unfamiliar and new in the source coding
scenario, this kind of channel is quite natural when the channel is cased by an
adversary such as in the data hiding scenario.
Indeed, this kind of channel class is commonly used in 
the information theoretical analysis of the
data hiding \cite{moulin:03, cohen:02, baruch:03}. 

There are some technical differences between 
the data hiding problem and our problem. 
First, in the data hiding problem, the channel output is only
used for the decoding of the encoded message. On the other hand,
in our problem, the side-information is not only used for the decoding
of the encoded source, but also for the estimation at the decoder.
This makes the problem difficult, and causes a gap between the
upper bound and the lower bound derived in this paper. 
Second, in the data hiding problem for the average distortion
constrained class of channels, it was shown that the achievable
transmission rate is $0$, i.e., the channel is completely useless \cite{cohen:02}. 
On the other hand, in our problem for
the average distortion constrained class of channels, the 
side-information is useless for bin coding, but it can be used
for the estimation at the decoder. Thus, $R_a(D|E)$ can be
strictly smaller than the rate-distortion function $R(D)$ without
any side-information for $D > 0$, though $R_a(0|E) = H(X)$.

The rest of this paper is organized as follows.
In Section \ref{section:preliminaries}, we introduce notations and the formal definition of the
problem. In Section \ref{section:main}, we state our main theorems, and
show a representative example, i.e., the binary Hamming example.
In Sections \ref{section:proof-of-theorem:main} and \ref{section:proof-of-theorem:main-2}, we present proofs of
the main theorems.

\begin{figure}[t]
\centering
\includegraphics[width=0.7\linewidth]{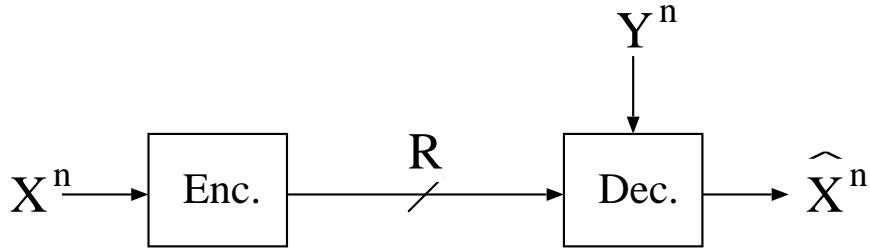}
\caption{The Wyner-Ziv coding system.}
\label{Fig1}
\end{figure}


\section{Preliminaries}
\label{section:preliminaries}

\subsection{Notations}

Henceforth, we adopt the following notation conventions.
Random variables will be denoted by capital letters such as $X$,
while their realizations will be denoted by respective lower case letters such as $x$.
A random vector of length $n$ is denoted by $X^n = (X_1,\ldots,X_n)$, 
while its realization is denoted by $x^n = (x_1,\ldots,x_n)$.
The alphabet of a random variable is denoted by a calligraphic letter such as ${\cal X}$,
and its $n$-fold Cartesian product is denoted by ${\cal X}^n$.
The probability distribution of random variable $X$ is denoted by $P_X$,
and its $n$-fold i.i.d.~extension is denoted by $P_X^n$. 
For a given channel $W$, its $n$-fold i.i.d.~extension is denoted by $W^{\times n}$,
while $W^n$ indicates a channel that is not necessarily i.i.d.. 
The set of all probability distribution
on ${\cal X}$ is denoted by ${\cal P}({\cal X})$. The set of all channel from 
${\cal X}$ to ${\cal Y}$ is denoted by ${\cal P}({\cal Y}|{\cal X})$. 
The indicator function is denoted by $\bol{1}[\cdot]$.
The entropy and the mutual information is denoted in
a standard notation such as $H(X)$ or $I(X;Y)$.
For a input distribution $P$ of a channel $W$, we sometimes use
the notation $I(P,W)$ to designate the mutual information $I(X;Y)$,
where the joint distribution of $(X,Y)$ is $P(x)W(y|x)$.
The variational distance between two distributions $P$ and $Q$ is 
denoted by $\| P - Q \|$.
In the proofs of our main theorems, we extensively use the 
type and typicality, which are summarized in Appendix \ref{Appendix:type}.

\subsection{Problem Formulation}
\label{section:problem}

Let $\bm{X} = \{ X^n \}_{n=1}^\infty$ be an i.i.d. source.
Let 
\begin{eqnarray*}
e_n(x^n, y^n) := \frac{1}{n} \sum_{t=1}^n e(x_t, y_t)
\end{eqnarray*}
be an additive distortion measure for side information.
As a natural assumption, we assume that there exists $y$
such that $e(x,y) = 0$ for each $x$.
We also assume that the distortion is bounded, i.e., 
$e(x,y) \le e_{\max} < \infty$ for every $(x,y)$.
For a given distortion $E \ge 0$, 
we consider the following maximum distortion constraint on the side-information
\begin{eqnarray}
{\cal W}_m(E) &:=& \left\{ \bm{W} = \{ W^n \}_{n=1}^\infty : \forall \delta > 0~\exists n_0(\delta)~\mbox{s.t.}~  \right.  \nonumber \\
	&& \left. \Pr\{ e_n(X^n,Y^n) > E \} \le \delta ~\forall n \ge n_0(\delta) \right\}, 
\end{eqnarray}
where $Y^n$ is the output of channel $W^n$ with input $X^n$.
It should be noted that $n_0(\delta)$ depends on $\delta$ but not on $\bm{W}$.
We also consider the average distortion constraint
\begin{eqnarray}
\lefteqn {{\cal W}_a(E) } \nonumber \\
&:=& \left\{ \bm{W} = \{ W^n \}_{n=1}^\infty : e_n(P_{X^n},W^n ) \le E ~\forall n \ge 1 \right\} 
\end{eqnarray}
 where
\begin{eqnarray*}
e_n(P_{X^n},W^n) &:=& \mathbb{E}[e_n(X^n,Y^n)] \\
 &=& \sum_{x^n,y^n} P_X^n(x^n) W^n(y^n|x^n) e_n(x^n,y^n).
\end{eqnarray*}
As it will be clarified later, the maximum distortion constraint and the average distortion constraint
are completely different.

Let $\hat{{\cal X}}$ be the reproduction alphabet.
Then, let 
\begin{eqnarray}
\label{eq:additive-distortion-measure}
d_n(x^n, \hat{x}^n) := \frac{1}{n} \sum_{t=1}^n d(x_t, \hat{x}_t)
\end{eqnarray}
be an additive distortion measure for reproduction.
We assume 
$d(x,\hat{x}) \le d_{\max} < \infty$ for every $(x,\hat{x})$.
We also assume that for each $x$ there exists $\hat{x}$ such that $d(x,\hat{x}) = 0$.

We consider (possibly stochastic) encoder
\begin{eqnarray*}
\varphi_n:{\cal X}^n \to {\cal M}_n
\end{eqnarray*}
and decoder
\begin{eqnarray*}
\psi_n:{\cal M}_n \times {\cal Y}^n \to \hat{{\cal X}}^n.
\end{eqnarray*}

\begin{definition}
For any $\varepsilon > 0$, if there exists $n_0(\varepsilon)$ 
and a sequence of codes $\{(\varphi_n,\psi_n) \}_{n=1}^\infty$ such that
\begin{eqnarray}
\label{eq:rate-requirement}
\frac{1}{n} \log |{\cal M}_n| \le R + \varepsilon
\end{eqnarray}
and
\begin{eqnarray}
\label{eq:distortion-requirement}
\mathbb{E}[d_n(X^n,\psi_n(\varphi_n(X^n),Y^n))]  \le D + \varepsilon
\end{eqnarray}
for every $\bm{W} \in {\cal W}_m(E)$ and $n \ge n_0(\varepsilon)$, then
we define the rate $R$ to be {\em achievable}.
We also define the rate distortion function
\begin{eqnarray*}
R_m(D|E) := \inf\{ R : R \mbox{ is achievable} \}.
\end{eqnarray*}
We also define $R_a(D|E)$ by replacing ${\cal W}_m(E)$ with ${\cal W}_a(E)$.
\end{definition}

From the problem formulation, we can prove the following relation between $R_m(D|E)$ and $R_a(D|E)$,
which will be proved in Appendix \ref{appendix:relation-max-ave}.
\begin{proposition} \label{proposition:relation-max-ave}
We have
\begin{eqnarray*}
R_a(D|E) \ge \lim_{\epsilon \downarrow 0} R_m(D|E-\epsilon).
\end{eqnarray*}
\end{proposition}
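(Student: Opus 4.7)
The plan is to show directly that any rate achievable for $\mathcal{W}_a(E)$ is also achievable for $\mathcal{W}_m(E-\epsilon)$ for every $\epsilon > 0$; once this is established, taking $\epsilon \downarrow 0$ yields the proposition. The mechanism I would use is an \emph{embedding}: given $\bm{W} \in \mathcal{W}_m(E-\epsilon)$ I will construct a companion sequence $\bm{W}' \in \mathcal{W}_a(E)$ that agrees with $\bm{W}$ at all sufficiently large block lengths, so that a universal code designed for the larger class $\mathcal{W}_a(E)$ automatically works on $\bm{W}$.

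The embedding is driven by the elementary observation that the maximum-distortion constraint implies the average-distortion constraint up to a vanishing slack. Concretely, for $\bm{W} \in \mathcal{W}_m(E-\epsilon)$ I would pick $\delta = \epsilon/e_{\max}$ and use the definition of $\mathcal{W}_m$ to obtain an index $n_0(\delta)$, independent of $\bm{W}$, for which $\Pr\{e_n(X^n,Y^n) > E-\epsilon\} \le \delta$ whenever $n \ge n_0(\delta)$. Splitting the expectation of $e_n$ along this event and bounding $e$ by $e_{\max}$ gives $e_n(P_X^n, W^n) \le (E-\epsilon) + \delta e_{\max} = E$ in that range. For $n < n_0(\delta)$ I would invoke the standing assumption that some $y^\star(x)$ with $e(x,y^\star(x))=0$ exists for every $x$, and set $W'^n$ to the deterministic channel $Y_t = y^\star(X_t)$; the resulting $\bm{W}' = \{W'^n\}_{n \ge 1}$ lies in $\mathcal{W}_a(E)$ and coincides with $\bm{W}$ for every $n \ge n_0(\delta)$.

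With the embedding in hand, I would apply the achievability assumption for $\mathcal{W}_a(E)$ to $\bm{W}'$: the universal code satisfies (\ref{eq:rate-requirement}) and (\ref{eq:distortion-requirement}) for every element of $\mathcal{W}_a(E)$ whenever $n \ge n_0(\varepsilon)$. Since $\bm{W}$ and $\bm{W}'$ induce the same joint distribution of $(X^n,Y^n)$ as soon as $n \ge \max(n_0(\varepsilon), n_0(\delta))$, the distortion bound transfers to $\bm{W}$ unchanged, and the threshold remains independent of the particular $\bm{W} \in \mathcal{W}_m(E-\epsilon)$, giving $R_a(D|E) \ge R_m(D|E-\epsilon)$. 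The monotone limit in $\epsilon$ then yields the proposition. The only subtle point, and the one place where care is needed, is that the embedding threshold $n_0(\delta)$ must be uniform in $\bm{W}$; this is guaranteed by the definition of $\mathcal{W}_m$, and without that uniformity a single code could not be made to serve the whole class $\mathcal{W}_m(E-\epsilon)$ at a common block length.
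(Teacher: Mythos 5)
Your proof is correct and matches the paper's own argument: both embed a given $\bm{W} \in \mathcal{W}_m(E-\epsilon)$ into $\mathcal{W}_a(E)$ by bounding $\mathbb{E}[e_n]$ via $(E-\epsilon) + e_{\max}\Pr\{e_n > E-\epsilon\} \le E$ for $n \ge n_0(\epsilon/e_{\max})$, patching the finitely many small block lengths with an admissible channel, and then transferring the universal code for $\mathcal{W}_a(E)$ back to $\bm{W}$. The only cosmetic difference is that you make the small-$n$ patch explicit (the zero-distortion deterministic channel $y^\star(\cdot)$) where the paper says "chosen appropriately," and you correctly flag the uniformity of $n_0(\delta)$ in $\bm{W}$ as the essential point.
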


\begin{remark}
As we can find from the proof of Theorem \ref{theorem:main},
the theorem holds even if
the average distortion requirement in (\ref{eq:distortion-requirement}) is replaced 
by the maximum distortion requirement
\begin{eqnarray}
\label{eq:distortion-requirement-2}
\Pr\left\{ d_n(X^n, \psi_n(\phi_n(X^n),Y^n)) > D + \varepsilon \right\} \le \varepsilon.
\end{eqnarray}
However, Theorem \ref{theorem:main-2} does not hold if 
(\ref{eq:distortion-requirement}) is replaced by (\ref{eq:distortion-requirement-2}).
\end{remark}

Let 
$R_{WZ}(D|W)$
be the rate distortion function of the ordinary Wyner-Ziv problem 
in which the principal source is $X$ and the side-information $Y$
is the output of the channel $W \in {\cal P}({\cal Y}|{\cal X})$. 

The rate distortion function $R_m(D|E)$ (or $R_a(D|E)$) means that 
if $R > R_m(D|E)$ there exists a {\em universal} code that works well for 
every $\bm{W} \in {\cal W}_m(E)$ (or $\bm{W} \in {\cal W}_a(E)$).
It should be noted that this definition of universality is different from
the ordinary definition of the universality.
Let 
\begin{eqnarray}
{\cal W}_{WZ}(R,D) := \left\{ W \in {\cal P}({\cal Y}|{\cal X}) : R_{WZ}(D|W) \le R \right\}.
\label{eq:class-ordinary-sense}
\end{eqnarray}
In the ordinary definition of the universality, we require that there exists a code 
that works well for every $W \in {\cal W}_{WZ}(R,D)$.
This requirement seems much more severe than the requirement of $R_m(D|E)$ (or $R_a(D|E)$),
which will be discussed in more detail in Section \ref{section:main}.

\subsection{Heegard-Berger Problem}
\label{section:heegard-berger-problem}

For later use, we review the problem formulation of the
Heegard-Berger (HB) problem \cite{heegard:85} in this section.
We restrict our attention to the case with two decoders (see Fig.~\ref{Fig2}).
Furthermore, we restrict our attention to the case such that 
the alphabets of the side-information,
the reproduction alphabets, and the distortion measures 
for both the decoders are common, which are  denoted by ${\cal Y}$,
$\hat{{\cal X}}$, and $d(\cdot,\cdot)$ respectively.

Let us consider the HB coding for i.i.d.~joint source $(X,Y_1,Y_2)$.
The HB code consists of one encoder
\begin{eqnarray*}
\varphi_n^{HB}:{\cal X}^n \to {\cal M}_n
\end{eqnarray*}
and two decoders
\begin{eqnarray*}
&& \psi_n^{HB1}:{\cal M}_n \times {\cal Y}^n \to \hat{{\cal X}}^n, \\
&& \psi_n^{HB2}:{\cal M}_n \times {\cal Y}^n \to \hat{{\cal X}}^n.
\end{eqnarray*}
For a pair $(D_1,D_2)$ of distortions, a rate $R$ is defined to be $(D_1,D_2)$-achievable
if, for any $\varepsilon > 0$, there exists a sequence of HB code $\{ (\varphi_n^{HB}, \psi_n^{HB1}, \psi_n^{HB2}) \}_{n=1}^{\infty}$
such that 
\begin{eqnarray*}
\frac{1}{n} \log |{\cal M}_n| &\le& R + \varepsilon, \\
\mathbb{E}\left[ d_n(X^n,\hat{X}_i^n) \right] &\le& D_i + \varepsilon~~~i=1,2, 
\end{eqnarray*}
for sufficiently large $n$, where $\hat{X}_i^n = \psi_n^{HBi}(\varphi_n(X^n), Y_i^n)$
and $d_n$ is defined in (\ref{eq:additive-distortion-measure}).
Then, the HB rate-distortion function $R_{HB}(D_1,D_2|X,Y_1,Y_2)$ for $(X,Y_1,Y_2)$
is defined as the infimum of $(D_1,D_2)$-achievable rate $R$.

Fix an i.i.d.~source $P_X$. Then two side-information channel $W_1:{\cal X} \to {\cal Y}$
and $W_2:{\cal X} \to {\cal Y}$ define an i.i.d.~joint source $(X,Y_1,Y_2)$ whose joint 
distribution $P_{XY_1Y_2}$ is given by $P_{XY_1Y_2}(x,y_1,y_2) = P_X(x) W_1(y_1|x)W_2(y_2|x)$,
where $x \in {\cal X}$ and $y_1,y_2 \in {\cal Y}$. In the following, we denote by
$R_{HB}(D_1,D_2|W_1,W_2)$ the HB rate-distortion function $R_{HB}(D_1,D_2|X,Y_1,Y_2)$ for 
$(X,Y_1,Y_2)$ defined by $W_1$ and $W_2$.

Unfortunately, finding a single-letter expression for $R_{HB}(D_1,D_2|W_1,W_2)$ has been a 
long-standing open problem. So, we consider a special case. Let
\begin{eqnarray*}
E_* := \min_{y \in {\cal Y}} \sum_{x \in {\cal X}} P_X(x) e(x,y)
\end{eqnarray*}
and $y_* \in {\cal Y}$ be a symbol which attains the minimum. Further, let $W_*:{\cal X} \to {\cal Y}$ be
a side-information channel such that $W_*(y_*|x) = 1$ irrespective $x \in {\cal X}$. Then, let us consider
a special case where $W_1 = W_*$. This case is equivalent to the problem of "lossy coding when side-information
may be absent". Heegard and Berger \cite{heegard:85} (see also \cite{elgamal-kim-book}) showed the following.
\begin{proposition}[\cite{heegard:85}]
\label{proposition:may-be-absent}
We have
\begin{eqnarray*}
R_{HB}(D_1,D_2|W_*,W_2) = \min\left[ I(X;\hat{X}_1) + I(X;V|\hat{X}_1,Y) \right]
\end{eqnarray*}
where $\min$ is taken over all conditional distribution $P_{V\hat{X}_1|X}$ with 
$|{\cal V}| \le |{\cal X} \times \hat{{\cal X}}| + 2$ and functions $f:{\cal V} \times \hat{{\cal X}} \times {\cal Y} \to \hat{{\cal X}}$
such that 
\begin{eqnarray*}
\mathbb{E}[d(X,X_1) ] &\le& D_1, \\
\mathbb{E}[d(X,f(V,X_1,Y))] &\le& D_2.
\end{eqnarray*}
\end{proposition}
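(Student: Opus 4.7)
The plan is to prove Proposition~\ref{proposition:may-be-absent} by establishing matching achievability and converse bounds, following the standard Heegard--Berger two-layer (common plus refinement) coding approach. The key observation is that $W_1=W_*$ makes decoder~1's side-information a deterministic sequence $y_*^n$, so its reconstruction $\hat X_1^n=\psi_n^{HB1}(M,y_*^n)$ is effectively a function of the message $M$ alone; this naturally forces a common description of $\hat X_1^n$ that is reused by decoder~2.

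\emph{Achievability.} Fix a test distribution $P_{V\hat X_1|X}$ and a function $f$ nearly attaining the minimum. In the common layer, generate $2^{n(I(X;\hat X_1)+\epsilon)}$ codewords $\hat X_1^n$ i.i.d.\ from $P_{\hat X_1}$; by the covering lemma, some codeword is jointly typical with $X^n$ with high probability. In the refinement layer, for each common codeword generate $2^{n(I(X;V|\hat X_1)+\epsilon')}$ codewords $V^n$ i.i.d.\ from $P_{V|\hat X_1}$ and partition them into $2^{n(I(X;V|\hat X_1,Y)+2\epsilon')}$ bins. The encoder chooses a $V^n$ jointly typical with $(X^n,\hat X_1^n)$ and transmits the common index together with the bin index of $V^n$. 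Decoder~1 outputs $\hat X_1^n$. Decoder~2, exploiting the Markov chain $(V,\hat X_1)-X-Y$ (which holds automatically because $Y$ is generated from $X$ alone through $W_2$), performs Wyner--Ziv decoding to find the $V^n$ in the indicated bin jointly typical with $(\hat X_1^n,Y^n)$, and outputs $\hat X_{2,t}=f(V_t,\hat X_{1,t},Y_t)$ symbol-by-symbol. The bin rate $I(X;V|\hat X_1,Y)+2\epsilon'$ is exactly what the Wyner--Ziv packing lemma requires; joint typicality and boundedness of $d$ then give the desired distortion bounds.

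\emph{Converse.} Let $M=\varphi_n^{HB}(X^n)$ and use the fact that $\hat X_1^n$ is a deterministic function of $M$. Starting from $nR\ge H(M)\ge I(X^n;M,\hat X_1^n)$, split
$$I(X^n;M,\hat X_1^n)=I(X^n;\hat X_1^n)+I(X^n;M\mid\hat X_1^n).$$
Single-letterize the first term by $H(X^n|\hat X_1^n)\le\sum_t H(X_t|\hat X_{1,t})$ to obtain $\sum_t I(X_t;\hat X_{1,t})$. For the second term, insert $Y^n$ conditioning and apply the standard Wyner--Ziv chain-rule trick with the auxiliary $V_t=(M,\hat X_1^n,Y^{t-1},Y_{t+1}^n)$; then
$$I(X^n;M\mid\hat X_1^n)\ge I(X^n;M\mid\hat X_1^n,Y^n)=\sum_t I(X_t;V_t\mid\hat X_{1,t},Y_t),$$
and the Markov chain $V_t-X_t-Y_t\mid\hat X_{1,t}$ holds because $Y_t$ depends on $X^n$ only through $X_t$ while $V_t$ depends on $X_t$ only through $M$. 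A uniform time-sharing variable $Q$ independent of everything and the choice $V=(V_Q,Q)$, $\hat X_1=\hat X_{1,Q}$, $X=X_Q$, $Y=Y_Q$ yield the single-letter expression; the component decoders, composed with $Q$, induce a per-letter $f:\mathcal V\times\hat{\mathcal X}\times\mathcal Y\to\hat{\mathcal X}$ satisfying $\mathbb E[d(X,\hat X_1)]\le D_1+\epsilon$ and $\mathbb E[d(X,f(V,\hat X_1,Y))]\le D_2+\epsilon$.

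The main obstacles I anticipate are two-fold. First, choosing $V_t$ so that the Markov chain $V_t-X_t-Y_t\mid\hat X_{1,t}$ is genuine while $V_t$ still summarizes enough of $M$ and the other $Y_s$'s for the chain-rule manipulation to close; here the specific choice $V_t=(M,\hat X_1^n,Y^{t-1},Y_{t+1}^n)$ must be checked carefully using the i.i.d.\ structure of $(X^n,Y^n)$. Second, the cardinality bound $|\mathcal V|\le|\mathcal X\times\hat{\mathcal X}|+2$ requires a Fenchel--Eggleston/Carath\'eodory argument that preserves the joint distribution $P_{X\hat X_1}$ together with the conditional mutual information and the refinement distortion; this is a standard but delicate calculation that I would defer to a separate lemma in the appendix.
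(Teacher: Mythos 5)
The paper itself does not prove this proposition --- it is stated as a cited result of Heegard and Berger (and the El~Gamal--Kim text) --- so there is no in-paper proof to compare your argument against. Your achievability sketch (superposition of a common $\hat X_1^n$ layer and a Wyner--Ziv--binned $V^n$ refinement layer, with decoder~2 exploiting $(V,\hat X_1)-X-Y$) is the standard one and looks fine.

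Your converse, however, has a genuine gap. After the split $I(X^n;M,\hat X_1^n)=I(X^n;\hat X_1^n)+I(X^n;M\mid\hat X_1^n)$ you single-letterize the two pieces \emph{separately}, and for the second piece you assert $I(X^n;M\mid\hat X_1^n,Y^n)=\sum_t I(X_t;V_t\mid\hat X_{1,t},Y_t)$. This is not an equality, and it does not even hold with ``$\ge$'' in general: the Wyner--Ziv chain-rule trick you invoke relies on $H(X^n\mid Y^n)=\sum_t H(X_t\mid Y_t)$, which is true because $(X_t,Y_t)$ is i.i.d., but here the extra conditioning on $\hat X_1^n$ --- a function of the whole block through $M$ --- couples the time slots, so $H(X^n\mid\hat X_1^n,Y^n)\le\sum_t H(X_t\mid\hat X_{1,t},Y_t)$ goes in the \emph{unhelpful} direction. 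A concrete counterexample (take $W_2=W_*$ so $Y$ is constant): $n=3$, $X_1,X_2,X_3$ i.i.d.\ Bernoulli$(1/2)$, $M=(X_1,X_2)$, $\hat X_1^n=(X_1\oplus X_2,\,X_1\oplus X_2,\,X_1\oplus X_2)$. Then $I(X^n;M\mid\hat X_1^n)=1$ bit, while with your $V_t=(M,\hat X_1^n)$ one has $\sum_t I(X_t;V_t\mid\hat X_{1,t})=1+1+0=2$ bits, violating the claimed bound. The standard fix is to single-letterize the \emph{combined} quantity $I(X^n;\hat X_1^n)+I(X^n;M\mid\hat X_1^n,Y^n)$: using that $\hat X_1^n$ is a function of $M$ and the Markov chain $Y^n-X^n-\hat X_1^n$, this combination equals $H(X^n)-H(Y^n\mid\hat X_1^n)+H(Y^n\mid X^n)-H(X^n\mid M,Y^n)$, each of whose four terms single-letterizes in the correct direction, and after identifying $-H(Y_t\mid\hat X_{1,t})+H(Y_t\mid X_t)=-I(X_t;Y_t\mid\hat X_{1,t})$ the cross terms collapse to exactly $\sum_t\bigl[I(X_t;\hat X_{1,t})+I(X_t;V_t\mid\hat X_{1,t},Y_t)\bigr]$, which is the Heegard--Berger converse.
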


\begin{figure}[t]
\centering
\includegraphics[width=0.7\linewidth]{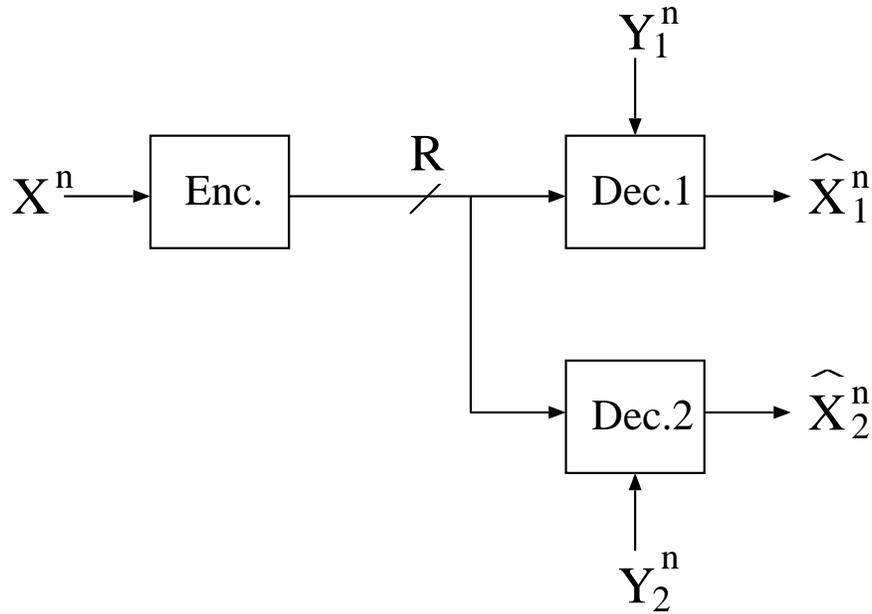}
\caption{The Heegard-Berger coding system.}
\label{Fig2}
\end{figure}

\section{Main Result}
\label{section:main}

\subsection{Convex Form of WZ Rate-Distortion Function}

We need convex form of the Wyner-Ziv rate-distortion function
introduced in \cite{willems:83}.
Let ${\cal U}$ be the set of all functions from ${\cal Y}$ to $\hat{{\cal X}}$.
The set ${\cal U}$ includes a constant function, i.e., $u(y) = \hat{x}~\forall y \in {\cal Y}$
for each $\hat{x} \in \hat{{\cal X}}$. We denote the set of constant functions by 
$\bar{{\cal U}} \subset {\cal U}$.
For fixed channel $W \in {\cal P}({\cal Y}|{\cal X})$ and fixed test channel $V \in {\cal P}({\cal U}|{\cal X})$, we denote
\begin{eqnarray*}
d(V,W) := \sum_{u,x,y} P_X(x) V(u|x)W(y|x) d(x,u(y)).
\end{eqnarray*}
For a fixed channel $W \in {\cal P}({\cal Y}|{\cal X})$, let
\begin{eqnarray*}
{\cal V}(W,D) := \left\{ V \in {\cal P}({\cal U}|{\cal X}) : d(V,W) \le D \right\}. 
\end{eqnarray*}
Let 
\begin{eqnarray*}
{\cal W}_1(P_X,E) &=& {\cal W}_1(E) \\
&:=& \{ W  \in {\cal P}({\cal Y}|{\cal X}) : e(P_X,W) \le E \}
\end{eqnarray*}
and
\begin{eqnarray*}
{\cal V}(E,D) := \left\{ V \in {\cal P}({\cal U}|{\cal X}) : d(V,W) \le D~\forall W \in {\cal W}_1(E) \right\}.
\end{eqnarray*}
For $(V,W) \in {\cal P}({\cal U}|{\cal X}) \times {\cal P}({\cal Y}|{\cal X})$, let
\begin{eqnarray}
\phi(V,W) &:=& I(U;X) - I(U;Y) \label{eq:f-1} \\
	&=& I(U; X|Y).
	\label{eq:f-2}
\end{eqnarray}
Note that $\phi(\cdot, W)$ is a convex function for fixed $W$,
which can be confirmed from (\ref{eq:f-2}), and $\phi(V,\cdot)$
is a concave function for fixed $V$, which can be confirmed from (\ref{eq:f-1}).

By the above notations, the Wyner-Ziv rate-distortion function is given by
\begin{eqnarray*}
R_{WZ}(D|W) = \min_{V \in {\cal V}(W,D)} \phi(V,W).
\end{eqnarray*}


Let 
\begin{eqnarray*}
\tilde{R}_{WZ}(D|W,E) = \min_{V \in {\cal V}(E,D)} \phi(V,W)
\end{eqnarray*}
be the pseudo rate-distortion function.
\begin{lemma}
\label{lemma:cocavity-pseudo}
The pseudo rate-distortion function $\tilde{R}_{WZ}(D|W,E)$ is a concave 
function of the channel, i.e., 
\begin{eqnarray*}
\lefteqn{ \tilde{R}_{WZ}(D|\lambda W_1 + (1-\lambda) W_2,E) } \\
	&\ge& \lambda \tilde{R}_{WZ}(D|W_1,E) + (1-\lambda) \tilde{R}_{WZ}(D|W_2,E)
\end{eqnarray*}
holds for $W_1,W_2 \in {\cal P}({\cal Y}|{\cal X})$
and $0 \le \lambda \le 1$.
\end{lemma}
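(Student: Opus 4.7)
The plan is to reduce the concavity of $\tilde{R}_{WZ}(D|\cdot,E)$ to the pointwise concavity of $\phi(V,\cdot)$ in $W$ for each fixed $V$, which is already asserted in the paper via the representation $\phi(V,W) = I(U;X) - I(U;Y)$ in (\ref{eq:f-1}). The crucial structural observation to exploit is that the feasibility set ${\cal V}(E,D)$ in the definition of $\tilde{R}_{WZ}(D|W,E)$ does \emph{not} depend on $W$ at all: it only depends on $E$ and $D$, since it is defined as the set of test channels $V$ whose distortion is at most $D$ against every channel in the class ${\cal W}_1(E)$. This is exactly what makes the concavity argument clean, and it is precisely why the ``pseudo'' rate-distortion function is introduced instead of the ordinary one (whose feasible set ${\cal V}(W,D)$ does depend on $W$).

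Given this, the proof sketch is as follows. Fix $W_1, W_2 \in {\cal P}({\cal Y}|{\cal X})$ and $\lambda \in [0,1]$, and set $W_\lambda := \lambda W_1 + (1-\lambda) W_2$. Let $V^* \in {\cal V}(E,D)$ attain the minimum in the definition of $\tilde{R}_{WZ}(D|W_\lambda,E)$. Since ${\cal V}(E,D)$ does not depend on the underlying channel, $V^*$ is also feasible for both $W_1$ and $W_2$. Therefore
\begin{eqnarray*}
\tilde{R}_{WZ}(D|W_\lambda,E) &=& \phi(V^*, W_\lambda) \\
	&\ge& \lambda \phi(V^*, W_1) + (1-\lambda) \phi(V^*, W_2) \\
	&\ge& \lambda \tilde{R}_{WZ}(D|W_1,E) + (1-\lambda) \tilde{R}_{WZ}(D|W_2,E),
\end{eqnarray*}
where the first inequality uses the concavity of $\phi(V^*, \cdot)$ and the second inequality uses that $V^* \in {\cal V}(E,D)$ is a feasible, but not necessarily optimal, test channel for $W_1$ and $W_2$.

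The only substantive step to justify carefully is the concavity of $W \mapsto \phi(V,W) = I(U;X) - I(U;Y)$ for fixed $V$, but this follows immediately from the standard fact that for a fixed input distribution $P_{UX}$ (which $V$ and $P_X$ determine and which is independent of $W$), the mutual information $I(U;Y)$ is a concave function of the channel $P_{Y|X} = W$ (equivalently of $P_{Y|U}$ obtained from $W$ by marginalization). Since $I(U;X)$ does not depend on $W$ at all, subtracting it preserves concavity. Thus there is no real obstacle: the lemma is essentially a bookkeeping consequence of the $W$-independence of ${\cal V}(E,D)$ combined with pointwise concavity, and I would not expect any delicate issue in writing it out.
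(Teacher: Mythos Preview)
Your argument is exactly the paper's proof: take the minimizer $V^*$ at the mixture $W_\lambda$, apply concavity of $\phi(V^*,\cdot)$, and then bound each term below by the corresponding minimum over ${\cal V}(E,D)$, using that this feasible set is independent of $W$. There is one slip in your final paragraph: $I(U;Y)$ is \emph{convex} in $W$ (for fixed $P_U$, mutual information is convex in the channel $P_{Y|U}$, which is affine in $W$), not concave; it is $-I(U;Y)$, and hence $\phi(V,W)=I(U;X)-I(U;Y)$, that is concave in $W$ --- which is exactly what the paper already records after (\ref{eq:f-1}), so your main chain of inequalities is unaffected.
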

\begin{proof}
Let 
\begin{eqnarray*}
\hat{V} = \argmin_{ V \in {\cal V}(E, D)} \phi(V, \lambda W_1 + (1-\lambda) W_2).
\end{eqnarray*}
Then, we have
\begin{eqnarray*}
\lefteqn{ R_{WZ}(D| \lambda W_1 + (1-\lambda) W_2,E) } \\
&=& \phi(\hat{V}, \lambda W_1 + (1-\lambda) W_2) \\
&\ge& \lambda \phi(\hat{V}, W_1) + (1- \lambda) \phi(\hat{V}, W_2) \\
&\ge& \lambda \min_{V \in {\cal V}(E,D)} \phi(V,W_1) + (1-\lambda) \min_{V \in {\cal V}(E,D)} \phi(V,W_2) \\
&=& \lambda R_{WZ}(D|W_1) + (1-\lambda) R_{WZ}(D|W_2),
\end{eqnarray*}
where we used concavity of $\phi(V,\cdot)$ for fixed $V$ in the first inequality.
\end{proof}

\subsection{Statements of General Results}

For the maximum distortion class, we have the following.
\begin{theorem}
\label{theorem:main}
We have
\begin{eqnarray}
R_m(D|E) &\ge& \max_{W \in {\cal W}_1(E)} R_{WZ}(D|W) \label{eq:converse-1} \\
	&=& \max_{W \in {\cal W}_1(E)} \min_{V \in {\cal V}(W,D)} \phi(V,W)
	\label{eq:converse-2}
\end{eqnarray}
and
\begin{eqnarray}
R_m(D|E) &\le& \min_{V \in {\cal V}(E,D)} \max_{W \in {\cal W}_1(E)} \phi(V,W) \label{eq:direct-1} \\
	&=& \max_{W \in {\cal W}_1(E)} \min_{V \in {\cal V}(E,D)} \phi(V,W). \label{eq:direct-2} \\
	&=& \max_{W \in {\cal W}_1(E)} \tilde{R}_{WZ}(D|W,E)
\end{eqnarray}
\end{theorem}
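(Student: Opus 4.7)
My plan is to prove the three main statements of the theorem separately: the converse inequality (\ref{eq:converse-1}), the direct inequality (\ref{eq:direct-1}), and the minimax identity that yields (\ref{eq:direct-2}). The remaining displayed equalities follow immediately from the single-letter definitions of $R_{WZ}(D|W)$ and $\tilde{R}_{WZ}(D|W,E)$.

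For the converse (\ref{eq:converse-1}), the natural strategy is restriction to the memoryless sub-class. For any $W \in \mathcal{W}_1(E)$, the i.i.d.\ sequence $\bm{W}^{\rom{iid}}=\{W^{\times n}\}_{n=1}^{\infty}$ lies in $\mathcal{W}_m(E)$: by the weak law of large numbers applied to $e_n(X^n,Y^n)$ under $P_X^n W^{\times n}$, the tail $\Pr\{e_n(X^n,Y^n) > E+\delta\}$ vanishes at a rate depending only on $\delta$ and the global bound $e_{\max}$, so a single $n_0(\delta)$ works across the whole sub-class. Any universal code meeting (\ref{eq:rate-requirement}) and (\ref{eq:distortion-requirement}) over $\mathcal{W}_m(E)$ must in particular meet them for $\bm{W}^{\rom{iid}}$, so the classical Wyner-Ziv converse yields $R_m(D|E) \ge R_{WZ}(D|W)$. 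The maximum over $W \in \mathcal{W}_1(E)$ is attained by compactness of $\mathcal{W}_1(E)$ and continuity of $R_{WZ}(D|\cdot)$, and the identity (\ref{eq:converse-2}) is the single-letter expansion of $R_{WZ}$.

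For the direct part (\ref{eq:direct-1}), fix a minimizer $V^\star \in \mathcal{V}(E,D)$ of $\max_{W \in \mathcal{W}_1(E)}\phi(V,W)$ and set $R^\star := \max_{W \in \mathcal{W}_1(E)}\phi(V^\star,W)$. The code is a standard random-binning Wyner-Ziv construction with a worst-case binning rate: generate roughly $2^{n(I(X;U)+\varepsilon)}$ codewords $U^n$ i.i.d.\ from the $P_U$ induced by $P_X V^\star$, and randomly partition them into $2^{n(R^\star+2\varepsilon)}$ bins. The encoder sends the bin index of a $U^n$ jointly typical with $X^n$ under $P_X V^\star$. The novelty sits in the decoder: since $\bm{W}$ is not assumed memoryless, the usual joint-typicality test against a fixed single-letter $P_{UY}$ is unavailable. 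Instead, via the method of types, the decoder accepts $\hat u^n$ in the announced bin if there exists a joint type $Q_{XY}$ with $P_X$-marginal and $e(Q_{XY}) \le E+\delta$ such that $(\hat u^n, Y^n)$ is consistent with the $(U,Y)$-marginal of $P_X V^\star \cdot Q_{Y|X}$. Because the definition of $\mathcal{W}_m(E)$ forces $\bm{W}$-uniform convergence of $e_n(X^n,Y^n)$, with high probability the empirical $Q_{X^n Y^n}$ satisfies $e(Q_{X^n Y^n}) \le E+\delta$, so the true codeword passes. A standard packing-lemma estimate applied type-by-type, combined with a polynomial-in-$n$ union bound, shows that a wrong codeword passes with probability at most $\sum_{Q_{XY}} 2^{nR^\star} \cdot 2^{-n(\phi(V^\star, Q_{Y|X})-\varepsilon)}$, which vanishes by the choice of $R^\star$ and continuity of $\phi(V^\star,\cdot)$. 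The reproduction $\hat x_t = \hat u_t(y_t)$ achieves distortion $D + O(\delta)$ because $V^\star \in \mathcal{V}(E,D)$ covers every $W \in \mathcal{W}_1(E)$ by definition, extended by a simple continuity argument to $\mathcal{W}_1(E+\delta)$.

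The minimax identity (\ref{eq:direct-2}) follows from Sion's theorem: $\phi(V,W)$ is convex in $V$ by (\ref{eq:f-2}) and concave in $W$ by (\ref{eq:f-1}); $\mathcal{W}_1(E)$ is a convex compact set cut out by one linear inequality; and $\mathcal{V}(E,D)$ is convex and compact as the intersection of the simplex with a family of linear half-spaces indexed by $W \in \mathcal{W}_1(E)$. The final equality in the theorem is the definition of $\tilde{R}_{WZ}(D|W,E)$. The main obstacle I anticipate is the universal decoding step: making the typicality test succeed uniformly over the non-stationary, non-ergodic class $\mathcal{W}_m(E)$ forces reliance on the $\bm{W}$-uniform rate $n_0(\delta)$ built into the definition and on a type-partitioned packing argument, in contrast to the simpler joint-typicality analysis available in the memoryless setting.
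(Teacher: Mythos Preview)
Your converse has a small slip: for $W$ on the boundary $e(P_X,W)=E$, the i.i.d.\ extension $\{W^{\times n}\}$ need not lie in $\mathcal{W}_m(E)$, because the definition asks for $\Pr\{e_n>E\}\le\delta$, not $\Pr\{e_n>E+\delta\}\le\delta$; by the CLT the former can stay near $1/2$. The paper fixes this by first taking $W\in\mathcal{W}_1(E-\delta)$ and then invoking the continuity of $R_{WZ}(D|W)$ in $W$ (proved separately for $D>0$). This is easy to patch.

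The direct part has a genuine gap. Your random-coding analysis establishes that for each \emph{fixed} channel $W^n$, the codebook-averaged error is small; equivalently, that $\mathbb{E}_C[\mathrm{err}(x^n,y^n,C)]$ is exponentially small for every pair $(x^n,y^n)$ with the right empirical distortion. But the class $\mathcal{W}_m(E)$ contains adversarial, non-stationary channels, and after the codebook $C$ is fixed the channel may be chosen to depend on $C$. Concretely, once $C$ is fixed, $u^n=U^n(x^n,C)$ is a deterministic function of $x^n$; the adversary sees $x^n$ (hence $u^n$) and may place $y^n$ so that the empirical $(x^n,y^n)$ type satisfies $e(Q_{X^nY^n})\le E$ while the empirical $(u^n,y^n)$ type is \emph{not} the $(U,Y)$-marginal of any $P_XV^\star Q_{Y|X}$. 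Your ``true codeword passes'' step relies on the Markov structure $U\markov X\markov Y$, which holds only on average over the random codebook, not for a fixed code against this class. Since there are exponentially many $(x^n,y^n)$ pairs, you cannot union-bound your way from the per-pair expectation to a single good codebook.

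This is exactly the obstacle the paper's three-step construction is built to overcome: (i) a universal code for the finitely many i.i.d.\ channels in $\mathcal{W}_1(E)\cap\mathcal{P}_n(\mathcal{Y}|\mathcal{X})$; (ii) a shared random permutation of coordinates, which makes the effective channel permutation-invariant and thus, conditionally on the joint type, uniform on the $V$-shell---reducing the general case to the i.i.d.\ analysis; and (iii) de-randomization of the permutation via Ahlswede's elimination technique (Bernstein's trick), which cuts the shared randomness to rate $\delta_2$ while preserving uniformity over all $(x^n,y^n)$. Your sketch is missing the symmetrization step (ii) and the de-randomization step (iii); without them the argument does not yield a single code that works uniformly over $\mathcal{W}_m(E)$. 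Your minimax identity via Sion's theorem is fine and matches the paper's appeal to the saddle-point theorem.
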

\begin{proof}
See Section \ref{section:proof-of-theorem:main}.
\end{proof}

\begin{remark}
Technically in the converse part, at $D=0$, we only have the inequality
\begin{eqnarray*}
R_m(0|E) \ge \lim_{\epsilon \downarrow 0} \max_{W \in {\cal W}_1(E)} R_{WZ}(\epsilon|W).
\end{eqnarray*} 
This is because we use the fact that $R_{WZ}(D|W)$ is a continuos function with respect to $W$ in the converse proof
(see Section \ref{subset:converse-proof-maximum}),
and it is not clear whether $R_{WZ}(D|W)$ is a continuos function with respect to $W$ at $D=0$ in general.
\end{remark}

The difference between (\ref{eq:converse-2}) and (\ref{eq:direct-2}) are
${\cal V}(W,D)$ and ${\cal V}(E,D)$. Thus, we have the following matching conditions.

\begin{corollary}
\label{corollary:matching-1}
Let $(V^*,W^*)$ be a saddle point satisfying
\begin{eqnarray*}
\phi(V^*,W^*) = \max_{W \in {\cal W}_1(E)} \min_{V \in {\cal V}(E,D)} \phi(V,W).
\end{eqnarray*}
Suppose that 
\begin{eqnarray*}
\hat{V} := \argmin_{V \in {\cal V}(W^*,D)} \phi(V,W^*) \in {\cal V}(E,D).
\end{eqnarray*}
Then, we have
\begin{eqnarray*}
R_m(D|E) = \phi(V^*,W^*) = \max_{W \in {\cal W}_1(E)} \min_{V \in {\cal V}(E,D)} \phi(V,W).
\end{eqnarray*}
\end{corollary}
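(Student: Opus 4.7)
The plan is to deduce the corollary directly from Theorem \ref{theorem:main} by showing that, under the stated matching hypothesis, the max--min expressions in (\ref{eq:converse-2}) and (\ref{eq:direct-2}) coincide and equal $\phi(V^*,W^*)$.

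The first step is a simple set inclusion: for every $W \in {\cal W}_1(E)$, one has ${\cal V}(E,D) \subseteq {\cal V}(W,D)$, since a test channel $V$ satisfying $d(V,W') \le D$ for \emph{all} $W' \in {\cal W}_1(E)$ in particular satisfies this for $W$ itself. Minimizing $\phi(V,W)$ over the smaller set ${\cal V}(E,D)$ can therefore only increase the value, so
\[
\max_{W \in {\cal W}_1(E)} \min_{V \in {\cal V}(W,D)} \phi(V,W) \le \max_{W \in {\cal W}_1(E)} \min_{V \in {\cal V}(E,D)} \phi(V,W).
\]
Combining this with the two bounds in Theorem \ref{theorem:main} immediately sandwiches $R_m(D|E)$ between these two max--min expressions, and the right-hand side equals $\phi(V^*,W^*)$ by the saddle-point definition of $(V^*,W^*)$.

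The crucial step is then to push the lower bound up to $\phi(V^*,W^*)$ using the matching hypothesis $\hat V \in {\cal V}(E,D)$. I would evaluate the converse bound (\ref{eq:converse-2}) at the specific channel $W^*$:
\[
\max_{W \in {\cal W}_1(E)} \min_{V \in {\cal V}(W,D)} \phi(V,W) \ge \min_{V \in {\cal V}(W^*,D)} \phi(V,W^*) = \phi(\hat V, W^*),
\]
by the definition of $\hat V$. Since $\hat V \in {\cal V}(E,D)$ by assumption, it is a feasible candidate in the smaller constraint set, so
\[
\phi(\hat V, W^*) \ge \min_{V \in {\cal V}(E,D)} \phi(V, W^*) = \phi(V^*, W^*),
\]
where the last equality is the inner minimization at the saddle point. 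Chaining the inequalities shows $R_m(D|E) \ge \phi(V^*,W^*)$, which matches the upper bound already obtained.

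In reality there is no hard step here: the corollary is essentially bookkeeping once Theorem \ref{theorem:main} is in hand, and the matching hypothesis is precisely what is needed to prevent the strict inclusion ${\cal V}(E,D) \subsetneq {\cal V}(W^*,D)$ from producing a strict inequality between the two max--min expressions. The only point to be careful about is to justify that $(V^*,W^*)$ is actually attained (so that writing $\argmin$ and $\argmax$ is meaningful), which follows from continuity of $\phi$ and compactness of ${\cal V}(E,D)$ and ${\cal W}_1(E)$, together with the concavity/convexity already recorded after (\ref{eq:f-2}).
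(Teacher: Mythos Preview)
Your proof is correct and follows essentially the same approach as the paper: both use Theorem~\ref{theorem:main} together with the hypothesis $\hat V \in {\cal V}(E,D)$ to close the gap, via the chain $\phi(\hat V,W^*) = \min_{V\in{\cal V}(W^*,D)}\phi(V,W^*) \le R_m(D|E) \le \phi(V^*,W^*) = \min_{V\in{\cal V}(E,D)}\phi(V,W^*) \le \phi(\hat V,W^*)$. The paper simply writes this as a single closed loop of inequalities, whereas you split it into an upper and a lower bound argument; the logical content is identical.
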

\begin{proof}
We have
\begin{eqnarray*}
\phi(\hat{V}, W^*) &=& \min_{V \in {\cal V}(W^*,D)} \phi(V, W^*) \\
	&\le& \max_{W \in {\cal W}_1(E )} \min_{V \in {\cal V}(W, D)} \phi(V,W) \\
	&\le& R_m(D|E) \\
	&\le& \phi(V^*, W^*) \\
	&=& \min_{V \in {\cal V}(E,D)} \phi(V,W^*) \\
	&\le& \phi(\hat{V}, W^*).
\end{eqnarray*}
\end{proof}

\begin{corollary}
\label{corollary:matching-2}
Under the same notations as Corollary \ref{corollary:matching-1}, suppose that
\begin{eqnarray}
\label{eq:support-condition}
\mbox{supp}(\hat{V}) \subset \bar{{\cal U}}.
\end{eqnarray}
Then, we have
\begin{eqnarray*}
R_m(D|E) = \phi(V^*,W^*) = \max_{W \in {\cal W}_1(E)} \min_{V \in {\cal V}(E,D)} \phi(V,W).
\end{eqnarray*}
\end{corollary}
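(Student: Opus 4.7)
The plan is to reduce the claim directly to Corollary \ref{corollary:matching-1}: I will show that the support condition (\ref{eq:support-condition}) forces $\hat{V} \in {\cal V}(E,D)$, so that the hypothesis of Corollary \ref{corollary:matching-1} is automatically met and its conclusion applies verbatim.

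The key observation is that whenever a test channel $V \in {\cal P}({\cal U}|{\cal X})$ is supported on the set $\bar{{\cal U}}$ of constant functions, the expected distortion $d(V,W)$ does not depend on the side-information channel $W$. Indeed, for each $u$ in the support of $V$ there is some $\hat{x}_u \in \hat{{\cal X}}$ with $u(y) = \hat{x}_u$ for every $y \in {\cal Y}$, and therefore
\begin{eqnarray*}
d(V,W) &=& \sum_{u,x,y} P_X(x) V(u|x) W(y|x) d(x, u(y)) \\
	&=& \sum_{u,x} P_X(x) V(u|x) d(x, \hat{x}_u) \sum_y W(y|x) \\
	&=& \sum_{u,x} P_X(x) V(u|x) d(x, \hat{x}_u),
\end{eqnarray*}
which does not involve $W$ at all.

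Applying this observation to $\hat{V}$, which by definition satisfies $d(\hat{V}, W^*) \le D$ and by hypothesis is supported on $\bar{{\cal U}}$, we get $d(\hat{V}, W) = d(\hat{V}, W^*) \le D$ for every $W \in {\cal P}({\cal Y}|{\cal X})$. In particular $d(\hat{V}, W) \le D$ for every $W \in {\cal W}_1(E)$, i.e., $\hat{V} \in {\cal V}(E,D)$. At this point Corollary \ref{corollary:matching-1} applies and immediately yields $R_m(D|E) = \phi(V^*,W^*) = \max_{W \in {\cal W}_1(E)} \min_{V \in {\cal V}(E,D)} \phi(V,W)$. Since the heavy lifting is already done by Corollary \ref{corollary:matching-1}, I do not anticipate any genuine obstacle beyond this short reduction; the entire content of the argument is the $W$-independence of $d(V,W)$ for constant-function-supported $V$.
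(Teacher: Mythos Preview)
Your proof is correct and follows exactly the same approach as the paper: both argue that when $\hat{V}$ is supported on constant functions, the distortion $d(\hat{V},W)$ is independent of $W$, hence $\hat{V}\in{\cal V}(E,D)$ and Corollary~\ref{corollary:matching-1} applies. You have simply written out the $W$-independence computation more explicitly than the paper does.
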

\begin{proof}
When (\ref{eq:support-condition}) is satisfied, the distortion
\begin{eqnarray*}
\sum_{u,x,y} P_X(x) \hat{V}(u|x) W(y|x) d(x,u(y))
\end{eqnarray*}
does not depend on the channel $W$. Thus, Corollary \ref{corollary:matching-1} implies
the statement of the present corollary 
\end{proof}

For the average distortion class, we have the following.
\begin{theorem}
\label{theorem:main-2}
We have
\begin{eqnarray}
\label{eq:rd-average-class-converse}
R_a(D|E) \ge \max_{\lambda, E_1, E_2, W_1, W_2: \atop \stackrel{\lambda E_1 + (1-\lambda) E_2 \le E}{W_1 \in {\cal W}_1(E_1), W_2 \in {\cal W}_1(E_2)}}
 \min_{D_1, D_2: \atop \lambda D_1 + (1-\lambda) D_2 \le D} R_{HB}(D_1,D_2|W_1,W_2),
\end{eqnarray}
where (i) $\max$ is taken over all $0 \le \lambda \le 1$, $E_j \ge 0$, and side information channels $W_1$, $W_2$ such that
$\lambda E_1 + (1-\lambda) E_2 \le E$ and $W_j \in {\cal W}_1(E_j)$ ($j = 1,2$) and (ii) $\min$ is taken over
all $D_1, D_2 \in [0,d_{\max}]$ such that $\lambda D_1 + (1-\lambda) D_2 \le D$. Especially,
\begin{eqnarray}
\label{eq:rd-average-class-converse-2}
R_a(D|E) \ge \max_{\lambda, E_2, W_2 \in {\cal W}_1(E_2) \atop \lambda E_* + (1-\lambda) E_2 \le E}
 \min_{D_1,D_2: \atop \lambda D_1 + (1-\lambda) D_2 \le D} R_{HB}(D_1,D_2|W_*,W_2)
\end{eqnarray}
holds.
We also have
\begin{eqnarray}
\label{eq:rd-average-class}
R_a(D|E) \le \min_{V \in {\cal V}(E,D)} I(P_X,V).
\end{eqnarray}
\end{theorem}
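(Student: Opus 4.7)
The plan is to prove the two bounds in Theorem \ref{theorem:main-2} by separate arguments: a mixture-channel reduction for the converse, and a Wyner-Ziv style lossy code without Slepian-Wolf binning for the achievability.

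For the converse (\ref{eq:rd-average-class-converse}), I would exhibit, for every feasible choice of parameters, a non-ergodic mixture channel in ${\cal W}_a(E)$ against which any universal code degenerates into a Heegard-Berger code. Fix $\lambda \in [0,1]$, $E_1, E_2 \ge 0$ with $\lambda E_1 + (1-\lambda) E_2 \le E$, and single-letter channels $W_j \in {\cal W}_1(E_j)$, $j = 1, 2$. Define $\bm{W} = \{W^n\}_{n=1}^\infty$ by flipping an independent coin and applying $W_1^{\times n}$ with probability $\lambda$ and $W_2^{\times n}$ with probability $1-\lambda$. Then $e_n(P_{X^n}, W^n) = \lambda e(P_X, W_1) + (1-\lambda) e(P_X, W_2) \le E$, so $\bm{W} \in {\cal W}_a(E)$. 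Any universal code $(\varphi_n, \psi_n)$ of rate $R + \varepsilon$ achieving expected distortion at most $D + \varepsilon$ against $\bm{W}$ must satisfy $\lambda D_1 + (1-\lambda) D_2 \le D + \varepsilon$, where $D_j$ is the conditional expected distortion in branch $j$. Viewing the two branches as parallel side-information links, $(\varphi_n, \psi_n, \psi_n)$ is a valid Heegard-Berger code for the i.i.d.~joint source with distribution $P_X(x) W_1(y_1|x) W_2(y_2|x)$, hence $R + \varepsilon \ge R_{HB}(D_1, D_2 | W_1, W_2)$. Minimizing over admissible $(D_1, D_2)$ and maximizing over the outer parameters produces (\ref{eq:rd-average-class-converse}); the specialization (\ref{eq:rd-average-class-converse-2}) is then the choice $W_1 = W_*$, $E_1 = E_*$.

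For the achievability (\ref{eq:rd-average-class}), I would use plain lossy source coding of $X^n$ into an auxiliary $U^n$ with no binning. Fix $V \in {\cal V}(E,D)$ and set $P_U(u) := \sum_x P_X(x) V(u|x)$. Generate $2^{n(I(X;U)+\varepsilon)}$ codewords i.i.d.~$P_U^{\times n}$; the encoder transmits the index of a codeword jointly typical with $X^n$ under $P_X V$, which exists with high probability by the covering lemma. The decoder outputs $\hat{X}_t := U_t(Y_t)$, a rule that requires no knowledge of $\bm{W}$ because each $u \in {\cal U}$ is a function ${\cal Y} \to \hat{{\cal X}}$. Letting $\tilde{W}_t(y|x) := \Pr(Y_t = y \mid X_t = x)$ and $\bar{W}_n := \frac{1}{n}\sum_{t=1}^n \tilde{W}_t$, standard typicality arguments give expected per-letter distortion close to $\frac{1}{n}\sum_t d(V, \tilde{W}_t) = d(V, \bar{W}_n)$. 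The membership $\bm{W} \in {\cal W}_a(E)$ reads $e(P_X, \bar{W}_n) = e_n(P_{X^n}, W^n) \le E$, so $\bar{W}_n \in {\cal W}_1(E)$, and then $V \in {\cal V}(E,D)$ yields $d(V, \bar{W}_n) \le D$ uniformly in $\bm{W}$.

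The main technical obstacle is making the achievability's distortion analysis rigorous when $W^n$ is neither i.i.d.~nor stationary. The tuples $(X_t, U_t, Y_t)$ are neither identical nor independent: $U_t$ depends on the entire block $X^n$ via the codebook-matching rule, and $Y_t$ depends on $X^n$ via $W^n$. The saving grace is that, conditional on $X^n$, the codeword $U^n$ and the channel output $Y^n$ are independent, which forces the distortion to depend on $\bm{W}$ only through the single-letter average $\bar{W}_n$. Turning this observation into a rigorous bound $\mathbb{E}[d_n] \le d(V,\bar{W}_n) + o(1)$, and then extracting a single deterministic codebook that works uniformly over the uncountable class ${\cal W}_a(E)$, will be the most delicate steps.
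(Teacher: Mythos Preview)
Your converse argument is essentially identical to the paper's: both fix the parameters, build the two-branch mixture $W^n = \lambda W_1^{\times n} + (1-\lambda)W_2^{\times n} \in {\cal W}_a(E)$, and reinterpret any universal code as a Heegard--Berger code $(\varphi_n,\psi_n,\psi_n)$.

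For the achievability, the high-level scheme (transmit $U^n$ at rate $I(X;U)$ with no Slepian--Wolf binning, decode by $\hat{X}_t=U_t(Y_t)$) is the same as the paper's, but the two ``delicate steps'' you flag are real gaps that your outline does not close. The first is that your key claim --- that the distortion depends on $\bm{W}$ only through $\bar{W}_n$ --- is \emph{false for any fixed deterministic code}. Once $\varphi_n$ is fixed, an adversarial $W^n$ can inspect the map $x^n\mapsto u^n$ and concentrate its error budget on coordinates where $d(x_t,u_t(\cdot))$ is most vulnerable; two channels with the same $\bar{W}_n$ can then give very different distortions. The claim is (approximately) true only when averaged over a random encoder whose induced law on $(X^n,U^n)$ is close in total variation to $P_{XU}^n$; the bare ``first jointly typical codeword'' rule you propose does not obviously deliver this. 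The second gap is that even if $\mathbb{E}_C\bigl[\mathbb{E}[d_n]\bigr]\le d(V,\bar W_n)+o(1)$ for every $\bm{W}$, this does not by itself produce a single codebook good for the uncountable class ${\cal W}_a(E)$.

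The paper resolves both issues simultaneously. It first builds (via the output-statistics-of-random-binning technique) a code for which $(\hat U^n,X^n,Y^n)$ is typical for every \emph{i.i.d.} channel $\bar W$. It then pre- and post-composes with a uniformly random permutation $\pi_n$, which renders the effective channel permutation-invariant; the conditional-type decomposition of a permutation-invariant channel lets one write the expected distortion as $d(V,W_{\mathrm{mix}})$ for a single-letter mixture channel, and one checks $W_{\mathrm{mix}}\in{\cal W}_1(E+2\delta e_{\max})$ directly from the average-distortion constraint. Finally, the randomness in $\pi_n$ is eliminated by Ahlswede's elimination technique: because the permutation-averaged bound holds pointwise for each of the finitely many pairs $(x^n,y^n)$, a doubly exponential concentration argument yields $2^{\delta_2 n}$ deterministic permutations that work uniformly in $\bm W$. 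This last reduction to a finite family is precisely the missing ingredient in your plan.
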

\begin{proof}
See Section \ref{section:proof-of-theorem:main-2}.
\end{proof}

\begin{remark}
Note that (\ref{eq:rd-average-class-converse-2}) is obtained from (\ref{eq:rd-average-class-converse}) by
letting $E_1 = E_*$ and $W_1 = W_*$. Thus, (\ref{eq:rd-average-class-converse}) is tighter than
(\ref{eq:rd-average-class-converse-2}). However, we cannot give a single letter expression for the right
hand side of (\ref{eq:rd-average-class-converse}), while we can for (\ref{eq:rd-average-class-converse-2})
by using Proposition \ref{proposition:may-be-absent}.
\end{remark}
\begin{remark}
A close inspection of the proof reveals that we can generalize (\ref{eq:rd-average-class-converse}) by
considering one-to-$m$ lossy source coding with side information at the decoders. That is, in the same
manner as (\ref{eq:rd-average-class-converse}), we can show that 
\begin{eqnarray}
\label{eq:rd-average-class-converse-3}
R_a(D|E) \ge \max_{\vec{\lambda}, \bm{E}, \bm{W}} \min_{\bm{D}} R_{HB}(D_1,D_2,\ldots,D_m|W_1,W_2,\ldots,W_m),
\end{eqnarray}
where (i) $\max$ is taken over all $\vec{\lambda} = (\lambda_1,\ldots,\lambda_m)$,
$\bm{E} = (E_1,\ldots,E_m)$, and $\bm{W} = (W_1,\ldots,W_m)$ such that
$\sum_j \lambda_j = 1$, $\sum_j \lambda_j E_j \le E$, and $W_j \in {\cal W}_1(E_j)$ ($j=1,\ldots,m$) and
(ii) $\min$ is taken over all $\bm{D} = (D_1,\ldots,D_m)$ such that $\sum_j \lambda_j D_j \le D$.
The authors conjecture that the bound (\ref{eq:rd-average-class-converse-3}) is not tighter than
(\ref{eq:rd-average-class-converse}), i.e., is equivalent to (\ref{eq:rd-average-class-converse}).
\end{remark}
\begin{remark}
The upper bound in 
(\ref{eq:rd-average-class}) is derived by using the side-information
only for the estimation at the decoder and not for the bin coding, which
is the difference between (\ref{eq:direct-1}) and (\ref{eq:rd-average-class}).
\end{remark}

From Theorem \ref{theorem:main-2}, we have several corollaries.
At first, let us set parameters in (\ref{eq:rd-average-class-converse-2}) as $\lambda = 0$
and $E_2 = E$. Then, we have 
\begin{eqnarray*}
R_a(D|E) &\ge& \max_{W_2 \in {\cal W}_1(E)} \min_{D_2 \le D} R_{HB}(D_1,D_2|W_*,W_2) \\
&=& \max_{W_2 \in {\cal W}_1(E)} R_{HB}(d_{\max}, D|W_*,W_2).
\end{eqnarray*}
Note that $R_{HB}(d_{\max},D|W_*,W_2)$ equals to the Wyner-Ziv rate-distortion function $R_{WZ}(D|W_2)$.
This fact gives the following corollary.
\begin{corollary}
\label{corollary:average-class-wyner-ziv-bound}
We have
\begin{eqnarray*}
R_a(D|E) \ge \max_{W \in {\cal W}_1(E)} R_{WZ}(D|W).
\end{eqnarray*}
\end{corollary}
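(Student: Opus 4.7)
The plan is to derive this corollary as a direct specialization of the bound (\ref{eq:rd-average-class-converse-2}) established in Theorem \ref{theorem:main-2}. First, I would fix an arbitrary $W \in {\cal W}_1(E)$ (think of this as the $W_2$ slot) and set the parameters in (\ref{eq:rd-average-class-converse-2}) to $\lambda = 0$, $E_2 = E$, $W_2 = W$. Under this choice, the constraint $\lambda E_* + (1-\lambda) E_2 \le E$ degenerates to $E_2 \le E$ and is satisfied, so these parameters are admissible.

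Next, I would analyze the inner minimization. With $\lambda = 0$, the constraint $\lambda D_1 + (1-\lambda) D_2 \le D$ collapses to $D_2 \le D$, while $D_1$ is free to range over $[0, d_{\max}]$. Since the HB rate-distortion function $R_{HB}(D_1, D_2 | W_*, W_2)$ is non-increasing in each distortion argument, the minimum is attained at $D_1 = d_{\max}$ and $D_2 = D$. This reduces the inner minimum to $R_{HB}(d_{\max}, D | W_*, W)$.

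The final and only non-mechanical step is to identify $R_{HB}(d_{\max}, D | W_*, W)$ with the Wyner-Ziv rate-distortion function $R_{WZ}(D | W)$. The point is that, because $d(x, \hat{x}) \le d_{\max}$ for every $(x, \hat{x})$, the distortion constraint at decoder~1 is vacuous: any reproduction (in particular a constant one) satisfies $\mathbb{E}[d_n(X^n, \hat{X}_1^n)] \le d_{\max}$, regardless of the side-information channel $W_* $. Hence decoder~1 can be ignored, and the HB coding problem collapses to the ordinary Wyner-Ziv coding problem for source $X$ with side-information generated by $W$ at decoder~2. This gives $R_{HB}(d_{\max}, D|W_*, W) = R_{WZ}(D|W)$.

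Putting these pieces together and taking the supremum over $W \in {\cal W}_1(E)$ yields the desired inequality. I do not expect any serious obstacle: the main substantive work was carried out in proving Theorem \ref{theorem:main-2}, and what remains is the observation that a vacuous distortion constraint effectively removes a decoder from the HB setup. The one point to write carefully is the monotonicity/boundedness argument justifying $D_1 = d_{\max}$, which follows immediately from our standing assumption $d \le d_{\max}$ combined with the existence, for each $x$, of some $\hat x$ with $d(x, \hat x) = 0$.
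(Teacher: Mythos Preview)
Your proposal is correct and follows essentially the same approach as the paper: set $\lambda = 0$ and $E_2 = E$ in (\ref{eq:rd-average-class-converse-2}), reduce the inner minimization to $R_{HB}(d_{\max}, D | W_*, W_2)$, and identify this with $R_{WZ}(D|W_2)$ since the distortion constraint at decoder~1 is vacuous. Your write-up is in fact slightly more detailed than the paper's, which states the reduction and the identification without elaborating on the monotonicity step.
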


Next, let us consider the lossless case, 
i.e., $d(\cdot,\cdot)$ is the Hamming distortion measure and $D=0$.
Note that $R_{HB}(0,0|W_*,W_2)$ equals to the 
minimum coding rate such that the decoder $\psi_n^{HB1}$ without side information can
reproduce $X^n$ in losslessly. Thus, for any side information channel $W_2$, 
\begin{eqnarray*}
R_{HB}(0,0|W_*, W_2) = H(X).
\end{eqnarray*}
Since $R_a(0|E) \le H(X)$, we have the following corollary.
\begin{corollary}
\label{corollary-average-lossless}
For $D = 0$ and $E > 0$, we have\footnote{We need the condition $E > 0$ because we need
to take $\lambda > 0$ in (\ref{eq:rd-average-class-converse-2}). When $e(\cdot,\cdot)$ is the
Hamming distortion measure and $E=0$, then we have $R_a(0|0) = 0$. However, $R_a(0|0)$ may be
positive in general. For example, $R_a(0|0)$ can be positive for a distortion measure such that $e(x,y)=0$ for every $(x,y)$.}
\begin{eqnarray}
\label{eq:rd-average-class-loss-less}
R_a(0|E) = H(X).
\end{eqnarray}
\end{corollary}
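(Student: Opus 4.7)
The plan is to combine the lower bound (\ref{eq:rd-average-class-converse-2}) from Theorem \ref{theorem:main-2} with a judicious choice of the free parameters to obtain $R_a(0|E) \ge H(X)$, and to observe that the reverse inequality is trivial.

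For the lower bound, I would first note that we may assume $E \ge E_*$: since for any channel $W^n$ and each coordinate $t$ the marginal satisfies $\mathbb{E}[e(X_t,Y_t)] \ge E_*$, we have $e_n(P_{X^n},W^n) \ge E_*$, so if $E < E_*$ the class $\mathcal{W}_a(E)$ is empty and there is nothing to prove. In (\ref{eq:rd-average-class-converse-2}), I would then take $W_2 = W_*$, $E_2 = E_*$, and any $\lambda \in (0,1)$. The outer constraint becomes $\lambda E_* + (1-\lambda)E_* = E_* \le E$ and is satisfied. Crucially, the inner constraint $\lambda D_1 + (1-\lambda) D_2 \le 0$ together with $D_1,D_2 \ge 0$ and $\lambda \in (0,1)$ forces $D_1 = D_2 = 0$ (this is exactly the place where the footnote's requirement $\lambda > 0$ is used, to prevent the non-side-information decoder from being allowed the trivial distortion $d_{\max}$).

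Next I would identify the resulting quantity $R_{HB}(0,0|W_*,W_*)$: it is the minimum rate of an HB code in which both decoders must reproduce $X^n$ losslessly from a side information sequence equal to the constant $(y_*,\ldots,y_*)$. Since constant side information is useless, the first decoder alone already demands $R \ge H(X)$; more generally, $R_{HB}(0,0|W_*,W_2) \ge H(X)$ for every $W_2$. Combining with (\ref{eq:rd-average-class-converse-2}) gives $R_a(0|E) \ge H(X)$.

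For the matching upper bound $R_a(0|E) \le H(X)$ one simply ignores the side information: any standard i.i.d.~lossless compression code for the source $X^n$ at rate $H(X)+\varepsilon$ reproduces $X^n$ exactly and so fulfills (\ref{eq:distortion-requirement}) with $D=0$ uniformly in $\bm{W}$. The only nontrivial step is the lower bound, and within it the main point is the observation that forcing $\lambda \in (0,1)$ collapses both HB distortions to zero, which is precisely what converts the HB bound into the unconditional entropy $H(X)$; everything else is routine bookkeeping.
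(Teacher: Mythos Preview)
Your overall approach matches the paper's: invoke (\ref{eq:rd-average-class-converse-2}), choose $\lambda \in (0,1)$ so that the inner constraint forces $D_1=D_2=0$, and then use that $R_{HB}(0,0|W_*,W_2)\ge H(X)$ because decoder~1 has constant (hence useless) side information. The upper bound is trivial, as you say.

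There is, however, a genuine error in your reduction to the case $E\ge E_*$. The claim that $\mathbb{E}[e(X_t,Y_t)]\ge E_*$ for every channel is false: $E_*=\min_y\sum_x P_X(x)e(x,y)$ is the minimum expected distortion of a \emph{constant} symbol $y$, not of an arbitrary channel output. In the binary Hamming example one has $E_*=1/2$, yet the identity channel gives $e(P_X,W)=0$; thus ${\cal W}_a(E)$ is nonempty for every $E\ge 0$ and the case $0<E<E_*$ cannot be dismissed. With your choice $E_2=E_*$, the outer constraint $\lambda E_*+(1-\lambda)E_*=E_*\le E$ then fails.

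The fix is immediate and is what the paper implicitly does: since you already noted $R_{HB}(0,0|W_*,W_2)\ge H(X)$ for \emph{every} $W_2$, simply take $E_2=0$ and any $W_2\in{\cal W}_1(0)$ (such a channel exists by the standing assumption that for each $x$ there is $y$ with $e(x,y)=0$), and pick $\lambda\in(0,1)$ small enough that $\lambda E_*\le E$. This is possible precisely because $E>0$, in line with the footnote. The rest of your argument then goes through unchanged.
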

This corollary indicates that the side information is completely useless when $D = 0$ and $E > 0$.
It should be emphasized that Corollary \ref{corollary:average-class-wyner-ziv-bound} does not give 
Corollary \ref{corollary-average-lossless} in general. This means that our result (\ref{eq:rd-average-class-converse-2})
is tighter than Corollary \ref{corollary:average-class-wyner-ziv-bound}.

Lastly, we show that  our bound (\ref{eq:rd-average-class-converse-2}) gives another trivial bound.
Assume that $E \ge E_*$. Then, we can set $\lambda = 1$ in (\ref{eq:rd-average-class-converse-2}) and have
\begin{eqnarray*}
R_a(D|E) &\ge& \max_{W_2 \in {\cal W}_1(E_2)} \min_{D_1 \le D} R_{HB}(D_1,D_2|W_*,W_2) \\
&=& \max_{W_2 \in {\cal W}_1(E_2)} R_{HB}(D,d_{\max}|W_*,W_2).
\end{eqnarray*}
Furthermore, for any side information channel $W_2$, it is apparent that 
\begin{eqnarray*}
R_{HB}(D,d_{\max}|W_*,W_2) \ge R(D)
\end{eqnarray*}
where $R(D)$ is the rate-distortion function for one-to-one lossy coding without side information.
Hence, if $E \ge E_*$, we have
\begin{eqnarray*}
R_a(D|E) \ge R(D).
\end{eqnarray*}
Since $R_a(D|E) \le R(D)$ always holds, we have the following corollary.
\begin{corollary}
If $E \ge E_*$, then we have
\begin{eqnarray*}
R_a(D|E) = R(D).
\end{eqnarray*}
\end{corollary}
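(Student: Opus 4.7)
The plan is to read off the corollary directly from the converse bound \eqref{eq:rd-average-class-converse-2} combined with the trivial achievability bound $R_a(D|E)\le R(D)$, which is what the paragraph immediately preceding the statement already lays out. First I would verify that under the hypothesis $E \ge E_*$ the choice $\lambda = 1$ is admissible in the maximization of \eqref{eq:rd-average-class-converse-2}: with $\lambda = 1$ the constraint $\lambda E_* + (1-\lambda)E_2 \le E$ reduces to $E_* \le E$, which is exactly the hypothesis, so the specialization is legitimate regardless of how $E_2$ and $W_2$ are chosen.

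Next I would plug $\lambda = 1$ into \eqref{eq:rd-average-class-converse-2}. The inner constraint $\lambda D_1 + (1-\lambda)D_2 \le D$ collapses to $D_1 \le D$, while $D_2$ ranges freely over $[0,d_{\max}]$. Hence the minimum over $(D_1,D_2)$ is attained (or bounded) by $D_1 = D$ and $D_2 = d_{\max}$, yielding
\begin{eqnarray*}
R_a(D|E) \ge \max_{W_2 \in {\cal W}_1(E_2)} R_{HB}(D,d_{\max}|W_*,W_2).
\end{eqnarray*}

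Now I would argue that $R_{HB}(D,d_{\max}|W_*,W_2) \ge R(D)$ for any $W_2$. Setting the second decoder's distortion to $d_{\max}$ makes that decoder vacuous (any reproduction meets the constraint, by the assumption $d(x,\hat x) \le d_{\max}$), so the HB rate is at least the rate required by decoder~$1$ alone. Since decoder~$1$'s side information is generated by $W_*$, whose output is the constant symbol $y_*$ and therefore statistically independent of $X$, decoder~$1$'s problem reduces to ordinary lossy coding of $X$ at distortion $D$, giving the bound $R(D)$. Combining this with the trivial upper bound $R_a(D|E) \le R(D)$, obtained by ignoring the side information entirely and using any standard rate-distortion code (which is valid for every $\bm{W}\in {\cal W}_a(E)$), closes the sandwich and yields the claimed equality.

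I do not expect a serious obstacle here: the entire content of the proof is a specialization of the already-established converse together with the trivially achievable rate. The only place requiring care is the reduction $R_{HB}(D,d_{\max}|W_*,W_2)\ge R(D)$, which rests on the definition of $W_*$ as a constant channel and on the fact that $d_{\max}$ trivializes the second distortion constraint; both are immediate from the definitions in Section~\ref{section:heegard-berger-problem}.
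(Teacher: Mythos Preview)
Your proposal is correct and follows exactly the same route as the paper: specialize \eqref{eq:rd-average-class-converse-2} at $\lambda=1$ (legitimate because $E\ge E_*$), reduce the inner minimum to $R_{HB}(D,d_{\max}|W_*,W_2)$, observe that this is at least $R(D)$ since $W_*$ renders decoder~1's side information useless and $d_{\max}$ trivializes decoder~2, and close with the trivial achievability $R_a(D|E)\le R(D)$. There is nothing to add.
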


\subsection{Binary Hamming Example}
\label{section:binary-hamming}

To provide some insight on our results, we consider the
binary Hamming example, i.e., we assume that
${\cal X} = {\cal Y} = \hat{{\cal X}} = \{0,1\}$, $P_X(0) = P_X(1) = \frac{1}{2}$, and
\begin{eqnarray*}
e(x,y) &=& \left\{ 
\begin{array}{ll}
0 & \mbox{if } x = y \\
1 & \mbox{else}
\end{array}
\right., \\
d(x,\hat{x}) &=& \left\{
\begin{array}{ll}
0 & \mbox{if } x = \hat{x} \\
1 & \mbox{else}
\end{array}
\right..
\end{eqnarray*}
In this section, we assume that $E \le \frac{1}{2}$.

We first consider the maximum distortion class.
In this case, the set ${\cal W}_1(E)$ can be parametrized by
two parameters $(\alpha, \beta)$ satisfying
\begin{eqnarray*}
\frac{\alpha + \beta}{2} \le E
\end{eqnarray*}
(see Fig.~\ref{Fig:binary-channel} and Fig.~\ref{Fig:admissible-channel}).

By the concavity of $\tilde{R}_{WZ}(D|W,E)$ with respect to $W$ (Lemma \ref{lemma:cocavity-pseudo})
and by the symmetry with respect to $\alpha$ and $\beta$, we have
\begin{eqnarray*}
\argmax_{W \in {\cal W}_1(E)} \tilde{R}_{WZ}(D|W,E) = \mbox{BSC}(E).
\end{eqnarray*}

Let $0,1 \in {\cal U}$ be constant functions that output $0$ or $1$ 
irrespective of $y$ and let $y$ be the function that output $y$ itself.
Similarly, let $\bar{y}$ be the function that outputs $y \oplus 1$.
In the binary Hamming case, ${\cal U} = \{0,1,y,\bar{y} \}$.
For $W^* = \mbox{BSC}(E)$, it is known that
\begin{eqnarray*}
R_{WZ}(D|W^*) = \min_{V \in {\cal V}(W^*,D)} \phi(V,W^*)
\end{eqnarray*}
is achieved by the test channel of the form
\begin{eqnarray*}
\hat{V}(u|x) = \left\{
\begin{array}{ll}
\lambda (1-q) & \mbox{if } u = x \\
\lambda q & \mbox{if } u = x \oplus 1 \\
(1 - \lambda) & \mbox{if } u = y
\end{array}
\right.
\end{eqnarray*}
for some $0 \le \lambda \le 1$ and $0 \le q \le \frac{1}{2}$
($\lambda$ represents the time sharing).
In this case, the distortion is given by
\begin{eqnarray*}
\lefteqn{
\lambda \sum_{\hat{x},x} P_X(x) V_q(\hat{x}|x) d(x,\hat{x}) }  \\
&& + (1-\lambda) \sum_{x,y} P_X(x) W^*(y|x) d(x,y)  \\
&=& \lambda \sum_{\hat{x},x} P_X(x) V_q(\hat{x}|x) d(x,\hat{x}) + (1 - \lambda) E \\
&\le& D,
\end{eqnarray*}
where $V_q = \rom{BSC}(q)$.
Since every channel $W \in {\cal W}_1(E)$ satisfies
\begin{eqnarray*}
\lefteqn{ \sum_{x,y} P_X(x) W(y|x) d(x,y) } \\
&=& \sum_{x,y} P_X(x) W(y|x) e(x,y) \\
&\le& E,
\end{eqnarray*}
we find that $\hat{V} \in {\cal V}(E,D)$. Thus, the matching condition
of Corollary \ref{corollary:matching-1} is satisfied for this binary Hamming example.

\begin{figure}[t]
\centering
\includegraphics[width=0.5\linewidth]{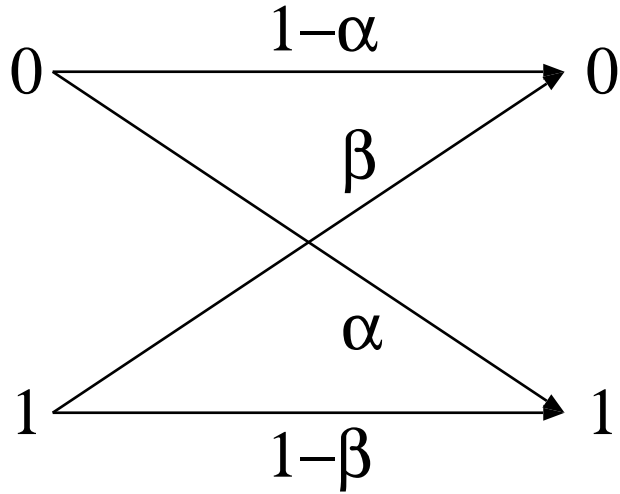}
\caption{Parametrization of binary channels.}
\label{Fig:binary-channel}
\end{figure}
\begin{figure}[t]
\centering
\includegraphics[width=0.5\linewidth]{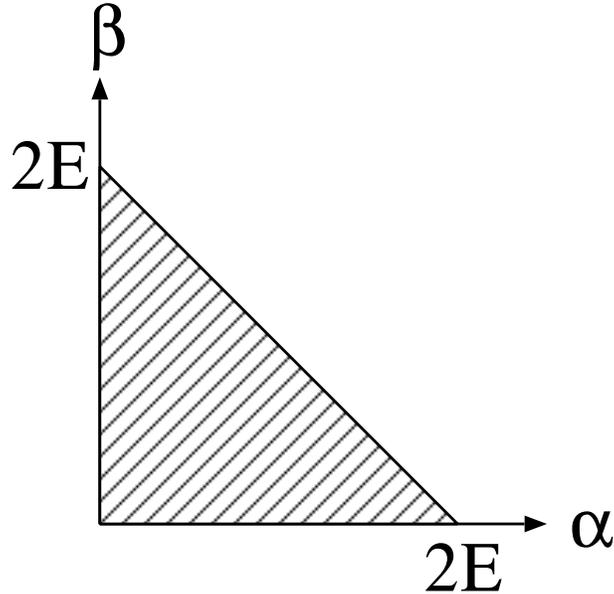}
\caption{The set of all channels in ${\cal W}_1(E)$.}
\label{Fig:admissible-channel}
\end{figure}

Next, we consider the average distortion class.
We evaluate the upper bound (\ref{eq:rd-average-class}).
We first fix $W^*$ to be $\mbox{BSC}(E)$. Note that 
\begin{eqnarray}
\min_{V \in {\cal V}(E,D)} I(P_X,V) \ge \min_{V \in {\cal V}(W^*,D)} I(P_X,V).
\label{eq:objective-of-optimization}
\end{eqnarray} 
For a test channel $V \in {\cal V}(W^*,D)$, let $\bar{V}$ be a test channel 
such that $\bar{V}(u | x) = V( u \oplus 1 | x \oplus 1)$ for $u \in \bar{{\cal U}}$
and $\bar{V}(u|x) = V(u|x \oplus 1)$ for $u \in \{ y, \bar{y} \}$.
Then, by the symmetry of the BSC and the source $P_X$, we have $\bar{V} \in {\cal V}(W^*,D)$ and
$I(P_X, V) = I(P_X,\bar{V})$.
 By the convexity of the mutual information for channel, we have
\begin{eqnarray*}
I(P_X,\tilde{V}) \le \frac{1}{2} I(P_X,V) + \frac{1}{2} I(P_X,\bar{V}),
\end{eqnarray*}
where $\tilde{V} = \frac{1}{2} V + \frac{1}{2} \bar{V}$.
This means that the minimum in the right hand side of (\ref{eq:objective-of-optimization})
is achieved by a symmetric test channel, i.e., 
$V(u|x) = V(u \oplus 1 | x \oplus 1)$ for $u \in \bar{{\cal U}}$
 and $V(u | x) = V(u | x \oplus 1)$ for $u \in \{y, \bar{y} \}$. 
 Furthermore, for $E \le \frac{1}{2}$, we can assume that 
 $V(\bar{y} |x) = 0$ because using $\bar{y}$ only makes the distortion larger.
 We also note that such a symmetric test channel satisfies $V \in {\cal V}(E,D)$.
 Thus, the equality in (\ref{eq:objective-of-optimization}) actually holds.
Consequently, the upper bound on $R_a(D|E)$ in this example is the time sharing between
 the ordinary rate-distortion function and the distortion that can be achieved only
 by the estimation, i.e., the point $(E,0)$.

\subsection{Discussion on Universality}

In this section, we discuss on the definitions of the
universal Wyner-Ziv coding. We also discuss the relation
between the universal Wyner-Ziv coding and the Heegard-Berger
problem. 

Let us consider the binary Hamming case as in the previous section.
Let $X$ be the uniform random variable on $\{0,1\}$. Let $W_1$ be 
the binary channel in Fig.~\ref{Fig:binary-channel} with
$\alpha = 2 E$ and $\beta = 0$, and let $W_2$ be the 
binary channel in Fig.~\ref{Fig:binary-channel} with $\alpha = 0$ and $\beta = 2 E$.
Obviously, the Wyner-Ziv rate-distortion functions for $W_1$ and $W_2$ coincide, i.e.,
\begin{eqnarray*}
R_{WZ}(D|W_1) = R_{WZ}(D|W_2). 
\end{eqnarray*}
It should be also noted that ${\cal W}_1(E)$ is the convex hull of
the set $\{ W_1, W_2 \}$.

As we have mentioned in Section \ref{section:problem},
in the ordinary definition of the universality,
we require that there exists a universal code that works well for 
every ${\cal W}_{WZ}(R,D)$ instead of ${\cal W}_1(E)$.
If we set $R = R_{WZ}(D|W_1) = R_{WZ}(D|W_2)$, then we have
\begin{eqnarray*}
W_1, W_2 \in {\cal W}_{WZ}(R,D).
\end{eqnarray*}
Thus, at least, we have to construct a code that is universal for both
$W_1$ and $W_2$, which can be regarded as a special case of
the Heegard-Berger problem \cite{heegard:85}.
The rate-distortion function $R_{HB}(D,D|W_1,W_2)$
is not known, but we have a trivial lower bound
\begin{eqnarray}
\lefteqn{ R_{HB}(D,D|W_1,W_2) } \\
&\ge& R_{WZ}(D|W_1) = R_{WZ}(D|W_2). 
\label{eq:trivial-lower-bound}
\end{eqnarray}
The equality in (\ref{eq:trivial-lower-bound}) is a required condition
such that the universal coding in the sense of ${\cal W}_{WZ}(R,D)$ to be possible.
In other word, if the strict inequality holds in (\ref{eq:trivial-lower-bound}),
this means that the universal coding in the sense of ${\cal W}_{WZ}(R,D)$ is 
impossible. Showing whether the equality holds or not is an important open problem.

A straightforward upper bound on 
$R_{HB}(D,D|W_1,W_2)$ can be derived as follows.
Let $V_s \in {\cal P}({\cal U}|{\cal X})$ be a symmetric test channel such that
\begin{eqnarray*}
V_s(0|0) &=& V_s(1|1), \\
V_s(y|0) &=& V_s(y|1), \\
V_s(\bar{y}|0) &=& V_s(\bar{y}|1) = 0.
\end{eqnarray*} 
Then, by taking $V_s(0|0)$ appropriately, we have
\begin{eqnarray*}
V_s \in {\cal V}(W_1,D) \cap {\cal V}(W_2,D).
\end{eqnarray*}
The achievability of 
\begin{eqnarray*}
\phi(V_s, W_1) = \phi(V_s, W_2)
\end{eqnarray*}
can be also derived from the known upper bound in \cite{heegard:85}.
Thus, we have
\begin{eqnarray*}
\lefteqn{ R_{HB}(D,D|W_1,W_2) } \\
&\le& \tilde{R}_{HB}(D,D|W_1,W_2) \\
&:=& \min_{V_s \in {\cal V}(W_1,D) \cap {\cal V}(W_2,D)} \phi(V_s, W_1) 
\end{eqnarray*}
Numerical calculations of $\tilde{R}_{HB}(D,D|W_1,W_2)$
and $R_{WZ}(D|W_1)$ are compared in Fig.~\ref{Fig:comparison-of-rd}.
For comparison, we also plotted $R_m(D|E)$ in the figure.
As we can find from the figure, $R_m(D|E)$ is much larger than
$\tilde{R}_{HB}(D,D|W_1,W_2)$. This is because 
${\cal W}_1(E)$ involves $\mbox{BSC}(E)$.

\begin{figure}[t]
\centering
\includegraphics[width=\linewidth]{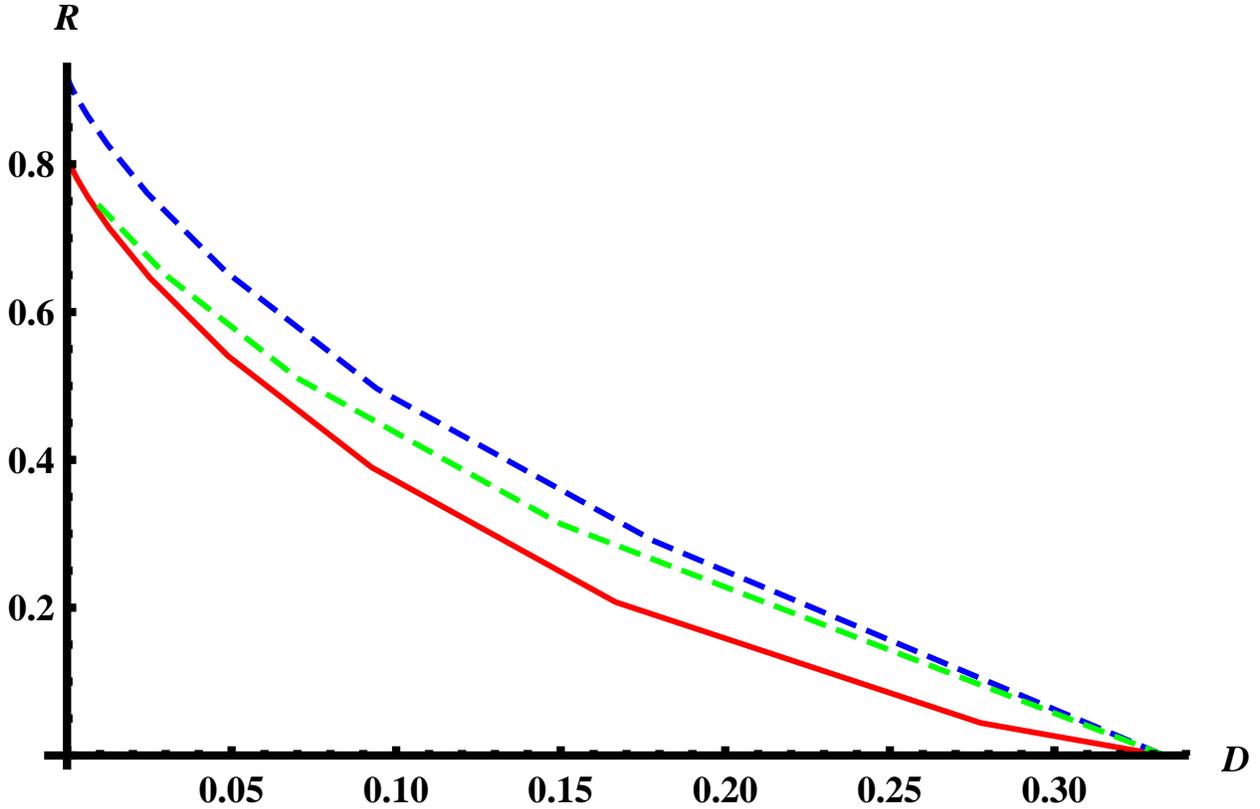}
\caption{Comparison among $\tilde{R}_{HB}(D,D|W_1,W_2)$,
$R_{WZ}(D|W_1)$, and $R_m(D|E)$. The red solid line is $R_{WZ}(D|W_1)$.
The green dashed line is $\tilde{R}_{HB}(D,D|W_1,W_2)$.
The blue dashed line is $R_m(D|E)$.}
\label{Fig:comparison-of-rd}
\end{figure}

\section{Proof of Theorem \ref{theorem:main}}
\label{section:proof-of-theorem:main}

\subsection{Proof of Converse Part}
\label{subset:converse-proof-maximum}

First we consider the case with $E=0$. Let $W \in {\cal W}_1(0)$.
Then, from the definition of ${\cal W}_m(E)$, we have $\{ W^{\times n} \}_{n=1}^{\infty} \in {\cal W}_m(0)$,
which implies 
\begin{eqnarray*}
R_m(D|0) \ge \max_{W \in {\cal W}_1(0)} R_{WZ}(D|W).
\end{eqnarray*}

Next, we consider the case with $E>0$ and $D>0$. For any $0 < \delta < E$, let $W \in {\cal W}_1(E-\delta)$. Then,
from the definition of ${\cal W}_m(E)$, we have $\{ W^{\times n} \}_{n=1}^{\infty} \in {\cal W}_m(E)$,
which implies 
\begin{eqnarray*}
R_m(D|E) \ge \max_{W \in {\cal W}_1(E-\delta)} R_{WZ}(D|W).
\end{eqnarray*}
Since this inequality holds for arbitrary $0 < \delta < E$
and $R_{WZ}(D|W)$ is continuous with respect to $W$ for $D>0$,
which will be proved in Appendix \ref{appendix:continuity-of-RWZ},
we have (\ref{eq:converse-1}).

When $E > 0$ and $D=0$, for $\epsilon >0$, we first prove
\begin{eqnarray*}
R_m(0|E) &\ge& \max_{W \in {\cal W}_1(E-\delta)} R_{WZ}(0|W) \\
&\ge& \max_{W \in {\cal W}_1(E-\delta)} R_{WZ}(\epsilon|W).
\end{eqnarray*}
Then, by the continuity argument, we have
\begin{eqnarray*}
R_m(0|E) \ge  \lim_{\epsilon \downarrow 0} \max_{W \in {\cal W}_1(E)} R_{WZ}(\epsilon|W).
\end{eqnarray*}
\qed

\subsection{Proof of Direct Part}
\label{subsection:direct-part}

Note that the function $\phi(\cdot, W)$ is a convex function for fixed $W$,
$\phi(V,\cdot)$ is a concave function for fixed $V$, and 
${\cal W}_1(E)$ and ${\cal V}(E,D)$ are convex sets.
Thus, (\ref{eq:direct-2}) is derived from (\ref{eq:direct-1})
by applying the saddle point theorem \cite{bertsekas:03}.
We prove (\ref{eq:direct-1}) by three steps.
First, we prove that there exists a universal code for i.i.d. channels.
Then, we show that there exists a randomized universal code for permutation invariant channels.
Finally, we de-randomize the randomized universal code by using the technique of \cite{ahlswede:80,ziv:84}.

\subsubsection{Code for i.i.d. Channel}
\label{subsection-iid}

In this section, we construct a universal Wyner-Ziv code 
for a fixed test channel such that it works well for
every $W \in {\cal W}_1(E) \cap {\cal P}_n({\cal Y}|{\cal X})$.
We construct a universal Wyner-Ziv code by using the 
output statistics of random binning argument 
recently introduced by \cite{yassaee:12}.
We note that a universal Wyner-Ziv code can be also
constructed from the coding method in \cite{kelly:12}.

Let us fix $V \in {\cal V}(D,E)$.
We use two kinds of bin codings $f_n : {\cal U}^n \to {\cal S}_n$
and $g_n: {\cal U}^n \to {\cal L}_n$. Let $F_n$ and $G_n$ be
random bin codings. 
For arbitrary small $\delta > 0$, 
let $R_f, R_g \ge 0$ be the real numbers such that
\begin{eqnarray}
R_f &=& H(U|X) - \delta, \label{eq:pa-rate} \\
R_g &=& \max_{W \in {\cal W}_1(E)} \phi(V,W) + 2 \delta \\
&=& \max_{W \in {\cal W}_1(E)} I(U;X|Y) + 2 \delta. 
\end{eqnarray}
Since
\begin{eqnarray*}
I(U;X|Y) &=& H(U|Y) - H(U|X,Y) \\
&=& H(U|Y) - H(U|X), 
\end{eqnarray*}
we have
\begin{eqnarray}
R_f + R_g = \max_{W \in {\cal W}_1(E)} H(U|Y) + \delta.
\label{eq:rate-sum}
\end{eqnarray}
Let $|{\cal S}_n| = \lfloor 2^{n R_f} \rfloor$ and 
$|{\cal L}_n| = \lceil 2^{n R_g} \rceil$.

From (\ref{eq:rate-sum}), we find that the sum rate $R_f + R_g$ is sufficiently
large for the Slepian-Wolf coding.
We use the following lemma on universal Slepian-Wolf coding.
\begin{lemma}
\label{lemma:universal-sw}
For sufficiently large $n$, there exists $\mu_1 > 0$ and a universal decoder
$\kappa_n:{\cal Y}^n \times {\cal S}_n \times {\cal L}_n \to {\cal U}^n$ such that
\begin{eqnarray*}
\mathbb{E}_{F_n G_n}\left[ P_{err}(F_n,G_n,W) \right] \le 2^{- \mu_1 n}
\end{eqnarray*}
for every $W \in {\cal W}_1(E) \cap {\cal P}_n({\cal Y}|{\cal X})$, where 
$P_{err}(F_n,G_n,W)$ is the error probability of the Slepian-Wolf coding for
channel $W$ when the bin codings $(F_n,G_n)$ are used.
\end{lemma}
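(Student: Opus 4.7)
The approach is to view $(F_n, G_n)$ as a single random bin map of rate $R_f + R_g$ and use the minimum empirical conditional entropy decoder, which is universal by construction. Explicitly, I would define
\begin{eqnarray*}
\kappa_n(y^n, s, \ell) := \argmin \bigl\{ \hat H(u^n | y^n) : u^n \in F_n^{-1}(s) \cap G_n^{-1}(\ell) \bigr\},
\end{eqnarray*}
where $\hat H(u^n | y^n)$ is the conditional empirical entropy computed from the joint type of $(u^n, y^n)$, with ties broken lexicographically. Since $\kappa_n$ depends neither on $V$ nor on $W$, universality reduces to an error-probability bound that holds uniformly over the class.

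For the error analysis, I would follow the standard Csisz\'ar-type argument. Fix $W \in {\cal W}_1(E) \cap {\cal P}_n({\cal Y}|{\cal X})$ and set $P_{UY}^{(W)}(u,y) := \sum_x P_X(x) V(u|x) W(y|x)$. Partition the error event by the joint type $Q_{UY}$ of the true $(U^n, Y^n)$ and the conditional type $\tilde V$ of a would-be competitor $\tilde u^n$ given $y^n$. The standard type-class estimate gives $\Pr\{(U^n, Y^n) \in T_Q\} \le 2^{-n D(Q \| P_{UY}^{(W)})}$, while the expected number of $\tilde u^n$ of conditional type $\tilde V$ given $y^n$ that share the same double-bin as $U^n$ is at most $|T_{\tilde V}(y^n)| \cdot 2^{-n(R_f + R_g)} \le 2^{n(H(\tilde V | Q_Y) - R_f - R_g)}$. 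Since the minimum-entropy decoder errs only when some competitor satisfies $H(\tilde V | Q_Y) \le H_Q(U|Y)$, a union bound yields
\begin{eqnarray*}
\mathbb{E}_{F_n G_n}\bigl[ P_{err}(F_n,G_n,W) \bigr] \le (n+1)^c \exp\Bigl\{ - n \min_{Q_{UY}} \bigl[ D(Q \| P_{UY}^{(W)}) + |R_f + R_g - H_Q(U|Y)|^+ \bigr] \Bigr\}
\end{eqnarray*}
for some constant $c$ depending only on $|{\cal U}|$ and $|{\cal Y}|$.

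Now (\ref{eq:rate-sum}) guarantees $R_f + R_g \ge H(U|Y) + \delta$ under $P_{UY}^{(W)}$ for every $W \in {\cal W}_1(E)$, so the exponent on the right is strictly positive for each such $W$. Compactness of ${\cal W}_1(E)$, together with continuity of $W \mapsto P_{UY}^{(W)}$ and of the functional $Q \mapsto D(Q \| P_{UY}^{(W)}) + |R_f + R_g - H_Q(U|Y)|^+$, implies that the infimum of this exponent over $W \in {\cal W}_1(E)$ is strictly positive. Choosing $\mu_1$ strictly below this infimum and $n$ large enough to absorb the polynomial prefactor $(n+1)^c$ gives the desired uniform bound.

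The main obstacle will be verifying uniformity: one must exclude the possibility that, as $W$ ranges over the class, the minimizing $Q$ approaches a distribution with $H_Q(U|Y) = R_f + R_g$ while $D(Q \| P_{UY}^{(W)}) \to 0$, which would collapse the exponent. This is ruled out by the built-in margin $\delta$ in the choice of the sum rate (\ref{eq:rate-sum}), together with continuity of $W \mapsto H(U|Y)$ on the compact set ${\cal W}_1(E)$; these produce a uniform lower bound on the conditional-entropy gap, and hence a uniform positive error exponent $\mu_1$.
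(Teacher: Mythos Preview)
Your proposal is correct and is essentially what the paper intends: the paper's own proof merely points to Csisz\'ar's universal Slepian--Wolf argument with the two modifications (random binning instead of linear codes, ensemble-average error), and your write-up is precisely that argument---minimum empirical conditional entropy decoding, type-based error analysis, and a uniform positive exponent obtained from the built-in margin $\delta$ in (\ref{eq:rate-sum}) together with compactness of ${\cal W}_1(E)$.
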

\begin{proof}
The lemma is proved exactly in the same manner as \cite{csiszar:82}.
A few modifications are that we use the random bin coding instead of
the random linear coding\footnote{we can also use the random linear coding
instead of the random bin coding because Lemma \ref{lemma:privacy-amplification} 
holds under the condition that $f_n$ is chosen from a universal hash family and
the random linear coding ensemble is a universal hash family.}, and that we evaluate
the ensemble average of the error probability.
\end{proof}

From (\ref{eq:pa-rate}), we find that the rate $R_f$ is sufficiently small
to generate the uniform random variable that is independent of $X^n$.
We use the privacy amplification lemma (Lemma \ref{lemma:privacy-amplification})
described in Appendix \ref{proof-of-lemma:privacy-amplification}.

We construct a code as follows.
Let 
\begin{eqnarray*}
P_{\hat{U}^n|Y^n S_n L_n}(u^n | y^n,s_n,\ell_n) = \bol{1}[u^n = \kappa_n(y^n,s_n,\ell_n)]
\end{eqnarray*}
be the distribution describing the Slepian-Wolf decoder.
Let
\begin{eqnarray*}
\lefteqn{ P_{S_n L_n U^n X^n Y^n \hat{U}^n}(s_n,\ell_n,u^n,x^n,y^n,\hat{u}^n) } \\
&=& P_{S_n X^n}(s_n,x^n) P_{U^n|S_n X^n}(u^n|s_n,x^n) P_{L_n|U^n}(\ell_n|u^n) \\
&& P_{Y^n|X^n}(y^n|x^n) P_{\hat{U}^n|Y^n S_n L_n}(\hat{u}^n|y^n,s_n,\ell_n)
\end{eqnarray*}
and
\begin{eqnarray*}
\lefteqn{ \hat{P}_{\bar{S}_n L_n U^n X^n Y^n \hat{U}^n}(s_n,\ell_n,u^n,x^n,y^n,\hat{u}^n) } \\
&=& P_{\bar{S}_n}(s_n) P_{X^n}(x^n) P_{U^n|S_n X^n}(u^n|s_n,x^n) P_{L_n|U^n}(\ell_n|u^n) \\
&& P_{Y^n|X^n}(y^n|x^n) P_{\hat{U}^n|Y^n S_n L_n}(\hat{u}^n|y^n,s_n,\ell_n).
\end{eqnarray*}
The distribution $P_{S_n L_n U^n X^n Y^n \hat{U}^n}$ describes a virtual
coding scheme in which the encoder sends both $F_n(U^n)$ and $G_n(U^n)$.
The distribution $\hat{P}_{\bar{S}_n L_n U^n X^n Y^n \hat{U}^n}$ describes a real
coding scheme in which the encoder sends only $G_n(U^n)$ and uses the common randomness
$\bar{S}_n$ that is shared with the decoder.
Note that $P_{U^n|S_n X^n}(u^n|s_n,x^n)$ is a randomized quantizer,
which is derived from the bin coding $f_n$ and the test channel $P_{U|X}$ via
$P_{U^n X^n S_n}$.
From Lemma \ref{lemma:privacy-amplification} and the fact that the variational
distance does not increase by data processing or marginalization, we have
\begin{eqnarray*}
\mathbb{E}_{F_nG_n}\left[ \| \hat{P}_{\bar{S}_n U^n X^n Y^n \hat{U}^n} - P_{S_n U^n X^n Y^n \hat{U}^n} \| \right] 
\le 2^{- \mu_2 n}
\end{eqnarray*}
for some $\mu_2 > 0$.
By the large deviation bound such as the Bernstein inequality, there exists $\mu_3 > 0$ such that
\begin{eqnarray}
P_{U^n X^nY^n}(\{ d_n(x^n,u^n(y^n)) > D + \delta\}) \le 2^{- \mu_3 n}.
\label{eq:distortion-large-deviation}
\end{eqnarray}
It should be noted that the bound (\ref{eq:distortion-large-deviation}) is uniform
with respect to the channel $W$.
By Lemma \ref{lemma:universal-sw}, we have
\begin{eqnarray*}
\lefteqn{ \mathbb{E}_{F_n G_n}[ P_{S_n U^n X^n Y^n \hat{U}^n}(\{ d_n(x^n, \hat{u}^n(y^n)) > D + \delta \}) ] } \\
&\le& \mathbb{E}_{F_n G_n}[ P_{S_n U^n X^n Y^n \hat{U}^n}(\{ d_n(x^n, u^n(y^n)) > D + \delta  \\
&& ~~~~   \mbox{ or } u^n \neq \hat{u}^n \}) ] \\
&\le& 2^{- \mu_1 n} + 2^{-\mu_3 n}.
\end{eqnarray*}
Since 
\begin{eqnarray*}
\lefteqn{ \hat{P}_{\bar{S}_n U^n X^n Y^n \hat{U}^n}({\cal A}) - P_{S_n U^n X^n Y^n \hat{U}^n}({\cal A}) } \\
&\le& \| \hat{P}_{\bar{S}_n U^n X^n Y^n \hat{U}^n} - P_{S_n U^n X^n Y^n \hat{U}^n} \| 
\end{eqnarray*}
for any set ${\cal A}$, we have
\begin{eqnarray*}
&& \hspace{-10mm} \mathbb{E}_{F_n G_n}[ \hat{P}_{\bar{S}_n U^n X^n Y^n \hat{U}^n}(\{ d_n(x^n, \hat{u}^n(y^n)) > D + \delta \})] \\
&\le& 3 \cdot 2^{-n \min \mu_i}.
\end{eqnarray*}
Since 
\begin{eqnarray*}
|{\cal W}_1(E) \cap {\cal P}_n({\cal Y}|{\cal X}) | \le (n+1)^{|{\cal X}||{\cal Y}|},
\end{eqnarray*}
there exists at least one realization $(f_n,g_n,s_n)$
of $(F_n,G_n,S_n)$ such that
\begin{eqnarray*}
&& \hspace{-10mm} \hat{P}_{U^n X^n Y^n \hat{U}^n | \bar{S}_n}(\{ d_n(x^n, \hat{u}^n(y^n)) > D + \delta \} | s_n)  \\
&\le& 3 (n+1)^{|{\cal X}||{\cal Y}|}2^{-n \min \mu_i}.
\end{eqnarray*}
for every $W \in {\cal W}_1(E) \cap {\cal P}_n({\cal Y}|{\cal X})$. 
Furthermore, let $K_n$ be a random variable that simulate the randomized
quantizer $P_{U^n|S_n X^n}$. Then, we can also eliminate this 
randomness in a similar manner as above.

In summary, we have shown the following.
\begin{lemma}
\label{lemma:universal-for-iid}
For any $V \in {\cal V}(E,D)$ and any $\delta > 0$, there exists a universal code
$(\varphi_n,\psi_n)$ and a constant $\mu > 0$ such that
\begin{eqnarray*}
\frac{1}{n} \log |{\cal M}_n| \le \max_{W \in {\cal W}_1(E)} \phi(V,W) +  2 \delta
\end{eqnarray*}
and
\begin{eqnarray*}
\Pr\{ d_n(X^n, \psi_n(\varphi_n(X^n), Y^n)) > D + \delta \} \le 2^{- \mu n}
\end{eqnarray*}
for every $W \in {\cal W}_1(E) \cap {\cal P}_n({\cal Y}|{\cal X})$ provided that $n$ is sufficiently large.
\end{lemma}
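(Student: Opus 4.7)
The plan is to build a randomized Wyner--Ziv code by combining two independent random binnings of the quantized sequence $U^n$ in the spirit of the output-statistics-of-random-binning framework of Yassaee \emph{et al.}: a short binning $F_n$ of rate $R_f \approx H(U|X)$ that will serve as common randomness between encoder and decoder, and a longer binning $G_n$ of rate $R_g \approx \max_{W\in{\cal W}_1(E)} I(U;X|Y)$ that will actually be transmitted. The encoder first draws $U^n$ from the conditional distribution induced by the random quantizer determined by the test channel $V$ together with $F_n$, and then sends $G_n(U^n)$; the decoder uses $Y^n$, the transmitted bin, and the shared common randomness to run a universal Slepian--Wolf decoder $\kappa_n$ to reconstruct $\hat U^n$, and finally outputs $\hat x_t = \hat u_t(y_t)$ letter-wise.

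First I would verify that the chosen rates are high enough for a universal Slepian--Wolf decoder to succeed with exponentially small error, uniformly over $W \in {\cal W}_1(E)\cap{\cal P}_n({\cal Y}|{\cal X})$: the sum rate satisfies $R_f + R_g = \max_W H(U|Y) + \delta$, which is exactly what a Csisz\'ar-style minimum-entropy (or maximum-$P_{XU|Y}$-likelihood) decoder needs to handle every channel type simultaneously. Next I would apply the privacy amplification lemma (Lemma \ref{lemma:privacy-amplification}) to the short binning $F_n$ of rate $R_f = H(U|X)-\delta < H(U|X)$; this makes $F_n(U^n)$ nearly uniform on ${\cal S}_n$ and nearly independent of $X^n$ in variational distance, which is precisely what lets us replace the auxiliary bin output $S_n$ by genuine external common randomness $\bar S_n$ in the description of the code, at the cost of an exponentially small total-variation penalty that survives any subsequent data processing.

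The distortion analysis relies crucially on the hypothesis $V \in {\cal V}(E,D)$, which guarantees $d(V,W) \le D$ for \emph{every} $W \in {\cal W}_1(E)$. Since the per-letter distortion $d(X_t, U_t(Y_t))$ is bounded by $d_{\max}$ and has mean at most $D$ regardless of $W$, the Bernstein/Hoeffding inequality gives $\Pr\{d_n(X^n,U^n(Y^n)) > D+\delta\} \le 2^{-\mu_3 n}$ uniformly in $W$ and in the realizations of the binnings. Combining the universal Slepian--Wolf reliability bound, the privacy-amplification closeness bound, and this distortion concentration through a triangle inequality for variational distance and a union bound yields an expected distortion-excess probability of at most $3\cdot 2^{-\mu n}$ for every individual channel in the class.

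The final step is derandomization. Because $|{\cal W}_1(E)\cap{\cal P}_n({\cal Y}|{\cal X})| \le (n+1)^{|{\cal X}||{\cal Y}|}$ is only polynomial in $n$, a further union bound over channel types followed by a pigeonhole/counting argument extracts a single deterministic realization of $(F_n, G_n, \bar S_n$, quantizer randomness$)$ that meets the distortion target simultaneously for every $W \in {\cal W}_1(E)\cap{\cal P}_n({\cal Y}|{\cal X})$, giving the code $(\varphi_n, \psi_n)$ claimed by the lemma. The main obstacle I foresee is the careful bookkeeping of the three approximation steps: the variational-distance estimate coming out of privacy amplification has to be small enough to survive the passage from the ``virtual'' distribution (where $S_n$ is a bin output and a two-bin Slepian--Wolf analysis is available) to the ``real'' distribution (where $\bar S_n$ is independent common randomness and only $G_n(U^n)$ is transmitted) without spoiling either the Slepian--Wolf reliability or the distortion concentration, and this forces the precise choice of the $\pm\delta$ rate gaps and of the order in which the bounds are assembled.
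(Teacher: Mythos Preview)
Your proposal is correct and follows essentially the same route as the paper: the output-statistics-of-random-binning construction with the short bin $F_n$ at rate $H(U|X)-\delta$ handled by the privacy amplification lemma, the long bin $G_n$ at rate $\max_{W\in{\cal W}_1(E)}\phi(V,W)+2\delta$ so that the sum rate enables a universal Slepian--Wolf decoder, the uniform distortion concentration via Bernstein using $V\in{\cal V}(E,D)$, and the final derandomization over the polynomially many channel types. The ``obstacle'' you flag---the passage from the virtual distribution $P_{S_nU^nX^nY^n\hat U^n}$ to the real one $\hat P_{\bar S_nU^nX^nY^n\hat U^n}$ via the privacy-amplification variational bound---is exactly how the paper organizes the argument.
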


\subsubsection{Code for Permutation Invariant Channel}

In Section \ref{subsection-iid}, we constructed a universal Wyner-Ziv code 
$(\varphi_n,\psi_n)$ for a fixed test channel such that it works well for
every $W \in {\cal W}_1(E) \cap {\cal P}_n({\cal Y}|{\cal X})$.
In this section, we use this code to the channel in ${\cal W}_m(E)$.
Let $\pi_n$ be random permutation on $\{1,\ldots,n \}$.
We first apply the random permutation to the sequence $(X^n,Y^n)$
and then use $(\varphi_n,\psi_n)$. It should be noted that the encoder and
the decoder agree with a realization of the random permutation in this section.
We denote 
\begin{eqnarray*}
P_X^n \cdot W^n(x^n,y^n) = P_X^n(x^n) W^n(y^n|x^n).
\end{eqnarray*}
Note that
\begin{eqnarray*}
\lefteqn{ \mathbb{E}_{\pi_n}\left[ P_X^n \cdot W^n(\pi_n(x^n), \pi_n(y^n)) \right] } \\
&=& \mathbb{E}_{\pi_n}\left[ P_X^n(\pi_n(x^n) W^n(\pi_n(y^n) | \pi_n(x^n)) \right] \\
&=& P_X^n(x^n) \mathbb{E}_{\pi_n} \left[ W^n(\pi_n(y^n) | \pi_n(x^n)) \right],
\end{eqnarray*}
and we consider the average performance with respect to the permutation.
Thus, without loss of generality, we can assume that $W^n$ is permutation
invariant, i.e., $W^n(y^n|x^n) = W^n(\tilde{y}^n|\tilde{x}^n)$ if
$P_{x^n y^n} = P_{\tilde{x}^n \tilde{y}^n}$.

\begin{lemma}
\label{lemma:permutation-invariant-probability}
Let ${\cal A}_n \subset {\cal X}^n \times {\cal Y}^n$. Suppose that 
\begin{eqnarray*}
P_X^n \cdot \bar{W}^{\times n}({\cal A}_n^c) \le \bar{\varepsilon} 
\end{eqnarray*}
for every $\bar{W} \in {\cal W}_1(E + \delta e_{\max}) \cap {\cal P}_n({\cal Y}|{\cal X})$.
Then, for any conditional type $\bar{W} \in \bar{{\cal W}}_n(T_{X,\delta},E)$, we have
\begin{eqnarray*}
\lefteqn{ \sum_{x^n \in T_{X,\delta}} P_X^n(x^n) \bol{1}[\bar{W} \in {\cal P}_n({\cal Y}|P_{x^n})] } \\ 
&&	\times \sum_{y^n \in T_{\bar{W}}(x^n)} \frac{1}{|T_{\bar{W}}(x^n)|} \bol{1}[(x^n,y^n) \in {\cal A}_n^c] \\
	&\le& (n+1)^{|{\cal X}| |{\cal Y}|} \bar{\varepsilon}. 
\end{eqnarray*}
\end{lemma}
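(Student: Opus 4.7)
The plan is to reduce the claim to the hypothesis via a standard type-counting identity. The key observation is that for any $y^n \in T_{\bar{W}}(x^n)$, we have $\bar{W}^{\times n}(y^n|x^n) = 2^{-n H(\bar{W}|P_{x^n})}$, so the channel probability is constant on the conditional type class. Combined with the standard type bound $\bar{W}^{\times n}(T_{\bar{W}}(x^n)|x^n) \ge (n+1)^{-|{\cal X}||{\cal Y}|}$, which holds whenever $\bar{W}$ is compatible with $P_{x^n}$, this yields
\begin{equation*}
\frac{1}{|T_{\bar{W}}(x^n)|} \le (n+1)^{|{\cal X}||{\cal Y}|} \bar{W}^{\times n}(y^n|x^n) \mbox{ for every } y^n \in T_{\bar{W}}(x^n),
\end{equation*}
which converts the uniform distribution on $T_{\bar{W}}(x^n)$ into the i.i.d.~channel distribution $\bar{W}^{\times n}(\cdot|x^n)$, up to a polynomial factor in $n$.

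Before invoking the hypothesis I would verify that $\bar{W}$ belongs to the channel class ${\cal W}_1(E+\delta e_{\max})$, since that is the class over which the bound $P_X^n \cdot \bar{W}^{\times n}({\cal A}_n^c) \le \bar{\varepsilon}$ is assumed. By the defining property of $T_{X,\delta}$, any $x^n \in T_{X,\delta}$ has empirical distribution $P_{x^n}$ that is $\delta$-close to $P_X$ in variational distance, and boundedness of $e(\cdot,\cdot)$ by $e_{\max}$ gives
\begin{equation*}
|e(P_{x^n}, \bar{W}) - e(P_X, \bar{W})| \le \delta e_{\max},
\end{equation*}
so the distortion constraint $e(P_{x^n}, \bar{W}) \le E$ encoded in $\bar{W} \in \bar{{\cal W}}_n(T_{X,\delta},E)$ upgrades to $e(P_X,\bar{W}) \le E + \delta e_{\max}$, placing $\bar{W}$ in ${\cal W}_1(E+\delta e_{\max})$ as required.

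With these two ingredients in hand, the proof would be immediate: substitute the key inequality into the left-hand side, extend the inner sum from $y^n \in T_{\bar{W}}(x^n)$ to all of ${\cal Y}^n$ (only adding nonnegative terms), and drop both the compatibility indicator $\bol{1}[\bar{W} \in {\cal P}_n({\cal Y}|P_{x^n})]$ and the restriction $x^n \in T_{X,\delta}$ from the outer sum (again only adding nonnegative terms). What remains is exactly $(n+1)^{|{\cal X}||{\cal Y}|} \, P_X^n \cdot \bar{W}^{\times n}({\cal A}_n^c)$, which is at most $(n+1)^{|{\cal X}||{\cal Y}|} \bar{\varepsilon}$ by the hypothesis. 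No serious obstacle is anticipated; the only step with any content is the translation from a conditional-type distortion bound (on input types in $T_{X,\delta}$) to the channel distortion bound (on $P_X$ itself) via the $\delta e_{\max}$ slack, and the remainder is routine type counting.
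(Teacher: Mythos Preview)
Your proposal is correct and follows essentially the same approach as the paper: both arguments rest on the standard type bound $\bar{W}^{\times n}(T_{\bar{W}}(x^n)|x^n) \ge (n+1)^{-|{\cal X}||{\cal Y}|}$ together with the constancy of $\bar{W}^{\times n}(\cdot|x^n)$ on the conditional type class, and both invoke the $\delta e_{\max}$ slack (the paper's (\ref{eq:typicality-implication})) to place $\bar{W}$ in ${\cal W}_1(E+\delta e_{\max})$. The only difference is cosmetic: the paper starts from $\bar{\varepsilon} \ge P_X^n \cdot \bar{W}^{\times n}({\cal A}_n^c)$ and restricts downward, whereas you start from the target sum and relax upward.
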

\begin{proof}
For any $\bar{W} \in {\cal P}_n({\cal Y}|P_{x^n})$, note that
\begin{eqnarray}
\bar{W}^{\times n}(T_{\bar{W}}(x^n) | x^n) &\ge& \frac{1}{(n+1)^{|{\cal X}||{\cal Y}|}} 2^{-n D(\bar{W} \| \bar{W} |P_{x^n})} \\
	&=& \frac{1}{(n+1)^{|{\cal X}||{\cal Y}|}}.
	\label{eq:type-channel-probability}
\end{eqnarray}
From (\ref{eq:typicality-implication}), we have
\begin{eqnarray*}
\bar{W} \in \bar{{\cal W}}_n(T_{X,\delta},E) \Longrightarrow \bar{W} \in {\cal W}_1(E + \delta e_{\max}).
\end{eqnarray*}
Thus, for every  
$\bar{W} \in \bar{{\cal W}}_n(T_{X,\delta},E)$ we have
\begin{eqnarray*}
\bar{\varepsilon} &\ge& P_X^n \cdot \bar{W}^{\times n}({\cal A}_n^c) \\
	&\ge& \sum_{x^n \in T_{X,\delta}} P_X^n(x^n) \bol{1}[\bar{W} \in {\cal P}_n({\cal Y}|P_{x^n})] 
	\sum_{y^n \in T_{\bar{W}}(x^n)} \\
	&& \times \bar{W}^{\times n}(T_{\bar{W}}(x^n) |x^n) \frac{1}{|T_{\bar{W}}(x^n)|} \bol{1}[(x^n,y^n) \in {\cal A}_n^c] \\
	&\ge& \frac{1}{(n+1)^{|{\cal X}||{\cal Y}|}} \sum_{x^n \in T_{X,\delta}} P_X^n(x^n) \bol{1}[\bar{W} \in {\cal P}_n({\cal Y}|P_{x^n})] \\
	&&\times 	\sum_{y^n \in T_{\bar{W}}(x^n)} \frac{1}{|T_{\bar{W}}(x^n)|} \bol{1}[(x^n,y^n) \in {\cal A}_n^c], 
\end{eqnarray*}
which implies the statement of the lemma.
\end{proof}

\begin{lemma}
\label{lemma:code-for-permutation-invariant}
Suppose that the code $(\varphi_n, \psi_n)$ satisfies 
\begin{eqnarray*}
\Pr\left\{ d_n(X^n, \psi_n(\varphi_n(X^n),Y^n)) > D + \delta \right\} \le \bar{\varepsilon}
\end{eqnarray*}
for every i.i.d. channel $\bar{W}^{\times n}$ such that
$\bar{W} \in {\cal W}_1(E + \delta e_{\max}) \cap {\cal P}_n({\cal Y}|{\cal X})$, where 
$(X^n,Y^n) \sim P_X^n \cdot \bar{W}^{\times n}$.
Then, we have 
\begin{eqnarray*}
\lefteqn{ \Pr\left\{ d_n(X^n, \psi_n(\varphi_n(X^n),Y^n)) > D + \delta \right\} } \\
	&\le& (n+1)^{2 |{\cal X}| |{\cal Y}|} \bar{\varepsilon}
	+ P_X^n((T_{X,\delta}^n)^c) + \delta_1 
\end{eqnarray*}
for every permutation invariant (not necessarily i.i.d.) $W^n$ satisfying
\begin{eqnarray*}
\Pr\left\{ e_n(X^n,Y^n) > E \right\} \le \delta_1,
\end{eqnarray*}
where $(X^n,Y^n) \sim P_X^n \cdot W^n$.
\end{lemma}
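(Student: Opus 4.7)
The plan is to reduce the probability under the arbitrary permutation-invariant channel $W^n$ to a sum over conditional types, and then invoke Lemma \ref{lemma:permutation-invariant-probability}, which already converts an i.i.d.\ bound into a bound on each single type class. Let ${\cal A}_n := \{(x^n,y^n) : d_n(x^n, \psi_n(\varphi_n(x^n), y^n)) \le D + \delta\}$ denote the event that the code succeeds. First I would perform the decomposition
\[
P_X^n \cdot W^n({\cal A}_n^c) \le P_X^n((T_{X,\delta}^n)^c) + \Pr\{e_n(X^n,Y^n) > E\} + P_X^n \cdot W^n({\cal A}_n^c \cap {\cal B}_n),
\]
where ${\cal B}_n := \{(x^n,y^n) : x^n \in T_{X,\delta}^n,\, e_n(x^n,y^n) \le E\}$. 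The first two terms are bounded directly by $P_X^n((T_{X,\delta}^n)^c)$ and $\delta_1$, respectively, from the hypotheses.

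For the third term I would exploit the key consequence of permutation invariance: $W^n(y^n|x^n)$ depends only on the joint type of $(x^n,y^n)$, so for every $y^n \in T_{\bar W}(x^n)$ we have $W^n(y^n|x^n) = W^n(T_{\bar W}(x^n)|x^n) / |T_{\bar W}(x^n)|$. Partitioning the points of ${\cal B}_n$ by the conditional type $\bar W$ of $y^n$ given $x^n$, each such $\bar W$ satisfies $\bar W \in \bar{{\cal W}}_n(T_{X,\delta},E)$ and therefore, by (\ref{eq:typicality-implication}), lies in ${\cal W}_1(E + \delta e_{\max}) \cap {\cal P}_n({\cal Y}|{\cal X})$. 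The hypothesis on $(\varphi_n,\psi_n)$ applied to the i.i.d.\ channel $\bar W^{\times n}$ then gives $P_X^n \cdot \bar W^{\times n}({\cal A}_n^c) \le \bar\varepsilon$, which is precisely the input needed by Lemma \ref{lemma:permutation-invariant-probability}.

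Bounding $W^n(T_{\bar W}(x^n)|x^n) \le 1$ (this factor is independent of $y^n$ within a single type class, so it can be pulled out of the inner sum and then dropped), exchanging the order of summation, and applying Lemma \ref{lemma:permutation-invariant-probability} to each $\bar W \in \bar{{\cal W}}_n(T_{X,\delta},E)$ bounds the $x^n$-sum by $(n+1)^{|{\cal X}||{\cal Y}|} \bar\varepsilon$ per type; summing over the at most $(n+1)^{|{\cal X}||{\cal Y}|}$ conditional types gives a total third-term contribution of $(n+1)^{2|{\cal X}||{\cal Y}|}\bar\varepsilon$, which combines with the first two terms to yield the stated inequality. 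The main obstacle is purely bookkeeping: one must recognize that the unknown factor $W^n(T_{\bar W}(x^n)|x^n)$ is both constant in $y^n$ on each type class and bounded by $1$, so that the dependence on the non-i.i.d.\ channel $W^n$ can be completely absorbed before invoking Lemma \ref{lemma:permutation-invariant-probability}. After that, the argument is a routine method-of-types calculation.
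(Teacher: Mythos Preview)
Your proposal is correct and follows essentially the same route as the paper's own proof: decompose according to whether $x^n$ is typical and whether $e_n(x^n,y^n)\le E$, use permutation invariance to replace $W^n(y^n|x^n)$ by $W^n(T_{\bar W}(x^n)|x^n)/|T_{\bar W}(x^n)|$ and bound the shell probability by $1$, then invoke Lemma~\ref{lemma:permutation-invariant-probability} for each conditional type $\bar W\in\bar{{\cal W}}_n(T_{X,\delta},E)$ and sum over the at most $(n+1)^{|{\cal X}||{\cal Y}|}$ such types. The only cosmetic difference is that the paper partitions directly by whether $\bar W\in\bar{{\cal W}}_n(T_{X,\delta},E)$ rather than by the event $\{e_n(x^n,y^n)\le E\}$, but via \eqref{eq:distortion-empirical-dependency} these are equivalent on typical $x^n$.
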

\begin{proof}
Suppose that $(X^n,Y^n) \sim P_X^n \cdot W^n$.
By using Lemma \ref{lemma:permutation-invariant-probability} for 
\begin{eqnarray*}
{\cal A}_n := \left\{ (x^n,y^n): d_n(x^n, \psi_n(\varphi_n(x^n),y^n)) \le D + \delta \right\},
\end{eqnarray*}
we have
\begin{eqnarray*}
\lefteqn{ \Pr\left\{ d_n(X^n, \psi_n(\varphi_n(X^n),Y^n)) > D + \delta \right\} }  \\
&\le& \sum_{x^n \notin T_{X,\delta}} P_X^n(x^n) \\
&& + \sum_{\bar{W} \in \bar{{\cal W}}_n(T_{X,\delta},E)} \sum_{x^n \in T_{X,\delta}} P_X^n(x^n) \bol{1}[\bar{W} \in {\cal P}_n({\cal Y}|P_{x^n})] \\
&& \times 	\sum_{y^n \in T_{\bar{W}}(x^n)} W^n(T_{\bar{W}}(x^n) |x^n) \frac{1}{|T_{\bar{W}}(x^n)|} \bol{1}[(x^n,y^n) \in {\cal A}_n^c] \\
&& + \sum_{\bar{W} \notin \bar{{\cal W}}_n(T_{X,\delta},E)} \sum_{x^n \in T_{X,\delta}} P_X^n(x^n) \bol{1}[\bar{W} \in {\cal P}_n({\cal Y}|P_{x^n})] 
 W^n(T_{\bar{W}}(x^n) |x^n) \\
&\le& P_X^n(T_{X,\delta}^c) \\
&& + \sum_{\bar{W} \in \bar{{\cal W}}_n(T_{X,\delta},E)} \sum_{x^n \in T_{X,\delta}} P_X^n(x^n) \bol{1}[\bar{W} \in {\cal P}_n({\cal Y}|P_{x^n})] \\
&& \times	\sum_{y^n \in T_{\bar{W}}(x^n)}  \frac{1}{|T_{\bar{W}}(x^n)|} \bol{1}[(x^n,y^n) \in {\cal A}_n^c] \\
&&+ \Pr\left\{ e_n(X^n,Y^n) > E \right\} \\
&\le& (n+1)^{2 |{\cal X}| |{\cal Y}|} \bar{\varepsilon}
	+ P_X^n(T_{X,\delta}^c) + \delta_1,
\end{eqnarray*}
where we used $W^n(T_{\bar{W}}(x^n) | x^n) \le 1$ to bound the second term, we used 
the fact 
\begin{eqnarray*}
\bar{W} \notin \bar{{\cal W}}_n(P_{x^n},E) \Longleftrightarrow e(P_{x^n}, \bar{W}) > E
\end{eqnarray*}
and \eqref{eq:distortion-empirical-dependency}
to bound the third term in the second inequality, and we used Lemma \ref{lemma:permutation-invariant-probability} to bound the second term 
in the third inequality.
\end{proof}

By combining Lemma \ref{lemma:universal-for-iid} and 
Lemma \ref{lemma:code-for-permutation-invariant} and by noting
the definition of ${\cal W}_m(E)$, we have the following.
\begin{lemma}
\label{lemma:code-with-random-permutation}
For any $V \in {\cal V}(E + \delta e_{\max},D)$, any $\delta > 0$,
and any $\varepsilon > 0$, there exists a universal code
$(\varphi_n,\psi_n)$ such that
\begin{eqnarray*}
\frac{1}{n} \log |{\cal M}_n| \le \max_{W \in {\cal W}_1(E + \delta e_{\max})} \phi(V,W) +  2 \delta
\end{eqnarray*}
and
\begin{eqnarray}
&&\hspace{-10mm} \mathbb{E}_{\pi_n}\left[ \Pr\{ d_n(\pi_n(X^n), \psi_n(\varphi_n(\pi_n(X^n)), \pi_n(Y^n))) > D + \delta \} \right]  \nonumber \\
&\le& \varepsilon
\label{eq:permutation-averaged}
\end{eqnarray}
for every $\bol{W} \in {\cal W}_m(E)$ provided that $n$ is sufficiently large.
\end{lemma}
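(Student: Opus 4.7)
The plan is to combine Lemma \ref{lemma:universal-for-iid} (universal code for i.i.d.~channels on the conditional type) with Lemma \ref{lemma:code-for-permutation-invariant} (lifting from i.i.d.~channels to permutation invariant ones), using a random permutation as a symmetrization device to pass from a general $\bol{W} \in {\cal W}_m(E)$ to a permutation invariant surrogate channel.

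First, I would instantiate Lemma \ref{lemma:universal-for-iid} with the distortion parameter $E+\delta e_{\max}$ in place of $E$. The hypothesis of the present lemma requires $V \in {\cal V}(E+\delta e_{\max},D)$, which is exactly what Lemma \ref{lemma:universal-for-iid} needs when applied to the enlarged class ${\cal W}_1(E+\delta e_{\max})$. This produces a deterministic pair $(\varphi_n,\psi_n)$ whose rate is at most $\max_{W \in {\cal W}_1(E+\delta e_{\max})} \phi(V,W) + 2\delta$ and whose error probability is at most $\bar{\varepsilon} := 2^{-\mu n}$ for every $\bar W \in {\cal W}_1(E+\delta e_{\max}) \cap {\cal P}_n({\cal Y}|{\cal X})$ when the channel is $\bar W^{\times n}$.

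Next, I would reduce the general case to the permutation invariant case. Since $P_X^n$ is already permutation invariant, the joint distribution of $(\pi_n(X^n),\pi_n(Y^n))$ under $P_X^n \cdot W^n$ averaged over a uniform random permutation $\pi_n$ equals $P_X^n \cdot \tilde W^n$, where
\begin{eqnarray*}
\tilde W^n(y^n|x^n) := \frac{1}{n!} \sum_{\pi} W^n(\pi^{-1}(y^n) \mid \pi^{-1}(x^n))
\end{eqnarray*}
is permutation invariant. Because $e_n$ is itself invariant under coordinate permutations, $\tilde W^n$ inherits the constraint $\Pr\{e_n(X^n,Y^n) > E\} \le \delta_1$ from $W^n$. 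Hence the left hand side of (\ref{eq:permutation-averaged}) equals the (un-permuted) error of $(\varphi_n,\psi_n)$ under the permutation invariant channel $\tilde W^n$, and Lemma \ref{lemma:code-for-permutation-invariant} applies directly. Invoking it with $\bar{\varepsilon} = 2^{-\mu n}$ bounds this error by
\begin{eqnarray*}
(n+1)^{2|{\cal X}||{\cal Y}|}\, 2^{-\mu n} + P_X^n\bigl((T_{X,\delta}^n)^c\bigr) + \delta_1.
\end{eqnarray*}

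Finally, I would show each of these three terms can be made arbitrarily small. The first decays exponentially in $n$; the second vanishes by the (exponential) law of large numbers applied to $P_X^n$; for the third, the definition of ${\cal W}_m(E)$ supplies, given any $\delta_1 > 0$, an $n_0(\delta_1)$ which is \emph{uniform} over all $\bol{W} \in {\cal W}_m(E)$. Fixing $\delta_1 = \varepsilon/3$ and choosing $n$ large enough to make the other two terms each at most $\varepsilon/3$ yields the claim. The main obstacle, and the reason the statement involves the inflated parameter $E + \delta e_{\max}$, is precisely this reconciliation: the non-i.i.d.~channel is handled only through its typical conditional types, whose distortion can exceed $E$ by up to $\delta e_{\max}$ (cf.~the implication $\bar W \in \bar{\cal W}_n(T_{X,\delta},E) \Rightarrow \bar W \in {\cal W}_1(E+\delta e_{\max})$ used inside Lemma \ref{lemma:code-for-permutation-invariant}), so the i.i.d.~code constructed in the first step must already be universal over the larger class ${\cal W}_1(E+\delta e_{\max})$, and the uniform $n_0(\cdot)$ built into ${\cal W}_m(E)$ is what secures uniformity over the whole family $\bol{W}$.
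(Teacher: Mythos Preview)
Your proposal is correct and follows essentially the same approach as the paper: the paper states the lemma as an immediate consequence of combining Lemma \ref{lemma:universal-for-iid} and Lemma \ref{lemma:code-for-permutation-invariant} together with the definition of ${\cal W}_m(E)$, and the symmetrization-by-random-permutation argument you spell out is exactly what the paper sets up in the preamble to that subsection. Your additional explanation of why the inflated parameter $E+\delta e_{\max}$ is needed and why the uniformity of $n_0(\cdot)$ in the definition of ${\cal W}_m(E)$ is essential makes explicit what the paper leaves implicit, but there is no substantive difference in the argument.
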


\subsubsection{De-Randomization}

Now we reduce the size of random permutation by using the de-randomization
technique.
\begin{lemma}
\label{lemma:after-de-randomize}
Suppose that $(\varphi_n,\psi_n)$ satisfies (\ref{eq:permutation-averaged}).
Then, for arbitrary $\delta_2, \gamma > 0$, there exists
$m_n = 2^{\delta_2 n}$ permutations $\{\pi_n^{(1)},\ldots,\pi_n^{(m_n)} \}$ such that
\begin{eqnarray*}
&&\hspace{-10mm}  \frac{1}{m_n} \sum_{i=1}^{m_n}  \Pr\{ d_n(\pi_n^{(i)}(X^n), \psi_n(\varphi_n(\pi_n^{(i)}(X^n)), \pi_n^{(i)}(Y^n)))  \\
&& ~~~~~~~> D + \delta \} \le \varepsilon + \gamma 
\end{eqnarray*}
provided that $n$ is sufficiently large.
\end{lemma}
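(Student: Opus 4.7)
The plan is to sample the permutations $\pi_n^{(1)}, \ldots, \pi_n^{(m_n)}$ i.i.d.\ uniformly from the symmetric group and apply Hoeffding's inequality with a union bound over sequences, following the derandomization technique of \cite{ahlswede:80,ziv:84}. The crucial structural point is that the target quantity decouples from the channel: define
\begin{eqnarray*}
Z_i(x^n, y^n) := \Pr_{\rom{code}}\left\{d_n(\pi_n^{(i)}(x^n), \psi_n(\varphi_n(\pi_n^{(i)}(x^n)), \pi_n^{(i)}(y^n))) > D + \delta\right\}
\end{eqnarray*}
(averaging only the internal code randomness), and set $\hat{q}(x^n,y^n) := m_n^{-1}\sum_i Z_i(x^n,y^n)$ and $q(x^n,y^n) := \mathbb{E}_{\pi_n}[Z_1(x^n,y^n)]$. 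Writing $\mu_W(x^n,y^n) := P_X^n(x^n)W^n(y^n|x^n)$, the left-hand side of the lemma equals $\sum_{x^n,y^n}\mu_W(x^n,y^n)\hat{q}(x^n,y^n)$, while (\ref{eq:permutation-averaged}) gives $\sum_{x^n,y^n}\mu_W(x^n,y^n)q(x^n,y^n)\le\varepsilon$ for every $\bol{W}\in{\cal W}_m(E)$. Hence it is enough to exhibit a single realization of $(\pi_n^{(1)},\ldots,\pi_n^{(m_n)})$ such that $\hat{q}(x^n,y^n)\le q(x^n,y^n)+\gamma$ holds simultaneously for every $(x^n,y^n)$, because then averaging against $\mu_W$ yields the bound $\varepsilon+\gamma$ uniformly in $\bol{W}\in{\cal W}_m(E)$.

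To produce such a realization, I would fix $(x^n,y^n)$ and observe that, since $\pi_n^{(1)},\ldots,\pi_n^{(m_n)}$ are i.i.d.\ uniform, the variables $Z_1(x^n,y^n),\ldots,Z_{m_n}(x^n,y^n)$ are i.i.d.\ in $[0,1]$ with mean $q(x^n,y^n)$, so Hoeffding's inequality yields
\begin{eqnarray*}
\Pr\left\{\hat{q}(x^n,y^n) > q(x^n,y^n) + \gamma\right\} \le \exp(-2 m_n \gamma^2).
\end{eqnarray*}
Union-bounding over all $|{\cal X}|^n|{\cal Y}|^n$ sequence pairs, the probability that the pointwise bound fails somewhere is at most $|{\cal X}|^n|{\cal Y}|^n\exp(-2\cdot 2^{\delta_2 n}\gamma^2)$, which tends to $0$ as $n\to\infty$ because the doubly exponential concentration rate coming from $m_n=2^{\delta_2 n}$ dominates the singly exponential union-bound cost. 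Consequently, for all sufficiently large $n$, a good deterministic tuple $(\pi_n^{(1)},\ldots,\pi_n^{(m_n)})$ exists.

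There is essentially no serious obstacle here. The only subtle point worth emphasizing is that $\hat{q}$ is a sample-space object that does not involve the channel, and this is precisely what allows a single permutation list to serve every $\bol{W}\in{\cal W}_m(E)$ at once; universality over the channel class is inherited for free from the pointwise control of $\hat{q}$. Moreover, the choice $m_n=2^{\delta_2 n}$ gives enough concentration slack that no refinement of the argument (such as collapsing the union bound over joint-type classes) is needed.
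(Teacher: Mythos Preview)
Your proposal is correct and follows essentially the same route as the paper: sample the permutations i.i.d., apply a concentration bound (the paper uses its Bernstein-type Lemma~\ref{lemma:bernstein-trick} in place of your Hoeffding inequality, yielding $\exp\{-(\gamma^2/4)m_n\}$ rather than $\exp\{-2m_n\gamma^2\}$), union-bound over all $(x^n,y^n)$, and then average the resulting pointwise inequality against the joint law to obtain the channel-uniform conclusion. Your explicit remark that the pointwise control of $\hat{q}$ is what delivers uniformity over ${\cal W}_m(E)$ is exactly the mechanism the paper relies on as well.
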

\begin{proof}
For a permutation $\pi_n$ and $(x^n,y^n) \in {\cal X}^n \times {\cal Y}^n$, we denote
\begin{eqnarray*}
\lefteqn{ I(\pi_n,x^n,y^n) }\\
&=& \bol{1}[ d_n(\pi_n(x^n), \psi_n(\varphi_n(\pi_n(x^n)), \pi_n(y^n))) > D + \delta ].
\end{eqnarray*}
Let $\pi_n^{(1)}, \ldots,\pi_n^{(m_n)}$ be randomly generated permutations, and
let $\bar{I}(x^n,y^n) = \mathbb{E}_{\pi_n}[I(\pi_n,x^n,y^n)]$.
Then, by using Lemma \ref{lemma:bernstein-trick} for $A_i = I(\pi_n^{(i)},x^n,y^n)$, 
$b=1$, and $\alpha = \frac{\gamma}{2}$, we have
\begin{eqnarray*}
&& \hspace{-10mm} \Pr\left\{ \frac{1}{m_n}\sum_{i=1}^{m_n} I(\pi_n^{(i)},x^n,y^n) \ge \bar{I}(x^n,y^n) + \gamma \right\} \\
&\le& \exp\{-(\gamma^2/4) m_n\}. 
\end{eqnarray*}
Furthermore, by using the union bound, we have
\begin{eqnarray}
&& \hspace{-10mm} \Pr\left\{ \exists (x^n,y^n)~\frac{1}{m_n}\sum_{i=1}^{m_n} I(\pi_n^{(i)},x^n,y^n) \ge \bar{I}(x^n,y^n) + \gamma  \right\} \nonumber \\
&\le& |{\cal X}^n| |{\cal Y}^n| \exp\{-(\gamma^2/4) m_n\}.
\label{eqn:doubly-exponential-bound}
\end{eqnarray}
Since $\exp\{-(\gamma^2/4) m_n\}$ converges to $0$
doubly exponentially, the right hand side of (\ref{eqn:doubly-exponential-bound})
is strictly smaller than $1$ if $n$ is sufficiently large, which implies that
there exists one realization of $\pi_1^{(1)},\ldots,\pi_n^{(m_n)}$ such that 
\begin{eqnarray}
\label{eq:before-average}
\frac{1}{m_n}\sum_{i=1}^{m_n} I(\pi_n^{(i)},x^n,y^n) \le \bar{I}(x^n,y^n) + \gamma
\end{eqnarray}
for every $(x^n,y^n)$.
Finally, by taking the average of both sides of (\ref{eq:before-average})
with respect to $(X^n,Y^n)$, we have the assertion of the lemma.
\end{proof}

Finally, by combining Lemma \ref{lemma:code-with-random-permutation} and 
Lemma \ref{lemma:after-de-randomize},  by taking
the constants to be sufficiently small and $n$ to be sufficiently large, 
we can show (\ref{eq:direct-1}). \qed

\section{Proof of Theorem \ref{theorem:main-2}}
\label{section:proof-of-theorem:main-2}

\subsection{Proof of Converse Part}

We only prove (\ref{eq:rd-average-class-converse}) because (\ref{eq:rd-average-class-converse-2}) 
is obtained from (\ref{eq:rd-average-class-converse}) by
letting $E_1 = E_*$ and $W_1 = W_*$.

Assume that $R$ is achievable and fix $\lambda$, $E_1$, $E_2$, $W_1$, and $W_2$
such that $\lambda E_1 + (1-\lambda) E_2 \le E$ and
$W_j \in {\cal W}_j(E_j)$ for $j=1,2$. To prove (\ref{eq:rd-average-class-converse}), it is 
sufficient to show that there exists a pair $(D_1,D_2)$ such that
$\lambda D_1 + (1-\lambda) D_2 \le D$ and $R \ge R_{HB}(D_1,D_2|W_1,W_2)$.

To do this, we consider the compound channel 
\begin{eqnarray*}
W^n = \lambda W_1^{\times n} + (1-\lambda) W_2^{\times n}.
\end{eqnarray*}
Note that $\bm{W} = \{W^n \}_{n=1}^\infty \in {\cal W}_a(E)$ since
\begin{eqnarray*}
e_n(P_{X}^n, W^n) 
&=& \sum_{x^n,y^n} P_X^n(x^n) W^n(y^n|x^n) e_n(x^n,y^n) \\
&=& \lambda e_n(P_X^n,W_1^{\times n}) + (1-\lambda) e_n(P_X^n, W_2^{\times n}) \\
&\le& \lambda E_1 + (1-\lambda) E_2 \\
&\le& E.
\end{eqnarray*}
Hence, by the definition of the achievability of $R$, for arbitrary small $\varepsilon > 0$
and sufficiently large $n$, there exists a code $(\varphi_n,\psi_n)$ such that 
\begin{eqnarray*}
\frac{1}{n} \log |{\cal M}_n| \le R + \varepsilon
\end{eqnarray*}
and
\begin{eqnarray}
\label{eq:average-converse-distortion-condition}
\sum_{x^n,y^n} P_X^n(x^n) W^n(y^n|x^n) d_n(x^n, \psi_n(\varphi_n(x^n), y^n)) \le D + \varepsilon.
\end{eqnarray}
Note that (\ref{eq:average-converse-distortion-condition}) can be also written as
\begin{eqnarray}
D + \varepsilon 
&\ge& \lambda \sum_{x^n, y^n} P_X^n(x^n) W_1^{\times n}(y^n|x^n) d_n(x^n,\psi_n(\varphi_n(x^n),y^n)) \\
 &&~~ + (1-\lambda) \sum_{x^n,y^n} P_X^n(x^n) W_2^{\times n}(y^n|x^n) d_n(x^n,\psi_n(\varphi_n(x^n),y^n)).
\label{eq:average-converse-distortion-condition-2}
\end{eqnarray}
On the other hand, by using $(\varphi_n,\psi_n)$, we can construct a HB code 
$(\varphi_n^{HB}, \psi_n^{HB1}, \psi_n^{HB2})$ as
\begin{eqnarray*}
\varphi_n^{HB}(x^n) &=& \varphi_n(x^n)~~~x^n \in {\cal X}^n, \\
\psi_n^{HB1}(m,y_1^n) &=& \psi_n(m,y_1^n) ~~~m \in {\cal M}_n, y_1^n \in {\cal Y}^n, \\
\psi_n^{HB2}(m,y_2^n) &=& \psi_n(m,y_2^n) ~~~m \in {\cal M}_n, y_2^n \in {\cal Y}^n.
\end{eqnarray*}
Then, let $(D_1,D_2)$ be the pair of average distortion occurred by $(\varphi_n^{HB}, \psi_n^{HB1}, \psi_n^{HB2})$, i.e.,
\begin{eqnarray}
D_j := \sum_{x^n,y^n} P_X^n(x^n) W_j^{\times n}(y_j^n|x^n) d_n(x^n, \psi_n^{HBj}(\varphi_n^{HB}(x^n),y_j^n)) ~~~j=1,2.
\label{eq:average-converse-distortion-condition-3}
\end{eqnarray}
By the definition of $R_{HB}(D_1,D_2|W_1,W_2)$ and the construction of the code, we have
\begin{eqnarray*}
R + \varepsilon \ge R_{HB}(D_1,D_2|W_1,W_2).
\end{eqnarray*}
Further, (\ref{eq:average-converse-distortion-condition-2}) and (\ref{eq:average-converse-distortion-condition-3}) indicate
\begin{eqnarray*}
D + \varepsilon \ge \lambda D_1 + (1-\lambda) D_2.
\end{eqnarray*}
Since we can choose $\varepsilon$ arbitrary small, we have
$R \ge R_{HB}(D_1,D_2|W_1,W_2)$ and
$\lambda D_1 + (1-\lambda) D_2 \le D$. \qed

\subsection{Proof of Direct Part}

As the direct part proof of Theorem \ref{theorem:main}, we prove 
(\ref{eq:rd-average-class}) in three steps. First, we construct 
a code for i.i.d. channel. Then, it is used to permutation invariant channels
by using random permutation.
Then, the size of the randomness is reduced by the de-randomization technique.

\subsubsection{Code for i.i.d. channel}

The goal of this section is to show the following lemma.
\begin{lemma}
\label{lemma:average-case-iid}
For arbitrarily fixed $V \in {\cal V}(E,D)$ and $\delta > 0$,
there exists $\mu > 0$ and 
a code $\varphi_n^\prime:{\cal X}^n \to {\cal U}^n$ such that
\begin{eqnarray}
\label{eq:no-side-info-universal-code-rate}
\frac{1}{n}\log |\varphi_n^\prime| \le I(P_X,V) + 2 \delta
\end{eqnarray}
and
\begin{eqnarray*}
\Pr\left\{ (\hat{U}^n,X^n,Y^n) \notin T_{P_XV\bar{W},\delta} \right\} \le 2^{-\mu n}
\end{eqnarray*}
for every $\bar{W} \in {\cal P}_n({\cal Y}|{\cal X})$ 
provided that $n$ is sufficiently large, where $|\varphi_n^\prime|$ is the cardinality
of the image of $\varphi_n^\prime$, 
$\hat{U}^n = \varphi_n^\prime(X^n)$ and
$T_{P_XV\bar{W},\delta}$ is the set of $P_{UXY}$-typical sequences with
respect to $P_{UXY}(u,x,y) = P_X(x) V(u|x) \bar{W}(y|x)$.
\end{lemma}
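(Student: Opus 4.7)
The plan is to generate a random $P_U^{\times n}$-codebook, show that it covers $X^n$ in the $P_X V$-jointly-typical sense, and then use the Markov chain $\hat U^n - X^n - Y^n$ to lift this two-way typicality to the desired three-way typicality with $P_X V \bar W$ uniformly over $\bar W$.

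First I would construct a random codebook ${\cal C}_n = \{U^n(1),\dots,U^n(M_n)\}$ with $M_n = \lceil 2^{n(I(P_X,V)+2\delta)} \rceil$ codewords drawn i.i.d. from $P_U^{\times n}$, where $P_U(u) = \sum_x P_X(x)V(u|x)$. The encoder $\varphi_n'$ maps $x^n$ to $U^n(i^\star)$, where $i^\star$ is the smallest index such that $(U^n(i^\star),x^n) \in T_{P_X V, \delta_1}$ for some $\delta_1 \ll \delta$, and to an arbitrary fixed codeword otherwise. Since $M_n$ is exponentially larger than $2^{n I(U;X)}$, the standard random-coding covering argument shows that, over the joint randomness of $(X^n,{\cal C}_n) \sim P_X^n \times P_U^{n M_n}$, the probability that no codeword is $\delta_1$-jointly typical with $X^n$ decays as $2^{-\mu_1 n}$ for some $\mu_1>0$. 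This bound is oblivious to $\bar W$.

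The next step transfers the two-way typicality of $(\hat U^n,X^n)$ to the three-way typicality with $P_X V \bar W$. Because $\hat U^n$ is a function of $(X^n,{\cal C}_n)$ while $Y^n$ is drawn from $\bar W^{\times n}(\cdot|X^n)$ independently of ${\cal C}_n$ given $X^n$, the triple forms a Markov chain $\hat U^n - X^n - Y^n$. Consequently, for every $(\hat u^n,x^n) \in T_{P_X V, \delta_1}$ and every $\bar W \in {\cal P}_n({\cal Y}|{\cal X})$, the conditional-typicality (Markov) lemma gives
\begin{eqnarray*}
\Pr\{(\hat u^n,x^n,Y^n) \notin T_{P_X V \bar W,\delta} \mid X^n = x^n\} \le 2^{-\mu_2 n}
\end{eqnarray*}
for some $\mu_2>0$ depending only on $\delta$, $\delta_1$, and the alphabet sizes. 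Crucially $\mu_2$ is independent of $\bar W$, because it arises from Chernoff-type concentration of the empirical conditional distribution of $Y^n$ given $X^n$, and the same exponent suffices for every conditional distribution. Combining the covering and Markov steps via a union bound yields an ensemble error probability at most $2^{-\mu_1 n}+2^{-\mu_2 n}$ for each fixed $\bar W$.

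To obtain a single deterministic codebook that works uniformly over all $\bar W$, I would take a further union bound over the at most $(n+1)^{|{\cal X}||{\cal Y}|}$ channel types, which only multiplies the exponential bound by a polynomial; the ensemble-averaged $\sup_{\bar W}$ error is therefore still $2^{-\mu n}$ for some $\mu>0$ and sufficiently large $n$, and Markov's inequality picks out a deterministic realization of ${\cal C}_n$ achieving this bound for every $\bar W$ simultaneously. The main obstacle is precisely this uniformity requirement, and it is resolved by the fact that neither the codebook construction, nor the covering exponent, nor the Markov-lemma exponent depends on $\bar W$; combined with the polynomial type count, universality over the channel comes essentially for free.
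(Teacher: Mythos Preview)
Your proof is correct but takes a genuinely different route from the paper's. The paper builds the encoder via the output-statistics-of-random-binning (OSRB) machinery: it draws $U^n\sim P_{U|X}^{\times n}(\cdot|X^n)$, applies two random binnings $f_n,g_n$ with rates $R_f=H(U|X)-\delta$ and $R_g=I(P_X,V)+2\delta$, uses a lossless decoder $\kappa_n$ for $(S_n,L_n)\to U^n$ (possible since $R_f+R_g=H(U)+\delta$), and invokes the privacy-amplification lemma to decouple $S_n$ from $X^n$ so that the stochastic quantizer $P_{U^n|S_nX^n}$ can be replaced by a deterministic one after fixing $(f_n,g_n,s_n)$. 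The atypicality bound comes from Lemma~\ref{lemma:typicality} applied to the true triple $(U^n,X^n,Y^n)$, transferred to $\hat U^n$ via the decoding error and to the real scheme via the variational-distance bound; derandomization over $\bar W$ then uses the same polynomial type count you use.

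Your argument is the classical covering-lemma construction: draw an i.i.d.\ $P_U$-codebook of size $2^{n(I(U;X)+2\delta)}$, find a $(P_XV,\delta_1)$-jointly-typical codeword, and extend to three-way typicality with $\bar W$ via the Markov lemma, whose Chernoff-type exponent is indeed uniform in $\bar W$. Both routes yield the same conclusion; yours is more elementary and self-contained for this lemma, while the paper's OSRB route mirrors the construction already used in Section~\ref{subsection-iid} for the maximum-distortion class and thus keeps the two achievability proofs structurally parallel. One cosmetic point: with $M_n=\lceil 2^{n(I(P_X,V)+2\delta)}\rceil$ the rate bound is $I(P_X,V)+2\delta+O(1/n)$ rather than exactly $I(P_X,V)+2\delta$; using a floor, or replacing $2\delta$ by $2\delta-1/n$ in the exponent, removes this slack.
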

\begin{proof}
We construct a code in a similar manner as Section \ref{subsection-iid}.
We use two kinds of bin codings $f_n:{\cal U}^n \to {\cal S}_n$ and
$g_n:{\cal U}^n \to {\cal L}_n$. We set 
\begin{eqnarray*}
R_f &=& H(U|X) - \delta, \\
R_g &=& I(P_X,V) + 2 \delta.
\end{eqnarray*}
Let $|{\cal S}_n| = \lfloor 2^{n R_f} \rfloor$ and $|{\cal L}_n| = \lceil 2^{n R_g} \rceil$.

Since
\begin{eqnarray*}
R_f + R_g = H(U) + \delta,
\end{eqnarray*}
there exists a decoder $\kappa_n:{\cal S}_n \times {\cal L}_n \to {\cal U}^n$ and $\mu_1 > 0$ such
that 
\begin{eqnarray}
\label{eq:error-bound-source-coding}
\mathbb{E}_{F_n G_n}\left[ P_{err}(F_n,G_n) \right] \le 2^{- \mu_1 n}
\end{eqnarray}
for sufficiently large $n$, where $P_{err}(F_n,G_n)$ is the error probability
of the source coding when the bin codings $(F_n,G_n)$ are used.
Furthermore, since $R_f = H(U|X) - \delta$, $S_n = F_n(U^n)$ is close to the 
uniform random variable that is independent of $X^n$ (Lemma \ref{lemma:privacy-amplification}).

We construct a code as follows. Let 
\begin{eqnarray*}
P_{\hat{U}^n|S_n L_n}(u^n|s_n,\ell_n) = \bol{1}[u^n = \kappa_n(s_n,\ell_n)]
\end{eqnarray*}
be the distribution describing the decoder. Let
\begin{eqnarray*}
\lefteqn{ P_{S_n L_n U^n X^n Y^n \hat{U}^n}(s_n,\ell_n,u^n,x^n,y^n,\hat{u}^n) } \\
&=& P_{S_n X^n}(s_n,x^n) P_{U^n|S_n X^n}(u^n|s_n,x^n) P_{L_n|U^n}(\ell_n|u^n)
 P_{Y^n|X^n}(y^n|x^n) P_{\hat{U}^n|S_n L_n}(\hat{u}^n|s_n,\ell_n)
\end{eqnarray*}
and 
\begin{eqnarray*}
\lefteqn{ \hat{P}_{\bar{S}_n L_n U^n X^n Y^n \hat{U}^n}(s_n,\ell_n,u^n,x^n,y^n,\hat{u}^n) } \\
&=& P_{\bar{S}_n}(s_n) P_{X^n}(x^n) P_{U^n|S_n X^n}(u^n|s_n,x^n) P_{L_n|U^n}(\ell_n|u^n)
 P_{Y^n|X^n}(y^n|x^n) P_{\hat{U}^n|S_n L_n}(\hat{u}^n|s_n,\ell_n).
\end{eqnarray*}
Note that $P_{U^n|S_n X^n}$ is a randomized quantizer.
From Lemma \ref{lemma:privacy-amplification} and the fact that the variational distance does not increase
by data processing and marginalization, we have
\begin{eqnarray*}
\mathbb{E}_{F_n G_n}\left[ \| \hat{P}_{\bar{S}_n U^n X^n Y^n \hat{U}^n} - P_{S_n U^n X^n Y^n \hat{U}^n} \| \right] \le 2^{- \mu_2 n}
\end{eqnarray*}
for some $\mu_2 > 0$.
By Lemma \ref{lemma:typicality} and (\ref{eq:error-bound-source-coding}), we have
\begin{eqnarray*}
\lefteqn{ \mathbb{E}_{F_n G_n}\left[ P_{S_n U^n X^n Y^n \hat{U}^n}(\{ (\hat{u}^n,x^n,y^n) \notin T_{P_XVW,\delta} \}) \right] } \\
&\le& \mathbb{E}_{F_n G_n}\left[ P_{S_n U^n X^n Y^n \hat{U}^n}(\{ (u^n,x^n,y^n) \notin T_{P_XVW,\delta}  \mbox{ or } u^n \neq \hat{u}^n \}) \right] \\
&\le& 2^{- \mu_3 n}
\end{eqnarray*}
for some $\mu_3 > 0$. Since 
\begin{eqnarray*}
\lefteqn{ \hat{P}_{\bar{S}_n U^n X^n Y^n \hat{U}^n}({\cal A}) - P_{S_n U^n X^n Y^n \hat{U}^n}({\cal A}) } \\
&\le& \| \hat{P}_{\bar{S}_n U^n X^n Y^n \hat{U}^n} - P_{S_n U^n X^n Y^n \hat{U}^n} \|
\end{eqnarray*}
for any set ${\cal A}$, we have
\begin{eqnarray*}
\mathbb{E}_{F_n G_n}\left[ \hat{P}_{\bar{S}_n U^n X^n Y^n \hat{U}^n}(\{ (\hat{u}^n, x^n,y^n) \notin T_{P_XVW,\delta} \}) \right] \le 2^{-\mu_4 n}
\end{eqnarray*}
for some $\mu_4 > 0$.
Since the cardinality of ${\cal P}_n({\cal Y}|{\cal X})$ is bounded by $(n+1)^{|{\cal X}||{\cal Y}|}$, there
exists one realization $(f_n,g_n,s_n)$ of $(F_n,G_n,S_n)$ satisfying 
\begin{eqnarray*}
\hat{P}_{U^n X^n Y^n \hat{U}^n|\bar{S}_n}(\{ (\hat{u}^n, x^n,y^n) \notin T_{P_XVW,\delta} \}|s_n) \le (n+1)^{|{\cal X}||{\cal Y}|} 2^{- \mu_4 n}.
\end{eqnarray*}
Furthermore, let $K_n$ be a random variable that simulate the randomized quantizer $P_{U^n|S_n X^n}$.
Then, we can also eliminate this randomness in a similar manner.
Let $\tau_n:{\cal S}_n \times {\cal X}^n \to {\cal U}^n$ be the resulting deterministic quantizer.
Then, we set $\varphi_n^\prime(x^n) = \kappa_n(s_n, g_n(\tau_n(s_n,x^n)))$.
The image size of $\varphi_n^\prime$ obviously satisfies (\ref{eq:no-side-info-universal-code-rate}).
Thus, by taking $n$ sufficiently large, we have the assertion of the lemma.
\end{proof}

\subsubsection{Code for Permutation Invariant Channel}

\begin{lemma}
\label{lemma:average-case-permutation-invariant}
For $\bar{W} \in \bar{{\cal W}}_n(T_{X,\delta})$, let ${\cal A}_n(\bar{W}) \subset {\cal X}^n \times {\cal Y}^n$. 
Suppose that
\begin{eqnarray*}
P_X^n \cdot \bar{W}^{\times n}({\cal A}_n(\bar{W})^c) \le \bar{\varepsilon}
\end{eqnarray*}
for every $\bar{W} \in \bar{{\cal W}}_n(T_{X,\delta})$. Then, for any conditional type
$\bar{W} \in \bar{{\cal W}}_n(T_{X,\delta})$, we have
\begin{eqnarray*}
\lefteqn{ \sum_{x^n \in T_{X,\delta}} P_X^n(x^n) \bol{1}[\bar{W} \in {\cal P}_n({\cal Y}|P_{x^n}) ] } \\
&& \times \sum_{y^n \in T_{\bar{W}}(x^n)} \frac{1}{|T_{\bar{W}}(x^n)|} \bol{1}[(x^n,y^n) \in {\cal A}_n(\bar{W})^c] \\
&\le& (n+1)^{|{\cal X}| |{\cal Y}|} \bar{\varepsilon}.
\end{eqnarray*}
\end{lemma}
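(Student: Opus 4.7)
The plan is to mimic the proof of Lemma \ref{lemma:permutation-invariant-probability} almost verbatim, the only difference being that here we do not need the implication $\bar{W}\in\bar{{\cal W}}_n(T_{X,\delta},E)\Rightarrow\bar{W}\in{\cal W}_1(E+\delta e_{\max})$, because the hypothesis is already stated for every $\bar{W}\in\bar{{\cal W}}_n(T_{X,\delta})$ (no distortion restriction), and the set ${\cal A}_n(\bar{W})$ is now allowed to depend on $\bar{W}$.

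The key ingredient I will use is the standard conditional type-class probability bound: for any $x^n$ and any conditional type $\bar{W}\in{\cal P}_n({\cal Y}|P_{x^n})$,
\begin{eqnarray*}
\bar{W}^{\times n}(T_{\bar{W}}(x^n)\mid x^n) \;\ge\; \frac{1}{(n+1)^{|{\cal X}||{\cal Y}|}}\,2^{-nD(\bar{W}\|\bar{W}|P_{x^n})} \;=\; \frac{1}{(n+1)^{|{\cal X}||{\cal Y}|}},
\end{eqnarray*}
together with the fact that $\bar{W}^{\times n}(\,\cdot\mid x^n)$ is uniform over $T_{\bar{W}}(x^n)$, which yields
\begin{eqnarray*}
\bar{W}^{\times n}(y^n\mid x^n) \;=\; \frac{\bar{W}^{\times n}(T_{\bar{W}}(x^n)\mid x^n)}{|T_{\bar{W}}(x^n)|} \;\ge\; \frac{1}{(n+1)^{|{\cal X}||{\cal Y}|}\,|T_{\bar{W}}(x^n)|}
\end{eqnarray*}
for every $y^n\in T_{\bar{W}}(x^n)$.

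Given this, I fix an arbitrary $\bar{W}\in\bar{{\cal W}}_n(T_{X,\delta})$ and lower-bound the hypothesis as
\begin{eqnarray*}
\bar{\varepsilon}
&\ge& P_X^n\cdot\bar{W}^{\times n}({\cal A}_n(\bar{W})^c) \\
&\ge& \sum_{x^n\in T_{X,\delta}} P_X^n(x^n)\,\bol{1}[\bar{W}\in{\cal P}_n({\cal Y}|P_{x^n})] \sum_{y^n\in T_{\bar{W}}(x^n)} \bar{W}^{\times n}(y^n\mid x^n)\,\bol{1}[(x^n,y^n)\in{\cal A}_n(\bar{W})^c] \\
&\ge& \frac{1}{(n+1)^{|{\cal X}||{\cal Y}|}}\sum_{x^n\in T_{X,\delta}} P_X^n(x^n)\,\bol{1}[\bar{W}\in{\cal P}_n({\cal Y}|P_{x^n})] \sum_{y^n\in T_{\bar{W}}(x^n)} \frac{1}{|T_{\bar{W}}(x^n)|}\,\bol{1}[(x^n,y^n)\in{\cal A}_n(\bar{W})^c],
\end{eqnarray*}
and rearranging gives exactly the claimed bound.

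There is essentially no obstacle: the statement is the distortion-free analogue of Lemma \ref{lemma:permutation-invariant-probability}, and the proof just needs the uniformity of $\bar{W}^{\times n}$ on conditional type classes and the $1/(n+1)^{|{\cal X}||{\cal Y}|}$ lower bound on the probability of the conditional type class. The only point worth checking is that restricting the outer sum to $x^n\in T_{X,\delta}$ and inserting the indicator $\bol{1}[\bar{W}\in{\cal P}_n({\cal Y}|P_{x^n})]$ only lower-bounds $P_X^n\cdot\bar{W}^{\times n}({\cal A}_n(\bar{W})^c)$, which is immediate.
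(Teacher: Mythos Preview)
Your proof is correct and matches the paper's argument essentially line for line: both restrict to $x^n\in T_{X,\delta}$, insert the indicator $\bol{1}[\bar{W}\in{\cal P}_n({\cal Y}|P_{x^n})]$, use the uniformity of $\bar{W}^{\times n}(\cdot|x^n)$ on $T_{\bar{W}}(x^n)$, and apply the lower bound $\bar{W}^{\times n}(T_{\bar{W}}(x^n)|x^n)\ge (n+1)^{-|{\cal X}||{\cal Y}|}$ from (\ref{eq:type-channel-probability}). Your observation that the $E$-distortion implication from Lemma~\ref{lemma:permutation-invariant-probability} is no longer needed is exactly the point.
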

\begin{proof}
We prove this lemma in a similar manner as Lemma \ref{lemma:permutation-invariant-probability}.
For any $\bar{W} \in {\cal P}_n({\cal Y}|P_{x^n})$, note that (\ref{eq:type-channel-probability}) holds.
Then, for any $\bar{W} \in \bar{{\cal W}}_n(T_{X,\delta})$, we have
\begin{eqnarray*}
\bar{\varepsilon} 
&\ge& P_X^n \cdot \bar{W}^{\times n}({\cal A}_n(\bar{W})^c) \\
&\ge& \sum_{x^n \in T_{X,\delta}} P_X^n(x^n) \bol{1}[\bar{W} \in {\cal P}_n({\cal Y}|P_{x^n})]
  \sum_{y^n \in T_{\bar{W}}(x^n)} \bar{W}^{\times n}(T_{\bar{W}}(x^n)|x^n) \frac{1}{|T_{\bar{W}}(x^n)|} \bol{1}[(x^n,y^n) \in {\cal A}_n(\bar{W})^c] \\
&\ge& \frac{1}{(n+1)^{|{\cal X}||{\cal Y}|}} \sum_{x^n \in T_{X,\delta}} P_X^n(x^n) \bol{1}[\bar{W} \in {\cal P}_n({\cal Y}|P_{x^n})]
  \sum_{y^n \in T_{\bar{W}}(x^n)}  \frac{1}{|T_{\bar{W}}(x^n)|} \bol{1}[(x^n,y^n) \in {\cal A}_n(\bar{W})^c],
\end{eqnarray*}
which implies the statement of the lemma.
\end{proof}

\begin{lemma}
\label{lemma:average-case-distortion-evaluation}
For a given $V \in {\cal V}(E + 2 \delta e_{\max},D)$, suppose that there exists
$\varphi_n^\prime:{\cal X}^n \to {\cal U}^n$ such that
\begin{eqnarray*}
\Pr\{ (\hat{U}^n,X^n,Y^n) \notin T_{P_XV\bar{W},\delta} \} \le \bar{\varepsilon}
\end{eqnarray*}
for every $\bar{W} \in {\cal P}_n({\cal Y}|{\cal X})$, where $\hat{U}^n = \varphi_n^\prime(X^n)$
and $T_{P_XV\bar{W},\delta}$ is the set of all $P_{UXY}$-typical set for $P_{UXY}(u,x,y) = P_X(x) V(u|x) \bar{W}(y|x)$.
Then, we have
\begin{eqnarray*}
\lefteqn{ \mathbb{E}\left[ d_n(X^n, \hat{U}^n(Y^n)) \right] } \\
&\le& \left\{ P_X^n(T_{X,\delta}^c) + (n+1)^{2|{\cal X}||{\cal Y}|} \bar{\varepsilon} + \delta \right\} d_{\max} 
 	+  D 
\end{eqnarray*}
for every permutation invariant (not necessarily i.i.d.) $W^n$ such that
\begin{eqnarray*}
\mathbb{E}[e_n(X^n,Y^n) ] \le E
\end{eqnarray*}
provided that $n$ is sufficiently large.
\end{lemma}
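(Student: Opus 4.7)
The proof proceeds by decomposing the expected distortion, first by the typicality of $X^n$ and then by the joint typicality of $(\hat U^n,X^n,Y^n)$ with respect to $P_XV\bar W_{\mathrm{eff}}$, where $\bar W_{\mathrm{eff}}$ denotes the conditional empirical type of $Y^n$ given $X^n$. The crucial ingredient is the linearity of $d(V,\cdot)$ in the channel, which allows converting the average-distortion constraint $\mathbb{E}[e_n(X^n,Y^n)]\le E$ into a deterministic distortion bound on an averaged channel lying in ${\cal W}_1(E+2\delta e_{\max})$.

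I would start from
\[
\mathbb{E}[d_n(X^n,\hat U^n(Y^n))] \le d_{\max}P_X^n(T_{X,\delta}^c) + \mathbb{E}\bigl[d_n(X^n,\hat U^n(Y^n))\bol{1}[X^n\in T_{X,\delta}]\bigr]
\]
and split the second term into typical-triple and atypical-triple contributions. For the atypical contribution I would use permutation invariance of $W^n$ to write $W^n(\cdot|x^n)$ as a mixture (over $\bar W$) of uniform distributions on $T_{\bar W}(x^n)$, bound $W^n(T_{\bar W}(x^n)|x^n)\le 1$, and apply Lemma~\ref{lemma:average-case-permutation-invariant} to ${\cal A}_n(\bar W)=\{(x^n,y^n):(\varphi_n^\prime(x^n),x^n,y^n)\in T_{P_XV\bar W,\delta}\}$. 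Summing over the at most $(n+1)^{|{\cal X}||{\cal Y}|}$ conditional types yields a contribution of $d_{\max}(n+1)^{2|{\cal X}||{\cal Y}|}\bar\varepsilon$.

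For the typical contribution, joint typicality gives $|d_n(x^n,\hat u^n(y^n))-d(V,\bar W_{\mathrm{eff}})|\le \delta d_{\max}$, so it suffices to estimate $\mathbb{E}[d(V,\bar W_{\mathrm{eff}})\bol{1}[X^n\in T_{X,\delta}]]$. Since $d(V,\cdot)$ is linear in the channel, this equals $P_X^n(T_{X,\delta})\,d(V,\tilde W)$, where
\[
\tilde W(y|x):=\frac{\mathbb{E}[\bar W_{\mathrm{eff}}(y|x)\bol{1}[X^n\in T_{X,\delta}]]}{P_X^n(T_{X,\delta})}
\]
is a bona fide channel on the support of $P_X$. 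Using $\|P_{X^n}-P_X\|\le O(\delta)$ on $T_{X,\delta}$ one obtains $e(P_X,\bar W_{\mathrm{eff}})\le e_n(X^n,Y^n)+\delta e_{\max}$, and combining this with $\mathbb{E}[e_n]\le E$ shows $e(P_X,\tilde W)\le (E+\delta e_{\max})/P_X^n(T_{X,\delta})\le E+2\delta e_{\max}$ for $n$ sufficiently large that $P_X^n(T_{X,\delta}^c)$ is small enough. The hypothesis $V\in{\cal V}(E+2\delta e_{\max},D)$ then yields $d(V,\tilde W)\le D$, and collecting the three contributions produces the claimed estimate.

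The main obstacle is recognising this linearity-plus-averaging step: a direct Markov estimate on $\Pr\{e_n>E+\delta e_{\max}\}$ gives only $E/(E+\delta e_{\max})$, which is not $o(1)$, so one cannot simply partition the conditional types into ``good'' and ``bad'' ones and bound the bad contribution by $d_{\max}$ times its probability. Replacing the random conditional type $\bar W_{\mathrm{eff}}$ by the averaged channel $\tilde W$ is precisely what upgrades the expectation constraint on $e_n$ into deterministic membership in ${\cal W}_1(E+2\delta e_{\max})$; once this move is identified, the remaining calculations are routine bookkeeping of the typicality slack and of the $(n+1)^{|{\cal X}||{\cal Y}|}$ factors arising from summing over types.
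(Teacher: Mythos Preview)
Your proposal is correct and follows essentially the same route as the paper's proof: the paper likewise splits the expectation according to $X^n\in T_{X,\delta}$, handles the atypical-triple contribution via Lemma~\ref{lemma:average-case-permutation-invariant} applied to ${\cal A}_n(\bar W)=\{(x^n,y^n):(\varphi_n'(x^n),x^n,y^n)\in T_{P_XV\bar W,\delta}\}$, and for the typical contribution exploits the linearity of $d(V,\cdot)$ to pass to a mixture channel (denoted $W_{mix}$ there, your $\tilde W$) which is shown to lie in ${\cal W}_1(E+2\delta e_{\max})$ by exactly the computation you describe. Your remark that a Markov bound on $\Pr\{e_n>E+\delta e_{\max}\}$ would be insufficient and that the averaging step is the essential idea is precisely the point of the argument.
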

\begin{proof}
From (\ref{eq:triple-distortion-relation}),
we first note that
\begin{eqnarray*}
d_n(x^n,u^n(y^n)) &\le& d(V,\bar{W}) + \delta d_{\max}
\end{eqnarray*}
for $(u^n,x^n,y^n) \in T_{P_XV\bar{W},\delta}$.
Then, by using Lemma \ref{lemma:average-case-permutation-invariant} for
\begin{eqnarray*}
{\cal A}_n(\bar{W}) = \{ (x^n,y^n) : (\varphi_n^\prime(x^n), x^n,y^n) \in T_{P_XV\bar{W},\delta} \},
\end{eqnarray*}
we have
\begin{eqnarray*}
\lefteqn{ \mathbb{E}\left[ d_n(X^n, \hat{U}^n(Y^n)) \right] }  \\
&\le& P_X^n(T_{X,\delta}^c) d_{\max} \\
&& + \sum_{\bar{W} \in \bar{{\cal W}}_n(T_{X,\delta})} \sum_{x^n \in T_{X,\delta}} P_X^n(x^n) \bol{1}[\bar{W} \in {\cal P}_n({\cal Y}|P_{x^n})] \\
&& \sum_{y^n \in T_{\bar{W}}(x^n)} W^n(T_{\bar{W}}(x^n)|x^n) \frac{1}{|T_{\bar{W}}(x^n)|} \bol{1}[ (x^n,y^n) \in {\cal A}_n^c(\bar{W})] d_{\max} \\
&& + \sum_{\bar{W} \in \bar{{\cal W}}_n(T_{X,\delta})} \sum_{x^n \in T_{X,\delta}} P_X^n(x^n) \bol{1}[\bar{W} \in {\cal P}_n({\cal Y}|P_{x^n})] \\
&& \sum_{y^n \in T_{\bar{W}}(x^n)} W^n(T_{\bar{W}}(x^n)|x^n) \frac{1}{|T_{\bar{W}}(x^n)|} \bol{1}[ (x^n,y^n) \in {\cal A}_n(\bar{W})] 
  \{ d(V,\bar{W}) + \delta d_{\max} \}\\
&\le& \left\{ P_X^n(T_{X,\delta}^c) + (n+1)^{2|{\cal X}||{\cal Y}|} \bar{\varepsilon} + \delta \right\} d_{\max} \\
&& + \sum_{x^n \in T_{X,\delta} }\sum_{\bar{W} \in {\cal P}_n({\cal Y}|P_{x^n})} P_X^n(x^n) W^n(T_{\bar{W}}(x^n)|x^n) d(V,\bar{W}),
\end{eqnarray*}
where we used Lemma \ref{lemma:average-case-permutation-invariant} to upper bound the second term in
the second inequality.
Now, we rewrite the last term as 
\begin{eqnarray*}
&& \hspace{-10mm} \sum_{x^n \in T_{X,\delta} }\sum_{\bar{W} \in {\cal P}_n({\cal Y}|P_{x^n})} P_X^n(x^n) W^n(T_{\bar{W}}(x^n)|x^n) d(V,\bar{W})  \\
&=& P_{X}^n(T_{X,\delta}) d(V,W_{mix}),
\end{eqnarray*}
where $W_{mix} \in {\cal P}({\cal Y}|{\cal X})$ is a channel defined by
\begin{eqnarray*}
W_{mix}(y|x) = \sum_{x^n \in T_{X,\delta}} \tilde{P}_X^n(x^n) \sum_{\bar{W} \in {\cal P}_n({\cal Y}|P_{x^n}) }W^n(T_{\bar{W}}(x^n)|x^n) \bar{W}(y|x)
\end{eqnarray*}
for 
\begin{eqnarray*}
\tilde{P}_{X}^n(x^n) = \frac{P_X^n(x^n)}{P_X^n(T_{X,\delta})}.
\end{eqnarray*}
From (\ref{eq:distortion-empirical-dependency}) and (\ref{eq:distortion-relation}), we have 
\begin{eqnarray*}
e_n(x^n,y^n) \ge  e(P_X,\bar{W}) - \delta e_{\max}
\end{eqnarray*}
for $x^n \in T_{X,\delta}$ and $y^n \in T_{\bar{W}}(x^n)$.
Thus, we have
\begin{eqnarray*}
E &\ge& \mathbb{E}\left[ e_n(X^n,Y^n) \right] \\
&=& \sum_{x^n,y^n} P_X^n(x^n) W^n(y^n|x^n) e_n(x^n,y^n) \\
&\ge& \sum_{x^n \in T_{X,\delta}} P_X^n(x^n) \sum_{\bar{W} \in {\cal P}_n({\cal Y}|P_{x^n})} W^n(T_{\bar{W}}(x^n)|x^n) e(P_{x^n},\bar{W}) \\
&\ge& \sum_{x^n T_{X,\delta}} P_X^n(x^n) \sum_{\bar{W} \in {\cal P}_n({\cal Y}|P_{x^n})} W^n(T_{\bar{W}}(x^n)|x^n) \{e(P_X,\bar{W}) - \delta e_{\max} \} \\
&=& P_X^n(T_{X,\delta}) \{ e(P_X,W_{mix}) - \delta e_{\max} \}.   
\end{eqnarray*}
Thus, we have $W_{mix} \in {\cal W}_1(E + 2 \delta e_{\max})$ provided that $n$ is sufficiently large.
Since $V \in {\cal V}(E + 2 \delta e_{\max}, D)$,  we have $d(V, W_{mix}) \le D$.
This completes the proof.
\end{proof}

By combining Lemma \ref{lemma:average-case-iid} and Lemma \ref{lemma:average-case-distortion-evaluation}, 
we have the following.
\begin{lemma}
\label{lemma:average-case-randomized-permutation}
For any $V \in {\cal V}(E + 2 \delta e_{\max}, D)$,
$\delta > 0$, and $\varepsilon > 0$, there exists $\varphi_n^\prime:{\cal X}^n \to {\cal U}^n$ such that
\begin{eqnarray*}
\frac{1}{n} \log |\varphi_n^\prime| \le  I(P_X,V) + \delta
\end{eqnarray*} 
and
\begin{eqnarray}
\label{eq:average-case-permutation-averaged-distortion}
\mathbb{E}_{\pi_n}\left[  \mathbb{E}\left[ d_n(\pi_n(X^n), \hat{U}^n(\pi_n(Y^n))) \right] \right] \le D + \varepsilon
\end{eqnarray}
for every $\bol{W} \in {\cal W}_m(E)$ provided that $n$ is sufficiently large,
where $\hat{U}^n = \varphi_n^\prime(\pi_n(X^n))$.
\end{lemma}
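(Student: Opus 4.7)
The plan is to combine Lemmas \ref{lemma:average-case-iid} and \ref{lemma:average-case-distortion-evaluation} through a random-permutation reduction, in direct analogy with the three-step direct-part proof of Theorem \ref{theorem:main} given in Section \ref{subsection:direct-part}. I first introduce an auxiliary tolerance $\delta_0 := \min\{\delta/2,\ \varepsilon/(3 d_{\max})\}$ to reconcile the different constants appearing in the two source lemmas. Invoking Lemma \ref{lemma:average-case-iid} with this $\delta_0$ (whose hypothesis $V \in {\cal V}(E,D)$ holds because ${\cal V}(\cdot,D)$ is monotone decreasing in its first argument, so $V \in {\cal V}(E + 2\delta e_{\max}, D) \subseteq {\cal V}(E, D)$) yields a deterministic encoder $\varphi_n^\prime$ with $\frac{1}{n}\log |\varphi_n^\prime| \le I(P_X,V) + 2\delta_0 \le I(P_X,V) + \delta$, such that for every conditional type $\bar W \in {\cal P}_n({\cal Y}|{\cal X})$ the triple $(\hat U^n, X^n, Y^n)$ drawn under $P_X^n \cdot \bar W^{\times n}$ is $P_X V \bar W$-typical except with probability $\bar\varepsilon \le 2^{-\mu n}$.

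Next, I would reduce an arbitrary $\bm{W} = \{W^n\}_{n=1}^\infty$ in the channel class to a permutation-invariant channel via random permutation. Because $P_X^n$ is i.i.d., for any symmetric functional $h$ of $(x^n,y^n)$,
\begin{eqnarray*}
\mathbb{E}_{\pi_n}\left[\sum_{x^n,y^n} P_X^n(x^n)\, W^n(y^n|x^n)\, h(\pi_n(x^n),\pi_n(y^n))\right] = \sum_{x^n,y^n} P_X^n(x^n)\, \tilde W^n(y^n|x^n)\, h(x^n,y^n),
\end{eqnarray*}
where $\tilde W^n(y^n|x^n) := \mathbb{E}_{\pi_n}[W^n(\pi_n(y^n)|\pi_n(x^n))]$ is permutation invariant by construction. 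Since both $e_n$ and $d_n$ are invariant under simultaneous permutation of their arguments, the average-distortion constraint on $W^n$ transfers to $\tilde W^n$, and the permutation-averaged expected reproduction distortion under $W^n$ coincides with the expected reproduction distortion under $\tilde W^n$ without permutation.

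Finally, applying Lemma \ref{lemma:average-case-distortion-evaluation} to $\tilde W^n$ with parameter $\delta_0$ (valid because $V \in {\cal V}(E + 2\delta e_{\max}, D) \subseteq {\cal V}(E + 2\delta_0 e_{\max}, D)$ by the same monotonicity) gives
\begin{eqnarray*}
\mathbb{E}_{\pi_n}\left[\mathbb{E}[d_n(\pi_n(X^n), \hat U^n(\pi_n(Y^n)))]\right] \le \left\{P_X^n(T_{X,\delta_0}^c) + (n+1)^{2|{\cal X}||{\cal Y}|}\, 2^{-\mu n} + \delta_0\right\} d_{\max} + D.
\end{eqnarray*}
For large $n$ the first term inside the braces vanishes by the weak law of large numbers, the second because doubly-exponential decay dominates the polynomial factor, and $\delta_0 d_{\max} \le \varepsilon/3$ by construction; hence the left-hand side is at most $D + \varepsilon$, as required. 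The only real subtlety is the three-way juggling of the slacks $(\delta,\varepsilon, 2\delta e_{\max})$ appearing in the three ingredients, which is why the single auxiliary tolerance $\delta_0$ is introduced up front; no genuinely new ingredient beyond the two cited lemmas and the symmetrization identity above is needed.
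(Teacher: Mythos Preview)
Your proposal is correct and follows exactly the approach the paper intends: the paper's own ``proof'' of this lemma is the single sentence ``By combining Lemma~\ref{lemma:average-case-iid} and Lemma~\ref{lemma:average-case-distortion-evaluation}, we have the following,'' and you have simply supplied the details (the permutation-symmetrization identity and the bookkeeping of slacks via $\delta_0$). One cosmetic slip: $2^{-\mu n}$ is single-exponential, not ``doubly-exponential''; it still dominates the polynomial prefactor, so the argument is unaffected.
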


\subsubsection{De-Randomization}

Now we reduce the size of random permutation by using the
de-randomization technique.
\begin{lemma}
\label{lemma:average-case-de-randomized}
Suppose that $\varphi_n^\prime$ satisfies (\ref{eq:average-case-permutation-averaged-distortion}).
Then, for arbitrary $\delta_2,\gamma > 0$, there exists $m_n = 2^{\delta_2 n}$ permutations
$\{ \pi_n^{(1)},\ldots,\pi_n^{(m_n)} \}$ such that
\begin{eqnarray*}
\frac{1}{m_n}\sum_{i=1}^{m_n} \mathbb{E}\left[ d_n(\pi_n(X^n), \hat{U}^n(\pi_n(Y^n))) \right] \le D + \varepsilon + \gamma
\end{eqnarray*}
provided that $n$ is sufficiently large.
\end{lemma}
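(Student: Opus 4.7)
The plan is to follow essentially the same de-randomization template as in Lemma \ref{lemma:after-de-randomize}, but with the bounded real-valued distortion $d_n(\pi_n(x^n), \hat{U}^n(\pi_n(y^n)))$ replacing the indicator variable. For each fixed pair $(x^n, y^n) \in {\cal X}^n \times {\cal Y}^n$, I would define the random variable
\begin{eqnarray*}
A(\pi_n, x^n, y^n) := d_n(\pi_n(x^n), \hat{U}^n(\pi_n(y^n))),
\end{eqnarray*}
where $\hat{U}^n = \varphi_n^\prime(\pi_n(x^n))$, and note that $0 \le A(\pi_n, x^n, y^n) \le d_{\max}$. Write $\bar{A}(x^n, y^n) := \mathbb{E}_{\pi_n}[A(\pi_n, x^n, y^n)]$ for the expectation under a uniformly random permutation.

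Next, draw $\pi_n^{(1)}, \ldots, \pi_n^{(m_n)}$ independently and uniformly, and apply Lemma \ref{lemma:bernstein-trick} with $A_i = A(\pi_n^{(i)}, x^n, y^n)$, $b = d_{\max}$, and $\alpha = \gamma/2$. This yields
\begin{eqnarray*}
\Pr\left\{ \frac{1}{m_n} \sum_{i=1}^{m_n} A(\pi_n^{(i)}, x^n, y^n) \ge \bar{A}(x^n, y^n) + \gamma \right\} \le \exp\{- c \gamma^2 m_n / d_{\max}^2 \}
\end{eqnarray*}
for some constant $c > 0$. A union bound over the exponentially many pairs $(x^n, y^n) \in {\cal X}^n \times {\cal Y}^n$ gives
\begin{eqnarray*}
\Pr\left\{ \exists (x^n, y^n) : \frac{1}{m_n} \sum_{i=1}^{m_n} A(\pi_n^{(i)}, x^n, y^n) \ge \bar{A}(x^n, y^n) + \gamma \right\} \le |{\cal X}|^n |{\cal Y}|^n \exp\{- c \gamma^2 m_n / d_{\max}^2 \}.
\end{eqnarray*}
Since $m_n = 2^{\delta_2 n}$, the right-hand side decays doubly exponentially in $n$ while $|{\cal X}|^n |{\cal Y}|^n$ grows only singly exponentially, so for $n$ large enough this probability is strictly less than $1$. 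Hence there exists a realization $\{\pi_n^{(1)}, \ldots, \pi_n^{(m_n)}\}$ such that the inequality $\frac{1}{m_n} \sum_{i=1}^{m_n} A(\pi_n^{(i)}, x^n, y^n) \le \bar{A}(x^n, y^n) + \gamma$ holds \emph{uniformly} for every $(x^n, y^n)$.

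Finally, I would take the expectation of this uniform bound with respect to $(X^n, Y^n) \sim P_X^n \cdot W^n$. Linearity of expectation yields
\begin{eqnarray*}
\frac{1}{m_n} \sum_{i=1}^{m_n} \mathbb{E}[d_n(\pi_n^{(i)}(X^n), \hat{U}^n(\pi_n^{(i)}(Y^n)))] \le \mathbb{E}_{\pi_n}[\mathbb{E}[d_n(\pi_n(X^n), \hat{U}^n(\pi_n(Y^n)))]] + \gamma,
\end{eqnarray*}
and by the hypothesis (\ref{eq:average-case-permutation-averaged-distortion}) the first term on the right is at most $D + \varepsilon$, giving the desired bound $D + \varepsilon + \gamma$. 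There is no real obstacle here: the argument is essentially routine given Lemma \ref{lemma:after-de-randomize}, and the only care needed is to invoke the Bernstein-type inequality with the correct scaling $b = d_{\max}$ so that the concentration rate remains exponential in $m_n$, which is what ensures the doubly exponential decay that beats the union bound.
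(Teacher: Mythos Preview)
Your proposal is correct and follows essentially the same argument as the paper's proof: define the per-$(x^n,y^n)$ distortion as a bounded random variable in the random permutation, apply Lemma~\ref{lemma:bernstein-trick} with $b=d_{\max}$, take a union bound over $(x^n,y^n)$, use the doubly exponential decay from $m_n=2^{\delta_2 n}$ to beat the single exponential growth, and then average over $(X^n,Y^n)$. The only quibble is your choice $\alpha=\gamma/2$: with that choice the exponent $-\alpha\gamma+\alpha^2 b^2 = \gamma^2(d_{\max}^2/4-1/2)$ need not be negative when $d_{\max}>\sqrt{2}$; the paper takes $\alpha=\gamma/(2d_{\max}^2)$, which gives the clean bound $\exp\{-(\gamma^2/4d_{\max}^2)m_n\}$ you quote.
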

\begin{proof}
For a permutation $\pi_n$ and $(x^n,y^n) \in {\cal X}^n \times {\cal Y}^n$, we denote
\begin{eqnarray*}
J(\pi_n,x^n,y^n) = d_n(\pi_n(x^n),\hat{u}^n(\pi_n(y^n))),
\end{eqnarray*}
where $\hat{u}^n = \varphi_n^\prime(\pi_n(x^n))$. Let 
$\pi_n^{(1)},\ldots,\pi_n^{(m_n)}$ be randomly generated permutations, and let
$\bar{J}(x^n,y^n) = \mathbb{E}_{\pi_n}[J(\pi_n,x^n,y^n)]$. Then, by using Lemma \ref{lemma:bernstein-trick}
for $A_i = J(\pi_n^{(i)}, x^n,y^n)$, $b= d_{\max}$, and $\alpha = \frac{\gamma}{2 d_{\max}^2}$, we have
\begin{eqnarray*}
&& \hspace{-10mm} \Pr\left\{ \frac{1}{m_n} \sum_{i=1}^{m_n} J(\pi_n^{(i)}, x^n,y^n) \ge \bar{J}(x^n,y^n) + \gamma \right\} \\
&\le& \exp\{ - (\gamma^2 / 4 d_{\max}^2) m_n \}. 
\end{eqnarray*}
Furthermore, by using the union bound, we have
\begin{eqnarray}
&& \hspace{-10mm} \Pr\left\{ \exists (x^n,y^n)~\frac{1}{m_n} \sum_{i=1}^{m_n} J(\pi_n^{(i)}, x^n,y^n) \ge \bar{J}(x^n,y^n) + \gamma \right\} \nonumber \\
&\le& |{\cal X}^n||{\cal Y}^n| \exp\{ - (\gamma^2 / 4 d_{\max}^2) m_n \}.
\label{eq:average-case-doubly-exponential-term} 
\end{eqnarray}
Since $\exp\{ - (\gamma^2 / 4 d_{\max}^2) m_n \}$ converges to $0$ doubly exponentially, the
righthand side of (\ref{eq:average-case-doubly-exponential-term}) is strictly smaller than $1$ if
$n$ is sufficiently large, which implies that there exists one realization of $\pi_n^{(1)}, \ldots, \pi_n^{(m_n)}$ such that
\begin{eqnarray}
\label{eq:average-case-after-de-randomization}
\frac{1}{m_n} \sum_{i=1}^{m_n} J(\pi_n^{(i)},x^n,y^n) \le \bar{J}(x^n,y^n) + \gamma
\end{eqnarray}
for every $(x^n,y^n)$. Finally, by taking the average over both sides of 
(\ref{eq:average-case-after-de-randomization}) with respect to $(X^n,Y^n)$, we have
the assertion of the lemma.
\end{proof}

Finally, by combining Lemma \ref{lemma:average-case-randomized-permutation} and 
Lemma \ref{lemma:average-case-de-randomized}, and by taking
the constants to be sufficiently small and $n$ to be sufficiently large, 
we can show that the righthand side of (\ref{eq:rd-average-class}) is achievable.
\qed

\section{Conclusion}

In this paper, we introduced the novel rate-distortion
functions for the Wyner-Ziv problem, which are defined 
as the minimum rates required for the universal coding 
for the distortion constrained general channel classes.
Then, we derived the upper bounds and lower bounds 
on the rate-distortion functions. The complete solution
for the rate-distortion functions is remained open.
Parts of difficulties are related to the Heegard-Berger problem,
which is also a long-standing open problem.



\appendix

\subsection{Proof of Proposition \ref{proposition:relation-max-ave}}
\label{appendix:relation-max-ave}

From the definition, for any $\nu > 0$ there exists an average-achievable rate $R$ such that
$R \le R_a(D|E) + \nu$. For any $\epsilon > 0$ and $\bm{W} \in {\cal W}_m(E-\epsilon)$, we have
\begin{eqnarray}
\mathbb{E}\left[ e_n(P_{X^n},W^n) \right] &\le& E - \epsilon + e_{\max} \Pr\left\{ e_n(X^n,Y^n) > E \right\} \\
&\le& E
\label{eq:remation-max-ave-proof}
\end{eqnarray}
provided that $n \ge n_0(\epsilon/e_{\max})$. 

Let $\tilde{\bm{W}} = \{ \tilde{W}^n \}_{n=1}^\infty$ be a sequence of channels such that
$\tilde{W}^n = W^n$ for $n \ge n_0(\epsilon/e_{\max})$ and $\tilde{W}^n$ for $1 \le n < n_0(\epsilon/e_{\max})$
are chosen appropriately so that $\mathbb{E}\left[ e_n(P_{X^n},\tilde{W}^n) \right] \le E$. Then from \eqref{eq:remation-max-ave-proof},
we have $\tilde{\bm{W}} \in {\cal W}_a(E)$.
Then, since $R$
is average achievable, for any $\varepsilon > 0$ there exists a code such that \eqref{eq:rate-requirement} and \eqref{eq:distortion-requirement}
are satisfied for $\tilde{W}^n$ and $n \ge n_1(\varepsilon)$.
This also implies that the code also satisfies \eqref{eq:rate-requirement} and \eqref{eq:distortion-requirement} for
$W^n$ and $n \ge \max[n_0(\epsilon/e_{\max}), n_1(\varepsilon)]$. Since $\bm{W} \in {\cal W}_m(E-\epsilon)$ is arbitrary,
$R$ is also maximum-achievable. Since $\nu > 0$ is arbitrary, we have
\begin{eqnarray*}
R_a(D|E) \ge R_m(D|E-\epsilon).
\end{eqnarray*}
Thus, by taking the limit $\epsilon \to 0$, we have the assertion of the proposition. \qed

\subsection{Continuity of $R_{WZ}(D|W)$}
\label{appendix:continuity-of-RWZ}

\begin{lemma}
For $D > 0$, $R_{WZ}(D|W)$ is a continuous function with respect to $W$.
\end{lemma}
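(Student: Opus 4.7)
The plan is to show both upper and lower semi-continuity of $W \mapsto R_{WZ}(D|W)$ at every fixed $D>0$, using that in the convex-form representation (Willems) the alphabet ${\cal U}$ of functions $y \mapsto \hat{x}$ is finite, so that ${\cal P}({\cal U}|{\cal X})$ is a compact subset of Euclidean space and both $d(V,W)$ and $\phi(V,W)=H(U|Y)-H(U|X)$ are jointly continuous in $(V,W)$: the joint distribution $P_X(x)V(u|x)W(y|x)$ is polynomial in $(V,W)$, distortion is linear in this joint, and the entropies are continuous functions of it.

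For lower semi-continuity, I would take any sequence $W_n \to W$ and minimizers $V_n \in {\cal V}(W_n,D)$ with $\phi(V_n,W_n)=R_{WZ}(D|W_n)$. By compactness extract a subsequence $V_n \to V^*$. Joint continuity of $d$ gives $d(V^*,W) \le D$, so $V^* \in {\cal V}(W,D)$, and joint continuity of $\phi$ gives
\begin{equation*}
\liminf_{n\to\infty} R_{WZ}(D|W_n) = \lim_{n\to\infty} \phi(V_n,W_n) = \phi(V^*,W) \ge R_{WZ}(D|W).
\end{equation*}

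For upper semi-continuity, the key is strict feasibility, which is where $D>0$ enters. A standard time-sharing argument using convexity of $\phi(\cdot,W)$ and linearity of $d(\cdot,W)$ in $V$ shows that $D \mapsto R_{WZ}(D|W)$ is convex and non-increasing; hence it is continuous at every interior point of $[0,d_{\max}]$, so $\lim_{\delta\downarrow 0} R_{WZ}(D-\delta|W)=R_{WZ}(D|W)$ for $D>0$. Fix such a $\delta \in (0,D)$ and pick $V^{(\delta)}$ attaining $R_{WZ}(D-\delta|W)$, so $d(V^{(\delta)},W) \le D-\delta$. By continuity of $d$ in $W$ there is a neighborhood of $W$ on which $d(V^{(\delta)},W') \le D$, i.e., $V^{(\delta)} \in {\cal V}(W',D)$, whence $R_{WZ}(D|W') \le \phi(V^{(\delta)},W')$. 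Letting $W' \to W$ and then $\delta \downarrow 0$,
\begin{equation*}
\limsup_{W'\to W} R_{WZ}(D|W') \le \phi(V^{(\delta)},W) = R_{WZ}(D-\delta|W) \;\xrightarrow[\delta\downarrow 0]{}\; R_{WZ}(D|W).
\end{equation*}
Combining the two inequalities yields continuity.

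The main obstacle is exactly the role of the assumption $D>0$. Both the strict-feasibility perturbation (backing off from $D$ to $D-\delta$) and the one-sided continuity of $R_{WZ}(\cdot|W)$ at $D$ break down at $D=0$: the feasibility set ${\cal V}(W,0)$ depends on the support of $W(\cdot|x)$ and can jump discontinuously when $W$ is perturbed in a way that creates or destroys zero-probability transitions, so upper semi-continuity genuinely can fail there. This is precisely why the remark following Theorem~\ref{theorem:main} restricts the converse statement at $D=0$ to a limiting form.
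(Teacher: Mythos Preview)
Your argument is correct. The split into lower and upper semi-continuity via compactness of ${\cal P}({\cal U}|{\cal X})$ and joint continuity of $(V,W)\mapsto (d(V,W),\phi(V,W))$ is sound; the only cosmetic sloppiness is in the LSC line, where you should first pass to a subsequence realizing the $\liminf$ and then extract a further convergent subsequence of the $V_n$ before writing the equalities.

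The paper's own proof takes a different, more quantitative route. It fixes two channels $W_1,W_2$, sets $\epsilon=d_{\max}\Delta(W_1,W_2)$ with $\Delta$ a variational-type distance, and uses Fannes' inequality to control $|\phi(V,W_1)-\phi(V,W_2)|$ and $|\phi(V_1,W)-\phi(V_2,W)|$ uniformly. The feasibility issue is handled in the opposite direction from yours: an optimal $V_i^*\in{\cal V}(W_i,D+\epsilon)$ is pulled back into ${\cal V}(W_i,D)$ by the convex combination $V_i^\dagger=\frac{D}{D+\epsilon}V_i^*+\frac{\epsilon}{D+\epsilon}\tilde V_i$ with $\tilde V_i\in{\cal V}(W_i,0)$, and a chain of inequalities yields an explicit bound $|R_{WZ}(D|W_1)-R_{WZ}(D|W_2)|\le \nu(\Delta(W_1,W_2))+\nu(2\epsilon/(D+\epsilon))$. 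The role of $D>0$ is the same in spirit (creating slack to repair feasibility after perturbing $W$), but enters through the factor $D/(D+\epsilon)$ rather than through left-continuity of $D\mapsto R_{WZ}(D|W)$.

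What each approach buys: the paper's argument is constructive and gives a modulus of continuity, which could in principle be made uniform over $W$; your argument is cleaner and, usefully, isolates that lower semi-continuity holds for all $D\ge 0$ while only upper semi-continuity may fail at $D=0$, which dovetails nicely with the remark after Theorem~\ref{theorem:main}.
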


\begin{proof}
For two channels $W_1,W_2$, we consider the distance given by
\begin{eqnarray*}
\Delta(W_1,W_2) := \sum_{x,y} |P_X(x) W_1(y|x) - P_X(x)W_2(y|x) |.
\end{eqnarray*}
Since the Euclidian distance $\|W_1-W_2\|_2$ converging to $0$ implies $\Delta(W_1,W_2)$ converging to $0$,
it suffice to show the continuity of $R_{WZ}(D|W)$ with respect to the topology given by $\Delta(\cdot,\cdot)$.

By a slight abuse of notation, we also introduce 
\begin{eqnarray*}
\Delta(V_1,V_2) := \sum_{u,x}|P_X(x)V_1(u|x) - P_X(x)V_2(u|x)|
\end{eqnarray*}
for two test channel $V_1,V_2$.
From the definition of the variational distance, we can find that 
\begin{eqnarray*}
\Delta(W_1,W_2) = \|P_X V W_1 - P_X V W_2 \|
\end{eqnarray*}
for any fixed test channel $V$ and 
\begin{eqnarray*}
\Delta(V_1,V_2) = \| P_X V_1 W - P_X V_2 W \|
\end{eqnarray*}
for any fixed channel $W$, where $P_X V W$ is the joint distribution
given by $P_X(x)V(u|x)W(y|x)$. Furthermore, from Fannes' inequality \cite[Lemma 2.7]{csiszar-korner:11}, 
there exists a function $\nu(\delta)$ such that $\nu(\delta) \to 0$ as $\delta \to 0$ and
\begin{eqnarray} \label{eq:app-cont-w}
|\phi(V,W_1) - \phi(V,W_2)| \le \nu(\Delta(W_1,W_2))
\end{eqnarray} 
for fixed $V$ and
\begin{eqnarray} \label{eq:app-cont-v}
|\phi(V_1,W) - \phi(V_2,W)| \le \nu(\Delta(V_1,V_2))
\end{eqnarray}
for fixed $W$.

For two channels $W_1,W_2$, let 
$\epsilon :=  d_{\max} \Delta(W_1,W_2)$. Let $V_i^*  \in {\cal V}(W_i,D+\epsilon)$ be a test channel such that
\begin{eqnarray*}
\phi(V_i^*,W_i) = R_{WZ}(D+\epsilon|W_i).
\end{eqnarray*}
Let $\tilde{V}_i$ be a test channel such that $\tilde{V}_i \in {\cal V}(W_i,0)$. 
We set 
\begin{eqnarray*}
V_i^\dagger := \frac{D}{D + \epsilon} V_i^* + \frac{\epsilon}{D+\epsilon} \tilde{V}_i. 
\end{eqnarray*}
Then, we have
\begin{eqnarray}
\Delta(V_i^*,V_i^\dagger) &\le& \frac{\epsilon}{D+\epsilon} \| P_X V_i^* - P_X \tilde{V}_i  \| \\
&\le&  \frac{2 \epsilon }{D+\epsilon} 
\label{eq:dist-v-v-dagger}
\end{eqnarray}
and
\begin{eqnarray*}
d(V_i^\dagger,W_i) &=& \frac{D}{D+\epsilon} d(V_i^*,W_i) + \frac{\epsilon}{D+\epsilon} d(\tilde{V}_i,W_i) \\
&\le& D.
\end{eqnarray*}
Furthermore, let $V_i^{\ddagger} \in {\cal V}(W_i,D)$ be such that
\begin{eqnarray*}
\phi(V_i^{\ddagger},W_i) = R_{WZ}(D|W_i).
\end{eqnarray*}
Note that $V_1^\ddagger \in {\cal V}(W_2,D+\epsilon)$ and $V_2^\ddagger \in {\cal V}(W_1,D+\epsilon)$.

By using above notations, we have
\begin{eqnarray*}
R_{WZ}(D|W_1) 
&=& \phi(V_1^\ddagger,W_1) \\
&\stackrel{(\rom{a})}{\ge}& \phi(V_1^\ddagger,W_2) - \nu(\Delta(W_1,W_2)) \\
&\stackrel{(\rom{b})}{\ge}& \phi(V_2^*,W_2) - \nu(\Delta(W_1,W_2)) \\
&\stackrel{(\rom{c})}{\ge}& \phi(V_2^\dagger,W_2) - \nu(\Delta(W_1,W_2)) - \nu\left(2\epsilon/ (D+\epsilon)\right) \\
&\stackrel{(\rom{d})}{\ge}& \phi(V_2^\ddagger,W_2) - \nu(\Delta(W_1,W_2)) - \nu\left(2\epsilon/ (D+\epsilon)\right) \\
&=& R_{WZ}(D|W_2)   - \nu(\Delta(W_1,W_2)) - \nu\left(2\epsilon/ (D+\epsilon)\right),
\end{eqnarray*}
where $(\rom{a})$ follows from \eqref{eq:app-cont-w}, $(\rom{b})$ follows from $V_1^\ddagger \in {\cal V}(W_2,D+\epsilon)$ and the fact that 
$V_2^*$ minimizes $\phi(V,W_2)$ under $V \in {\cal V}(W_2,D+\epsilon)$,
$(\rom{c})$ follows from \eqref{eq:dist-v-v-dagger} and \eqref{eq:app-cont-v},
and $(\rom{d})$ follows from $V_2^\dagger \in {\cal V}(W_2,D)$ and the fact that $V_2^\ddagger$ minimizes
$\phi(V,W_2)$ under $V \in {\cal V}(W_2,D)$.
Similarly, we can prove the inequality in which $W_1$ and $W_2$ are interchanged. Thus we have
\begin{eqnarray*}
| R_{WZ}(D|W_1) - R_{WZ}(D|W_2) | \le \nu(\Delta(W_1,W_2)) + \nu\left(2\epsilon/ (D+\epsilon)\right).
\end{eqnarray*}
Since $\epsilon \to 0$ as $\Delta(W_1,W_2) \to 0$, we have proved the continuity of $R_{WZ}(D|W)$.
\end{proof}

\subsection{Miscellaneous Facts on Types and Typicality}
\label{Appendix:type}

In this section, we introduce some notations and known 
facts on the type method \cite{csiszar-korner:11}.

The type of a sequence $x^n$ and the joint type of 
$(x^n,y^n)$ are denoted by $P_{x^n}$ and $P_{x^n y^n}$ respectively.
The set of all types and joint types are denoted by 
${\cal P}_n({\cal X})$ and ${\cal P}_n({\cal X} \times {\cal Y})$.
For type $P$, the set of all sequence such that $P_{x^n} = P$ is
denoted by $T_P$. We use a similar notation for joint types. 
The set of all conditional types is denoted by ${\cal P}_n({\cal Y}|{\cal X})$,
and the set of $W$-shell for given $x^n$ is denoted by $T_W(x^n)$.
For type $P \in {\cal P}_n({\cal X})$, the set of all conditional types
such that $T_W(x^n)$ is not empty is denoted by ${\cal P}_n({\cal Y}|P)$.
It is well known that
\begin{eqnarray*}
|{\cal P}_n({\cal X})| &\le& (n+1)^{|{\cal X}|}, \\
|{\cal P}_n({\cal X} \times {\cal Y})| &\le& (n+1)^{|{\cal X}||{\cal Y}|}, \\
|{\cal P}_n({\cal Y}|{\cal X})| &\le& (n+1)^{|{\cal X}| |{\cal Y}|},
\end{eqnarray*}
and these inequalities are extensively used in the paper.

For $P_X \in {\cal P}({\cal X})$, a sequence $x^n$ is called $P_X$-typical
sequence with constant $\delta$ if
\begin{eqnarray*}
|P_{x^n}(a) - P_X(a)| \le \delta ~\forall a \in {\cal X}
\end{eqnarray*}
and no $a \in {\cal X}$ with $P_X(a) = 0$ occurs in $x^n$.
The set of all typical sequence is denoted by $T_{X,\delta}$.
The set of all types $P \in {\cal P}_n({\cal X})$ such that $T_P \subset T_{X,\delta}$
is denoted by ${\cal P}_{X,\delta,n}$.
For joint probability distribution, joint typical sequence 
and the set of all joint typical sequences are defined in a similar manner.
It is well known that the set of all non-typical sequences occur
with exponential small probability.
Especially for our purpose, we need a bound such that the convergence
is uniform with respect to $P_X$.
\begin{lemma}
\label{lemma:typicality}
For any $P_X \in {\cal P}({\cal X})$, we have
\begin{eqnarray*}
P_X^n(T_{X,\delta}^c) \le 2 |{\cal X}| 2^{- n \frac{ 2 \delta^2 }{5 \ln 2}}.
\end{eqnarray*}
\end{lemma}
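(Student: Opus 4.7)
The plan is to reduce the bound to a per-symbol concentration statement by a simple union bound, then invoke Hoeffding's inequality. First I would decompose $T_{X,\delta}^c$ into two pieces: (i) sequences $x^n$ for which some symbol $a \in \mathcal{X}$ violates $|P_{x^n}(a) - P_X(a)| \le \delta$, and (ii) sequences in which some symbol $a$ with $P_X(a) = 0$ occurs. Piece (ii) contributes zero to $P_X^n(T_{X,\delta}^c)$ since $P_X^n$ assigns probability zero to any such sequence, so only piece (i) needs to be bounded.

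For piece (i), the union bound over $\mathcal{X}$ gives
\begin{eqnarray*}
P_X^n(T_{X,\delta}^c) \le \sum_{a \in \mathcal{X}} \Pr\left\{ |P_{X^n}(a) - P_X(a)| > \delta \right\}.
\end{eqnarray*}
For each fixed $a$, the empirical frequency $P_{X^n}(a) = \frac{1}{n} \sum_{t=1}^n \bol{1}[X_t = a]$ is an average of $n$ i.i.d.\ Bernoulli random variables with parameter $P_X(a)$, each bounded in $[0,1]$. Hoeffding's inequality then yields $\Pr\{ |P_{X^n}(a) - P_X(a)| > \delta \} \le 2 \exp(-2 n \delta^2) = 2 \cdot 2^{- 2 n \delta^2 / \ln 2}$. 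Summing gives $P_X^n(T_{X,\delta}^c) \le 2 |\mathcal{X}| \cdot 2^{- 2 n \delta^2 / \ln 2}$, which is in fact stronger than the claimed bound since $\tfrac{2}{\ln 2} \ge \tfrac{2}{5 \ln 2}$.

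I expect no real obstacle here: this is a routine concentration estimate. The only point worth emphasizing is uniformity in $P_X$, which comes for free from Hoeffding's inequality since its deviation rate $2\delta^2$ is independent of the Bernoulli parameter $P_X(a)$. (A Chernoff- or KL-based bound would yield a sharper, $P_X(a)$-dependent rate, but that sharpness is not needed and would spoil uniformity.) The slack factor of $5$ in the denominator of the stated exponent presumably provides headroom for downstream arguments in the paper rather than reflecting a tighter analysis.
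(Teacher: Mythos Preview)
Your proof is correct and follows essentially the same structure as the paper's: a union bound over $a \in \mathcal{X}$ followed by a per-symbol concentration estimate, with the observation that the second piece of $T_{X,\delta}^c$ (symbols of zero probability appearing) contributes nothing under $P_X^n$. The only difference is that the paper invokes the Bernstein inequality (using the variance bound $\mathrm{Var}(\bol{1}[X_i=a]) \le \tfrac{1}{2}$ and the almost-sure bound $|\bol{1}[X_i=a]-P_X(a)|\le 1$) to obtain the exponent $\tfrac{2\delta^2}{5\ln 2}$, whereas you use Hoeffding's inequality and get the sharper exponent $\tfrac{2\delta^2}{\ln 2}$; your route is slightly simpler and your bound strictly stronger, and the uniformity in $P_X$ is preserved in both cases.
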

\begin{proof}
For each $a \in {\cal X}$ such that $P_X(a) > 0$,
by noting that the variance of $\bol{1}[X_i = a] - P_X(a)$ is bounded by $\frac{1}{2}$ and
$|\bol{1}[X_i = a] - P_X(a)| \le 1$ with probability one, and 
by using the Bernstein inequality, we have
\begin{eqnarray*}
\Pr\{ | P_{X^n}(a) - P_X(a) | \ge \delta \} \le 2 \cdot 2^{- n \frac{ 2 \delta^2 }{5 \ln 2}}
\end{eqnarray*}
for any $0 < \delta \le1$. Thus, by using the union bound with respect to $a \in {\cal X}$, we have the assertion.
\end{proof}

Since the distortion is additive, the distortion between $x^n$ and $y^n$ 
only depends on their joint type, and thus we have
\begin{eqnarray}
\label{eq:distortion-empirical-dependency}
e_n(x^n,y^n) = e(P, W)
\end{eqnarray}
if $x^n \in T_P$ and $y^n \in T_W(x^n)$.
From the definition of ${\cal P}_{X,\delta,n}$, we have
\begin{eqnarray}
\label{eq:distortion-relation}
| e(P, W) - e(P_X,W) |\le  \delta e_{\max}
\end{eqnarray}
for any $P \in {\cal P}_{X,\delta,n}$ and $W \in {\cal P}({\cal Y}|{\cal X})$.

For $P \in {\cal P}_n({\cal X})$, let
\begin{eqnarray*}
\bar{{\cal W}}_n(P, E) := {\cal W}_1(P, E) \cap {\cal P}_n({\cal Y}|P).
\end{eqnarray*}
Then, for $P \in {\cal P}_{X,\delta,n}$, (\ref{eq:distortion-relation}) implies 
\begin{eqnarray}
\label{eq:typicality-implication}
W \in \bar{{\cal W}}_n(P, E) \Longrightarrow W \in {\cal W}_1(E + \delta e_{\max}).
\end{eqnarray}
We also use the notation
\begin{eqnarray*}
\bar{{\cal W}}_n(T_{X,\delta},E) &:=& \bigcup_{P \in {\cal P}_{X,\delta,n}} \bar{{\cal W}}_n(P,E), \\
\bar{{\cal W}}_n(T_{X,\delta}) &:=& \bigcup_{P  \in {\cal P}_{X,\delta,n}} {\cal P}_n({\cal Y}|P).
\end{eqnarray*}

For $(V,W) \in {\cal P}({\cal U}|{\cal X}) \times {\cal P}({\cal Y}|{\cal X})$, let 
$P_{UXY}(u,x,y) = P_X(x) V(y|x) W(y|x)$. For $(u^n,x^n,y^n) \in T_{UXY,\delta}$, 
by the same reason as (\ref{eq:distortion-empirical-dependency}) and (\ref{eq:distortion-relation}), we have
\begin{eqnarray}
\label{eq:triple-distortion-relation}
|d_n(x^n,u^n(y^n)) - d(V,W)| \le \delta d_{\max}.
\end{eqnarray}

\subsection{Privacy Amplification Lemma}
\label{proof-of-lemma:privacy-amplification}

\begin{lemma}
\label{lemma:privacy-amplification}
Let $F_n$ be the random binning from ${\cal U}^n$ to ${\cal S}_n$ such that
$|{\cal S}_n| = \lfloor 2^{n R_f} \rfloor$, where $R_f = H(U|X) - \delta$.
Then, there exists $\mu_2 > 0$ such that
\begin{eqnarray*}
\mathbb{E}_{F_n} \left[ \| P_{S_n X^n} - P_{\bar{S}_n} \times P_{X^n} \| \right] \le 2^{- \mu_2 n},
\end{eqnarray*}
where $P_{\bar{S}_n}$ is the uniform distribution on ${\cal S}_n$.
\end{lemma}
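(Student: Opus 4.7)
The plan is to carry out a standard leftover-hash / smoothed collision-entropy argument, tailored to random binning with side information $X^n$.

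First, I would decompose the joint variational distance by conditioning on $x^n$:
\begin{eqnarray*}
\| P_{S_n X^n} - P_{\bar{S}_n}\times P_{X^n}\| = \sum_{x^n} P_{X^n}(x^n)\, \| P_{S_n|X^n=x^n} - P_{\bar S_n}\|,
\end{eqnarray*}
so it suffices to bound the inner variational distance, on average over $F_n$, for each $x^n$ and then integrate. For a fixed $x^n$, I would move to an $L^2$ bound by Cauchy--Schwarz and Jensen:
\begin{eqnarray*}
\mathbb{E}_{F_n}\| P_{S_n|X^n=x^n} - P_{\bar S_n}\| \le \tfrac12 \sqrt{|{\cal S}_n|\cdot \mathbb{E}_{F_n}\!\sum_{s}(P_{S_n|X^n=x^n}(s) - 1/|{\cal S}_n|)^2}.
\end{eqnarray*}
Using the $2$-universal property of random binning (for $u^n \ne \tilde u^n$, the collision $F_n(u^n)=F_n(\tilde u^n)$ has probability $1/|{\cal S}_n|$), a short calculation reduces the inner expectation to the conditional collision sum $\sum_{u^n} P_{U^n|X^n}(u^n|x^n)^2$.

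Second, I would separate the outer $x^n$-average into typical and atypical parts. By Lemma~\ref{lemma:typicality}, $P_{X^n}(T_{X,\delta'}^c)$ is exponentially small, so atypical $x^n$ contribute negligibly (the variational distance is bounded by one). For typical $x^n \in T_{X,\delta'}$, the naive collision-entropy bound $\sum_{u^n} P_{U^n|X^n}(u^n|x^n)^2 \approx 2^{-nH_2(U|X)}$ is too weak, since $H_2(U|X) < H(U|X)$ in general. I would therefore use smoothing: restrict to the conditionally typical set $T(x^n):=\{u^n:(u^n,x^n)\in T_{UX,\delta'}\}$, which has conditional probability at least $1 - 2^{-\mu'' n}$, and on which $P_{U^n|X^n}(u^n|x^n) \le 2^{-n(H(U|X)-\eta(\delta'))}$ with $\eta(\delta')\to 0$ as $\delta'\to 0$. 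Then
\begin{eqnarray*}
\sum_{u^n \in T(x^n)} P_{U^n|X^n}(u^n|x^n)^2 \le 2^{-n(H(U|X)-\eta(\delta'))},
\end{eqnarray*}
and the truncation error $2^{-\mu'' n}$ is absorbed by an extra triangle inequality (compare the real conditional distribution to the smoothed one, then the smoothed one to uniform via the $L^2$ bound).

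Combining, since $|{\cal S}_n| \le 2^{n(H(U|X)-\delta)}$, the product inside the square root is at most $2^{-n(\delta - \eta(\delta'))}$, giving an overall bound of the form $2^{-\mu'' n} + 2^{-n(\delta-\eta(\delta'))/2}$ uniformly over typical $x^n$. Choosing $\delta'$ small enough that $\eta(\delta') < \delta$, every piece decays exponentially in $n$, yielding the desired $\mu_2 > 0$.

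The main obstacle is exactly the gap between $H_2(U|X)$ and $H(U|X)$: a direct collision-probability computation on the i.i.d.\ conditional distribution is insufficient at the rate $R_f = H(U|X)-\delta$. The smoothing step — passing to the conditionally typical support, where the pointwise probability bound on $P_{U^n|X^n}(u^n|x^n)$ matches the Shannon entropy — is what makes the bin size $2^{nR_f}$ small enough. Coordinating the three small constants ($\delta'$ controlling typicality, $\eta(\delta')$ controlling the pointwise bound, and $\mu''$ controlling the truncation probability) so that all error terms shrink exponentially is the only delicate part.
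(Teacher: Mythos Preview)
Your argument is correct: the decomposition by $x^n$, the $2$-universal collision bound, and the typicality smoothing to bridge $H_2(U|X)$ and $H(U|X)$ all go through, and the coordination of constants you describe yields an exponential rate. However, the paper does not argue this way. Instead of the smoothed leftover-hash route, the paper simply invokes Hayashi's R\'enyi-entropic privacy-amplification bound (equation (51) of \cite{hayashi:10b}), namely
\[
\mathbb{E}_{F_n}\bigl[\|P_{S_nX^n}-P_{\bar S_n}\times P_{X^n}\|\bigr]\le 3\,|{\cal S}_n|^{\theta}\,2^{\,n\,\tau(\theta|P_{UX})},\qquad
\tau(\theta|P_{UX})=\log\sum_x P_X(x)\Bigl(\sum_u P_{U|X}(u|x)^{1/(1-\theta)}\Bigr)^{1-\theta},
\]
valid for $0\le\theta\le 1/2$. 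Since $\tau(0)=0$ and $\tau'(0)=-H(U|X)$, one can pick $\theta_0>0$ with $\tau(\theta_0)/\theta_0\le -H(U|X)+\delta/2$, which combined with $R_f=H(U|X)-\delta$ makes the exponent strictly negative. The two routes differ in how they close the $H_2$ versus $H$ gap: you smooth the distribution via typical sets and then apply the order-$2$ collision bound, whereas the paper varies the R\'enyi order and uses the derivative of $\tau$ at $\theta=0$ to approach the Shannon entropy. Your approach is more elementary and self-contained (it does not rely on an external lemma), at the cost of juggling the typicality constants $\delta',\eta(\delta'),\mu''$; the paper's approach is a one-line citation that also yields a cleaner explicit exponent.
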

\begin{proof}
The lemma is a straightforward consequence of \cite[(51)]{hayashi:10b},
which states that
\begin{eqnarray}
\mathbb{E}\left[ \|P_{S_n X^n} - P_{\bar{S}_n} \times P_{X^n} \| \right] \le
 3 |{\cal S}_n|^{\theta} 2^{n \tau(\theta|P_{UX})}
\label{eq:pa-hayashi}
\end{eqnarray}
for $0 \le \theta \le \frac{1}{2}$, where 
\begin{eqnarray*}
\tau(\theta|P_{UX}) = \log \sum_x P_X(x) \left( \sum_{u} P_{U|X}(u|x)^{\frac{1}{1-\theta}} \right)^{1-\theta}.
\end{eqnarray*}
Since $\left. \frac{d \tau(\theta|P_{UX})}{d \theta} \right|_{\theta = 0} = - H(U|X)$, there exists $\theta_0 > 0$
such that
\begin{eqnarray*}
\frac{\tau(\theta_0|P_{UX})}{\theta_0} \le - H(U|X) + \frac{\delta}{2}.
\end{eqnarray*}
Thus, we have
\begin{eqnarray}
\frac{\theta_0}{n} \log |{\cal S}_n| + \tau(\theta_0|P_{UX}) 
	&\le& R_f - H(U|X) + \frac{\delta}{2} \nonumber \\
	&=& - \frac{\delta}{2}.  \label{proof-lemma-pa-exponent-evaluation}
\end{eqnarray}
Combining (\ref{eq:pa-hayashi}) 
and (\ref{proof-lemma-pa-exponent-evaluation}),
we have the assertion of the lemma.
\end{proof}

\subsection{Bernstein's Trick}

\begin{lemma}[\cite{ahlswede:80}]
\label{lemma:bernstein-trick}
Let $A_1, \ldots,A_m$ be a sequence of
discrete independent random variables that take
values in $[-b,b]$. Then, for $0 < \alpha \le \min[1, \frac{b^n}{2} e^{-2b}]$, we have
\begin{eqnarray*}
\Pr\left\{ \frac{1}{m} \sum_{i=1}^m (A_i - \mathbb{E}[A_i]) \ge \gamma \right\}
\le \exp\left\{ (-\alpha \gamma + \alpha^2 b^2) m \right\}.
\end{eqnarray*}
\end{lemma}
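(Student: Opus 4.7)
My plan is to follow the standard Chernoff--Bernstein route. The first step is to apply the exponential Markov inequality: for any $\alpha > 0$,
\[
\Pr\left\{ \frac{1}{m}\sum_{i=1}^m (A_i - \mathbb{E}[A_i]) \ge \gamma \right\} \le e^{-\alpha m \gamma}\, \mathbb{E}\!\left[ e^{\alpha \sum_{i=1}^m (A_i - \mathbb{E}[A_i])} \right],
\]
and then exploit independence to factor the right-hand expectation as $\prod_{i=1}^m \mathbb{E}[e^{\alpha(A_i - \mathbb{E}[A_i])}]$. This reduces the problem to a per-summand moment generating function (MGF) bound.

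The second, and technically the only nontrivial, step is to bound the centered MGF $\mathbb{E}[e^{\alpha Z_i}]$, where $Z_i := A_i - \mathbb{E}[A_i]$ is mean zero with $|Z_i| \le 2b$ and $\mathbb{E} Z_i^2 \le \mathrm{Var}(A_i) \le b^2$ (the last bound following from $A_i \in [-b,b]$). For this I would use the elementary Taylor estimate $e^x \le 1 + x + x^2$, valid for $|x|$ sufficiently small; the smallness hypothesis on $\alpha$ (with its factor $e^{-2b}$) is exactly the quantitative guarantee that $\alpha Z_i$ lies in the domain of validity of this estimate uniformly in $i$. Since $\mathbb{E}[Z_i] = 0$, taking expectations then gives $\mathbb{E}[e^{\alpha Z_i}] \le 1 + \alpha^2 b^2 \le e^{\alpha^2 b^2}$.

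Combining the two ingredients yields
\[
\Pr\left\{ \frac{1}{m}\sum_{i=1}^m (A_i - \mathbb{E}[A_i]) \ge \gamma \right\} \le e^{-\alpha m \gamma} \cdot \bigl(e^{\alpha^2 b^2}\bigr)^m = \exp\{(-\alpha \gamma + \alpha^2 b^2)m\},
\]
which is the claimed bound. The only place requiring care is verifying that the prescribed range of $\alpha$ really does make the quadratic Taylor estimate rigorous for the scaled centered variables $\alpha Z_i$; once that bookkeeping is in place, the rest is a purely mechanical combination of the Chernoff inequality with a second-moment MGF bound, and is the classical computation due to Bernstein (as attributed here to Ahlswede).
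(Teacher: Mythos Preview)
The paper does not prove this lemma; it is simply quoted from Ahlswede's paper with a citation and no argument of its own. Your proposal is the standard Chernoff--Bernstein computation and is correct: the exponential Markov inequality together with independence reduces the problem to a single-variable MGF bound, and the quadratic Taylor estimate $e^{x}\le 1+x+x^{2}$ on the range enforced by the smallness hypothesis on $\alpha$, combined with $\mathrm{Var}(A_i)\le b^{2}$, yields exactly the claimed exponent.
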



\bibliographystyle{../09-04-17-bibtex/IEEEtran}
\bibliography{../09-04-17-bibtex/reference.bib}



\end{document}